\newcommand\abs[1]{\lvert #1\rvert}
\newtheorem{THM}{Theorem}[section]
\newtheorem{LEM}[THM]{Lemma}
\newtheorem{PROP}[THM]{Proposition}
\newtheorem{OBS}[THM]{Observation}
\newtheorem{CLAIM}{Claim}
\theoremstyle{remark}
\newenvironment{proofofclaim}{\noindent \textsc{Proof of the Claim:}}{\hfill$\Diamond$\medskip}
\newtheorem{RRULE}{Reduction Rule}
\newcommand{\YES}{\textsc{Yes}}
\newcommand{\NO}{\textsc{No}}
\theoremstyle{definition}
\newcommand\dist{\operatorname{dist}}
\newcommand\dhd{\textsc{Distance-hereditary Vertex Deletion}}
\newcommand\sdhd{\textsc{DH Vertex Deletion}}
\newcommand\dhm{{DH-modulator}}
\begin{document}
\title[A polynomial kernel for distance-hereditary vertex deletion]{A polynomial kernel for \\ distance-hereditary vertex deletion}
\author{Eun Jung Kim}
\author{O-joung Kwon }
\address[Kim]{CNRS-Universit\'{e} Paris-Dauphine, Place du Marechal de Lattre de Tassigny, 75775 Paris cedex 16, France}
\address[Kwon]{Institute of Software Technology and Theoretical Computer Science, Technische Universit\"at Berlin, Germany.}
\email{eunjungkim78@gmail.com}
\email{ojoungkwon@gmail.com}
\thanks{The second author was supported by ERC Starting Grant PARAMTIGHT (No. 280152) and also supported by the European Research Council (ERC) under the European Union's Horizon 2020 research and innovation programme (ERC consolidator grant DISTRUCT, agreement No. 648527).}
\date{\today}
\begin{abstract}

A graph is \emph{distance-hereditary}  if for any pair of vertices, their distance in every connected induced subgraph containing both vertices is the same as their distance in the original graph. 
The \textsc{Distance-Hereditary Vertex Deletion} problem asks, given a graph $G$ on $n$ vertices and an integer $k$, whether there is a set $S$ of at most $k$ vertices in $G$
such that $G-S$ is distance-hereditary. 
This problem is important due to its connection to the graph parameter rank-width that distance-hereditary graphs are exactly graphs of rank-width at most $1$.
 Eiben, Ganian, and Kwon (MFCS' 16) proved that \textsc{Distance-Hereditary Vertex Deletion} can be solved in time $2^{\mathcal{O}(k)}n^{\mathcal{O}(1)}$, and asked whether 
 it admits a polynomial kernelization.
We show that this problem admits a polynomial kernel, answering this question positively.
 For this, we use a similar idea for obtaining an approximate solution for \textsc{Chordal Vertex Deletion} due to Jansen and Pilipczuk (SODA' 17) to obtain an approximate solution with $\mathcal{O}(k^3\log n)$ vertices when the problem is a \YES-instance, and
we exploit the structure of split decompositions of distance-hereditary graphs to reduce the total size.
\end{abstract}
\keywords{rank-width, kernelization, distance-hereditary}
\maketitle

\section{Introduction}

The graph modification problems,  in which we want to transform a graph to satisfy a certain property with as few graph modifications as possible, have been extensively studied. For instance, the \textsc{Vertex Cover} and \textsc{Feedback Vertex Set} problems are graph modification problems where the target graphs are edgeless graphs and forests, respectively. By the classic result of Lewis and Yannakakis~\cite{Lewis1980}, it is known that for all non-trivial hereditary properties that can be tested in polynomial time, the corresponding vertex deletion problems are NP-hard. Hence, the research effort has been directed toward designing  algorithms such as approximation and parameterized algorithms. 

	When the target graph class $\mathcal{C}$ admits efficient algorithms for some NP-hard problems, 
	the graph modification problem related to such a class attracts more attention. 
	In this context, vertex deletion problems  to classes of graphs of constant tree-width or constant tree-depth have been studied. 
	\textsc{Tree-width $w$ Vertex Deletion}
	\footnote{\textsc{Tree-width (or Tree-depth) $w$ Vertex Deletion} asks, given a graph $G$ and an integer $k$, whether 
	$G$ contains a vertex set $S$ of at most $k$ vertices such that $G-S$ has tree-width (or tree-depth) at most $w$.} 
	is proved to admit an FPT algorithm running in time $2^{\mathcal{O}(k)}n^{\mathcal{O}(1)}$ and a kernel 
	with $\mathcal{O}(k^{g(w)})$ vertices for some function $g$~\cite{FominLMS2012, KimLPRRSS2016}.
	Also, it was shown that \textsc{Tree-depth $w$ Vertex Deletion} admits uniformly polynomial kernels with $\mathcal{O}(k^6)$ vertices, for every fixed $w$~\cite{GiannopoulouJLS2015}.
	All these problems are categorized as vertex deletion problems for $\mathcal{F}$-minor free graphs in a general setting, when the set $\mathcal{F}$ contains at least one planar graph. 
	However, $\mathcal{F}$-minor free graphs capture only sparse graphs in a sense that the number of edges of such a graph is bounded by a linear function on the number of its vertices. Thus these problems are not very useful when dealing with very dense graphs.
		
	\emph{Rank-width}~\cite{Oum05} and \emph{clique-width}~\cite{CourcelleO2000} are graph width parameters introduced for extending graph classes of bounded tree-width.
	Graphs of bounded rank-width represent graphs that can be recursively decomposed along vertex partitions $(X,Y)$ 
	where the number of neighborhood types between $X$ and $Y$ are small. 
	Thus, graphs of constant rank-width may contain dense graphs; for instance, all complete graphs have rank-width at most $1$. 
	Courcelle, Makowski, and Rotics~\cite{CourcelleMR2000}
	proved that every $MSO_1$-expressible problem can be solved in polynomial time on graphs of bounded rank-width.

	Motivated from \textsc{Tree-width $w$ Vertex Deletion}, Eiben, Ganian, and the second author~\cite{EibenGK2016} initiated study on 
	vertex deletion problems to graphs of constant rank-width.
	The class of graphs of rank-width at most $1$ is exactly same as the class of distance-hereditary graphs~\cite{Oum05}.
	A graph $G$ is called \emph{distance-hereditary} if for every connected induced subgraph $H$ of $G$ and every two vertices $u$ and $v$ in $H$, 
	the distance between $u$ and $v$ in $H$ is the same as the distance in $G$. 
	A vertex subset $X$ of a graph $G$ is a \emph{distance-hereditary modulator}, or a \emph{DH-modulator} in short, 
	if $G-X$ is a distance-hereditary graph. We formulate our central problem.

\vskip 0.2cm
\noindent
\fbox{\parbox{0.97\textwidth}{
\dhd\ (\sdhd)\\
\textbf{Input :} A graph $G$, an integer $k$ \\
\textbf{Parameter :} $k$ \\
\textbf{Question :} Does $G$ contain a \dhm\ of size at most $k$? }}
\vskip 0.2cm

	Eiben, Ganian, and the second author~\cite{EibenGK2016} proved that \sdhd\ can be solved in time $2^{\mathcal{O}(k)}n^{\mathcal{O}(1)}$.
	It was known before that vertex deletion problems for graphs of rank-width $w$ can be solved in FPT time~\cite{KanteKKP15} using the fact that 
	graphs of rank-width at most $w$ can be characterized by a finite list of forbidden vertex-minors~\cite{Oum05}, and
	Eiben et al. devised a first elementary algorithm for this problem when $w=1$.
	Furthermore, they discussed that the size $k$ of a DH-modulator
	can be used to obtain a $2^{\mathcal{O}(k)}n^{\mathcal{O}(1)}$-time algorithm for problems such as \textsc{Independent Set}, \textsc{Vertex Cover}, and \textsc{3-Coloring}.
	
	However, until now, it was not known whether the \sdhd\ problem admits a polynomial kernel or not. 
	A \emph{kernelization} of a parameterized graph problem $\Pi$ is a polynomial-time algorithm which, given an instance $(G,k)$ of $\Pi$, 
	outputs an equivalent instance $(G',k')$ of $\Pi$ with $\abs{V(G')}+k'\leq h(k)$ for some computable function $h$. 
	The resulting instance $(G',k')$ of a kernelization is called a \emph{kernel}, and 
	in particular, when $h$ is a polynomial function, $\Pi$ is said to admit a \emph{polynomial kernel}.

\subsection*{Our Contribution}

In this paper, we show that \sdhd\ admits a polynomial kernel.

\begin{THM}\label{thm:main1}
\sdhd\ admits a polynomial kernel.
\end{THM}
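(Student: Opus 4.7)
The plan is to follow the two-phase architecture introduced by Jansen and Pilipczuk for \textsc{Chordal Vertex Deletion}: first compute, in polynomial time, an approximate \dhm\ of size $\mathcal{O}(k^3 \log n)$ whenever the instance is a \YES-instance, and then exploit the structure of the distance-hereditary subgraph left behind to reduce the whole instance to size polynomial in $k$. The split decomposition of distance-hereditary graphs plays, for us, the role that the clique tree of a chordal graph plays in their work.

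For the approximation phase I would rely on the finite list of obstructions to being distance-hereditary, namely the holes $C_t$ for $t\ge 5$ together with the house, gem, and domino. Obstructions of constant size are handled by direct branching; long holes are attacked by the Jansen-Pilipczuk technique that couples an LP relaxation of the hole-hitting problem with a "close witness" branching step guided by the fractional optimum. The output of this phase is a set $M \subseteq V(G)$ with $\abs{M} = \mathcal{O}(k^3 \log n)$ such that $H := G - M$ is distance-hereditary, together with the canonical split decomposition $\mathcal{T}$ of $H$ in which every bag is a star or a complete graph.

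The reduction phase classifies vertices of $H$ inside each bag of $\mathcal{T}$ by their \emph{external type}, namely the pair consisting of their neighborhood in $M$ and their role (center or leaf, if the bag is a star) in the bag. Within a complete bag, vertices of equal type are interchangeable and can be collapsed to a bounded number of representatives; within a star bag, excess leaves of each type are similarly pruned, and interactions between adjacent bags are controlled by contracting split edges or merging twin subtrees of $\mathcal{T}$. A careful accounting then bounds the number of surviving bags and of surviving vertices in each bag by a polynomial in $\abs{M}$, giving a kernel of size $\operatorname{poly}(\abs{M})$. The final $\log n$ factor is shed by a standard bootstrapping step: re-run the whole procedure on the shrunken instance until its size becomes $\operatorname{poly}(k)$.

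The main obstacle I anticipate is proving the safeness of the reductions against the entire graph $G$ rather than only against $H$: merging or deleting a vertex of $H$ might destroy or mask an obstruction that uses vertices of $M$, so the argument must exhibit, for every minimum \dhm\ $S$ of $G$, an equally small one avoiding the removed vertices. I expect this to be handled by a local exchange argument that stays inside the affected bag of $\mathcal{T}$ and a bounded neighborhood of it, using the fact that an induced hole, house, gem, or domino in $G$ either lies inside one bag's realization or crosses split edges in a very constrained pattern. A secondary difficulty is guaranteeing that the split decomposition remains canonical after each reduction so that the rules can be iterated; this should be achievable by re-decomposing only the locally modified part of $\mathcal{T}$ after each rule fires.
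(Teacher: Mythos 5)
Your overall architecture (approximate \dhm\ of size $\mathcal{O}(k^3\log n)$, then shrink the distance-hereditary remainder via its split decomposition, then remove the $\log n$) matches the paper, but as written the two technical cores are missing, not merely deferred. For the approximation phase, ``an LP relaxation of the hole-hitting problem with a close-witness branching step'' is not an algorithm: branching is not available inside a polynomial-time approximation routine (small obstructions are handled by taking a maximal packing, not by branching), and the actual engine of the paper is structural. One needs the fact that every connected distance-hereditary graph has a \emph{balanced separator that is a biclique} (Lemma~\ref{lem:sepbiclique}), playing the role of clique separators in the chordal case; this yields, by recursive extraction together with the Feige--Hajiaghayi--Lee separator approximation, a decomposition $V(G)=D\uplus K_\ell\uplus\cdots\uplus K_1\uplus X$ with $\ell=\mathcal{O}(k\log n)$, and then each ``controlled instance'' (DH part plus one biclique) is solved by transferring the LP solution to a \textsc{Vertex Multicut} instance and rounding via Gupta's theorem, taking care of long holes that meet the biclique in several pieces or traverse several components of $D$ (Lemmas~\ref{lem:longpath} and~\ref{lem:usemulticut}). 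None of these ingredients, nor substitutes for them, appear in your plan, and without the biclique-separator lemma the Jansen--Pilipczuk scheme does not transfer.

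In the reduction phase there are two concrete gaps. First, classifying vertices of a bag by their exact neighborhood in $M$ gives up to $2^{\abs{M}}$ types, so ``collapse equal types to a bounded number of representatives'' does not bound bag sizes by a polynomial; the paper instead marks $k+1$ representatives for every subset of the modulator of size at most $4$ (Reduction Rule~\ref{rrule:exttwinreduction}), which suffices because every obstruction other than a long hole has at most $6$ vertices, and long holes are controlled separately. Second, the safeness issue you correctly identify is not resolved by a local exchange argument in the paper: the key device is to first upgrade $M$ to a \emph{good} \dhm\ $S$, i.e.\ one such that every DH obstruction contains at least two vertices of $S$, at the cost of an extra $\mathcal{O}(k^2)$ factor per modulator vertex via another multicut-based step (Proposition~\ref{prop:sunflower}, Theorem~\ref{thm:goodmodulator}). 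Goodness is what makes the later rules sound: it lets one bound the number of non-trivial components of $G-S$ (witnessing pairs plus Erd\H{o}s--P\'osa, Proposition~\ref{prop:ccnumber}) --- a step absent from your sketch --- and, via Gioan--Paul's single-vertex extension theorem, it guarantees that each $v\in S$ is ``accommodated'' by a single bag or marked edge of the split decomposition, which is what allows the number of bags per component to be bounded by $\mathcal{O}(\abs{S}k)$ after the chain-shortening rules. Your bootstrapping idea for eliminating $\log n$ could be made to work, but note the paper's simpler route: if $n>2^{ck}$ the known $2^{\mathcal{O}(k)}n^{\mathcal{O}(1)}$ algorithm already solves the instance in polynomial time, so one may assume $\log n=\mathcal{O}(k)$, and a single re-application of the good-modulator theorem then improves the final bound.
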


	We find in Section~\ref{sec:approximation} an approximate \dhm\ with $\mathcal{O}(k^3\log n)$ vertices if the given instance is a \YES-instance.
	An important observation here is that in a distance-hereditary graph, there is a balanced separator, which is a complete bipartite subgraph (possibly containing edges in each part).
	By recursively finding such separators, 
	we will decompose the given graph into $D\uplus K_1 \uplus \cdots \uplus K_{\ell} \uplus X$, 
	where $\ell=\mathcal{O}(k\log n)$, $D$ is distance-hereditary, each $K_i$ is a complete bipartite subgraph, $\abs{X}=\mathcal{O}(k^3 \sqrt{\log k}\log n)$.
	We argue that if a graph $H$ is the disjoint union of a distance-hereditary graph and a complete bipartite graph,
	then in polynomial time, one can construct a DH-modulator of size $\mathcal{O}(k^2)$ in $H$ if $(H,k)$ is a \YES-instance.
	Using this sub-algorithm $\ell$ times, we will construct an approximate \dhm\ with $\mathcal{O}(k^3\log n)$ vertices. 
	This part follows a vein similar to the approach of Jansen and Pilipczuk~\cite{JansenP2016} for \textsc{Chordal Vertex Deletion}.
	Given a DH-modulator $S$ by adding $\mathcal{O}(k^2)$ per each vertex in $S$, 
	we will obtain in Section~\ref{sec:good} a new DH-modulator $S'$ of size $\mathcal{O}(k^5\log n)$ such that
	for every $v\in S'$, $G[(V(G)\setminus S)\cup \{v\}]$ is also distance-hereditary.
	We will call such a DH-modulator a \emph{good DH-modulator}.
		
	The remaining part is contributed to reduce the number of vertices in $G-S'$.	
	Note that distance-hereditary graphs may contain a large set of pairwise twins.
	In Section~\ref{sec:twinred}, we present a reduction rule that results in bounding the size of each set of pairwise twins in $G-S'$.
	We give in Section~\ref{sec:countcc} a reduction rule that results in bounding the number of connected components of $G-S'$.
	The last step is to reduce the size of each connected component of $G-S'$ having at least $2$ vertices.
	For this, we use split decompositions of distance-hereditary graphs.
	Briefly, split decompositions present tree-like structure of distance-hereditary graphs, 
	with a decomposition tree with bags for each nodes, such that
	each bag consists of a maximal set of pairwise twins in $G-S'$.
	Since the result of Section~\ref{sec:twinred} provides a bound of each maximal set of pairwise twins in $G-S'$, 
	it is sufficient to bound the number of bags in the decomposition tree.
	We summarize our algorithm in Section~\ref{sec:total}, and conclude with some further discussions in Section~\ref{sec:conclusion}.

\section{Preliminaries}\label{sec:prelim}

\smallskip

In this paper, all graphs are simple and finite. Given a graph $G$, we write the vertex set and edge set of $G$ as $V(G)$ and $E(G)$ respectively. Unless otherwise stated, we reserve $n$ to denote $\abs{V(G)}$. For a vertex $v$ of $G$, we denote by $G-v$ the graph obtained from $G$ by removing $v$ and all edges incident with it. For a vertex subset $S$ of $G$, we denote by $G-S$ the graph obtained by removing all vertices in $S$. For a vertex subset $S$ of $G$, let $G[S]$ be the subgraph of $G$ induced by $S$.
For a vertex $v$ in $G$, we denote by $N_G(v)$ the set of all neighbors of $v$ in $G$.
For a vertex subset $S$ of $G$, we denote by $N_G(S)$ the set of all vertices in $V(G)\setminus S$ that have a neighbor in $S$, 
and let $N_G[S]:=N_G(S)\cup S$. If the graph $G$ is clear from the context, then we may remove $G$ from the notation.
We say that a graph is \emph{trivial} if it consists of a single vertex, and \emph{non-trivial} otherwise.

For two vertex sets $A$ and $B$ in $G$, we say $A$ is \emph{complete} to $B$ if for every $v\in A$ and $w\in B$, $v$ is adjacent to $w$, and 
$A$ is \emph{anti-complete} to $B$ if for every $v\in A$ and $w\in B$, $v$ is not adjacent to $w$. A \emph{star} is a tree with a distinguished vertex, called the \emph{center}, adjacent to all other vertices. A \emph{complete graph} is a graph with all possible edges. 
Two vertices $v$ and $w$ of a graph $G$ are \emph{twins} if they have the same neighbors in $G-\{v,w\}$.
We say a vertex subset $S$ of $G$ is a \emph{twin set} of $G$ if the vertices in $S$ are pairwise twins in $G$.

A graph $H$ is a \emph{biclique} if there is a bipartition of $V(H)$ into non-empty sets $A\uplus B$ such that any two vertices $a\in A$ and $b\in B$ is adjacent. Notice that there may be edges among the vertices of $A$ or $B$. For a vertex subset $K\subseteq V(G)$, we say that $K$ is a biclique of $G$ if $G[K]$ is a biclique.

For a connected graph $G$, a vertex subset $S\subseteq V(G)$ is called a \emph{balanced vertex separator} of $G$ if every component of $G-S$ has at most $\frac{2}{3}\abs{V(G)}$ vertices. We allow $V(G)$ to be a trivial balanced vertex separator of $G$.
For a vertex subset $S$ of $G$, a path is called an \emph{$S$-path} if its end vertices are in $S$ and all other internal vertices are in $V(G)\setminus S$.

\smallskip

\subsection{Distance-hereditary graphs.} 
	A graph $G$ is \emph{distance-hereditary} if for every connected induced subgraph $H$ of $G$, 
	the distance between $u$ and $v$ in $H$ is the same as the distance between $u$ and $v$ in $G$. 
	The class of distance-hereditary graphs was introduced by Howorka~\cite{Howorka1977}, and attracted much attention 
	after the work of Bandelt and Murder~\cite{BandeltM1982} in 1982. 
	This graph class was characterized in a various way; for instance, distance-hereditary graphs are exactly the graphs 
	that can be constructed from a vertex by a sequence of adding twins or leaves~\cite{BandeltM1982},
	or these graphs are $(5,2)$-crossing chordal graphs meaning that every induced cycle of length at least $5$ contains two crossing chords~\cite{Howorka1977}.
	
	A graph is called a \emph{DH obstruction} if it is isomorphic to a gem, a house, a domino or an induced cycle of length at least 5, 
	that are depicted in Figure~\ref{fig:obsdh}. A DH obstruction is \emph{small} if it has at most $6$ vertices.

\begin{THM}[Bandelt and Mulder~\cite{BandeltM1982}]\label{thm:characterization}
A graph is distance-hereditary if and only if it has no induced subgraph isomorphic to one of DH obstructions.
\end{THM}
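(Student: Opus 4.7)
The plan is to prove both implications of the characterization separately.

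For the \emph{only if} direction, since the distance-hereditary property is closed under taking induced subgraphs, it suffices to verify that none of the four DH obstructions is itself distance-hereditary. The uniform witness I would use is a vertex $w$ together with two non-adjacent neighbors $u,v$ of $w$ in the obstruction $H$ such that removing $w$ forces the shortest $u$-$v$ path in $H-w$ to have length at least $3$. Since $\mathrm{dist}_H(u,v)=2$ via $w$ but $\mathrm{dist}_{H-w}(u,v)\geq 3$, the connected induced subgraph $H-w$ violates the distance-hereditary property. In $C_k$ with $k\geq 5$, any vertex $w$ works, as $C_k-w$ is a path of length $k-2\geq 3$. In the gem, $w$ is the universal vertex and $u,v$ are the endpoints of the underlying $P_4$. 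In the house I would remove a square-vertex adjacent to the apex, and in the domino I would remove a degree-three vertex on one long side. Each check is a direct inspection of Figure~\ref{fig:obsdh}.

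For the \emph{if} direction, I would use the equivalent constructive characterization of distance-hereditary graphs as the class obtained from a single vertex by repeatedly adjoining a pendant or a (true or false) twin. Two things need to be shown. First, pendant and twin additions preserve the distance-hereditary property; this is a short calculation with shortest paths that I would verify by case analysis on whether a given induced subgraph retains one or both of the twin vertices. Second, every connected graph $G$ with $|V(G)|\geq 2$ that contains no induced gem, house, domino, or $C_k$ with $k\geq 5$ admits either a pendant or a pair of twins. With these in hand, the proof is by induction on $|V(G)|$: deleting a pendant or one member of a twin pair cannot create a new induced obstruction, so the induction hypothesis applies to the smaller graph, and re-adding the removed vertex via the appropriate operation recovers $G$ as distance-hereditary.

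To locate a pendant or twin, I would examine a BFS layering $L_0=\{r\},L_1,\ldots,L_t$ from an arbitrary root and concentrate on the last layer $L_t$ together with the vertices of $L_{t-1}$ adjacent to it. The key structural claim is that the sets $N(u)\cap L_{t-1}$ for $u\in L_t$ are pairwise comparable under inclusion: two crossing such sets would let me concatenate shortest paths back to $r$ and close them up into either a long induced cycle or, if chords appear at the last two layers, into a gem, a house, or a domino. From a chain of nested neighborhoods one then extracts two vertices of $L_t$ with identical external neighborhoods (twins) or a vertex of $L_t$ whose only neighbor lies in $L_{t-1}$ (a pendant). The main obstacle will be exactly this structural step: translating the absence of the four specific forbidden induced subgraphs into the nestedness statement about last-layer neighborhoods. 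The argument is delicate because the small obstructions (not just long holes) are needed to forbid certain short chord configurations between $L_t$ and $L_{t-1}$; this is the heart of the Bandelt--Mulder proof. Once this step is carried out, the induction closes and the theorem follows.
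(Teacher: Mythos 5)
The paper does not prove this statement at all: it is quoted verbatim from Bandelt and Mulder with a citation, so the only question is whether your from-scratch argument is complete. Your ``only if'' direction is fine: distance-hereditariness is closed under induced subgraphs, and in each of the four obstructions of Figure~\ref{fig:obsdh} you exhibit a vertex $w$ and two non-adjacent neighbours $u,v$ of $w$ with $\dist_{H-w}(u,v)\ge 3$, which is a routine check. The problem is the ``if'' direction, where the statement you yourself identify as the heart of the matter --- every connected graph on at least two vertices with no induced gem, house, domino or $C_k$, $k\ge 5$, contains a pendant vertex or a pair of twins --- is not proved but only planned, and the mechanism you propose for it fails as stated.

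Concretely, the claim that the traces $N(u)\cap L_{t-1}$ for $u\in L_t$ are pairwise comparable under inclusion is false even for trees (two pendant paths hung from the root give two last-layer vertices with disjoint, incomparable traces, yet a tree is distance-hereditary); the correct Bandelt--Mulder layer conditions only compare vertices of $L_t$ that lie in a common connected component of the subgraph induced by the last layer, and they assert equality, not mere nestedness, of traces in that case. Moreover, even if you had a chain of nested traces, strict nesting never produces two equal traces, and equal traces in $L_{t-1}$ still do not give twins of $G$: you must also control the neighbourhoods inside $L_t$ and the adjacency between the two candidate vertices (a configuration with $N(x)\cap L_{t-1}\subsetneq N(y)\cap L_{t-1}$ and $xy\in E(G)$ is exactly where a house or gem has to be produced to reach a contradiction). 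So the reduction ``nested traces $\Rightarrow$ twin or pendant'' is a genuine gap, not a deferred routine verification; until that structural step is carried out (or the pendant/twin elimination scheme is itself cited from Bandelt--Mulder, as the paper implicitly does), the induction does not close and the theorem is not proved.
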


The following lemma from~\cite{KanteKKP15} is useful to find a DH obstruction.
\begin{LEM}\label{lem:createdhobs}
Given a graph $G$, if $P$  is an induced path in $G$ of length at least 3 and $v\in V(G)\setminus V(P)$ is adjacent with the end vertices of $P$, then $G[V(P)\cup \{v\}]$ contains a DH obstruction including $v$. 
\end{LEM}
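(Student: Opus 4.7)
The plan is to do a case analysis on how $v$ attaches to $P$. Writing $P=p_1p_2\cdots p_\ell$ with $\ell\geq 4$ and listing the neighbors of $v$ along $P$ as $p_{i_1},\ldots,p_{i_m}$ with $1=i_1<i_2<\cdots<i_m=\ell$, I will work with the \emph{gaps} $g_j:=i_{j+1}-i_j$ for $1\leq j\leq m-1$. In each case I will exhibit an explicit induced subgraph of $G[V(P)\cup\{v\}]$ containing $v$ that matches one of the four DH obstructions in Theorem~\ref{thm:characterization}.

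First I would handle the large-gap cases. If $m=2$, or if some $g_j\geq 3$, then walking from $p_{i_j}$ to $p_{i_{j+1}}$ along $P$ and returning through $v$ produces an induced cycle of length $g_j+2\geq 5$ (with the convention $g_1=\ell-1$ when $m=2$). Inducedness is immediate since $P$ is induced and the only edges from $v$ into the chosen sub-path go to its two endpoints.

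So assume every $g_j\leq 2$ and $m\geq 3$. If all $g_j=1$, then $v$ is adjacent to every vertex of $P$ and $\{v,p_1,p_2,p_3,p_4\}$ induces a gem. Otherwise pick $j$ with $g_j=2$; by the symmetry $P\mapsto p_\ell\cdots p_1$ I may assume $j+1\leq m-1$, so another neighbor $p_{i_{j+2}}$ of $v$ exists. If $g_{j+1}=1$, then $\{v,p_{i_j},p_{i_j+1},p_{i_{j+1}},p_{i_{j+1}+1}\}$ induces a $5$-cycle with the single chord $vp_{i_{j+1}}$, i.e.\ a house. If $g_{j+1}=2$, then $\{v,p_{i_j},p_{i_j+1},p_{i_{j+1}},p_{i_{j+1}+1},p_{i_{j+1}+2}\}$ induces a $6$-cycle with the single chord $vp_{i_{j+1}}$, which splits it into two $4$-cycles glued along that chord, i.e.\ a domino.

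The main obstacle is the routine but finicky verification in the last two sub-cases that each exhibited set induces exactly the stated edges and no extras. Two facts take care of all of it: (i) since $P$ is induced, the only edges among the $p_i$'s are consecutive along $P$; and (ii) the ``skipped'' indices $p_{i_j+1}$ inside a size-$2$ gap are, by definition of the $i_j$'s, non-neighbors of $v$. Together these pin down the edge set of each candidate subgraph and confirm that it is a DH obstruction containing $v$.
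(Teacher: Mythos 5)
Your proof is correct. Note that the paper itself does not prove this lemma at all: it imports it from the cited work of Kant\'e, Kim, Kwon, and Paul~\cite{KanteKKP15}, so there is no in-paper argument to compare against. Your self-contained case analysis is exactly the kind of elementary argument the citation stands in for: parametrizing by the gaps $g_j$ between consecutive neighbors of $v$ on $P$, you get an induced cycle of length at least $5$ when some gap is at least $3$ (including the case $m=2$, where the unique gap is $\ell-1\geq 3$), a gem when all gaps equal $1$, and a house or domino when a gap of $2$ is followed (after possibly reversing $P$) by a gap of $1$ or $2$, respectively. The two verification facts you isolate, that $P$ is induced and that skipped vertices are non-neighbors of $v$, do pin down the edge sets, and in each case the exhibited obstruction contains $v$, as required. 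The only point worth making explicit in a final write-up is the symmetry step: if the chosen gap $g_j=2$ is the last gap, reversing $P$ moves it to the first position, and since $m\geq 3$ there are at least two gaps, so a successor gap then exists; as written this is asserted a little tersely but it is sound.
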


\begin{figure}
\centerline{\includegraphics[scale=0.7]{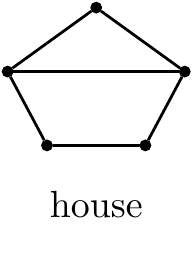} \quad\quad
\includegraphics[scale=0.7]{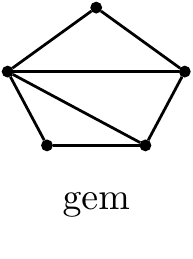} \quad\quad
\includegraphics[scale=0.7]{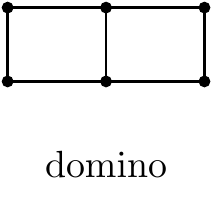} \quad\quad
\includegraphics[scale=0.7]{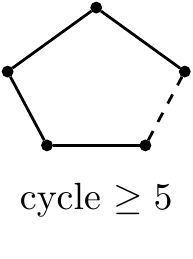} }
\caption{The induced subgraph obstructions for distance-hereditary graphs. }
\label{fig:obsdh}
\end{figure}

For a subset $S$ of a graph $G$, we say that $S$ is a \emph{\dhm} if $G-S$ is distance-hereditary. A \dhm\ $S$ is \emph{good} if every DH obstruction of $G$ contains at least two vertices of $S$, or equivalently, $G[(V(G)\setminus S)\cup \{v\}]$ is distance-hereditary for every $v\in S$.

\subsection{Split decompositions.}
We follow the notations in \cite{Bouchet1988a}.
A \emph{split} of a graph $G$ is a vertex partition $(A,B)$ of $G$ such that $\abs{A}\ge 2, \abs{B}\ge 2$, and $N_G(B)$ is complete to $N_G(A)$. 
A connected graph $G$ is called a \emph{prime graph} if $\abs{V(G)}\ge 5$ and it has no split.
A connected graph $D$ with a distinguished set of cut edges $M(D)$ of $D$ is called a \emph{marked graph} if
$M(D)$ forms a matching.
An edge in $M(D)$ is called a \emph{marked edge}, and every other edge is called an \emph{unmarked edge}.
A vertex incident with a marked edge is called a \emph{marked vertex},
and every other vertex is called an \emph{unmarked vertex}.
Each connected component of $D-M(D)$ is called a \emph{bag} of $D$.
See Figure~\ref{fig:decomposition} for an example.

\begin{figure}
\centerline{\includegraphics[scale=0.35]{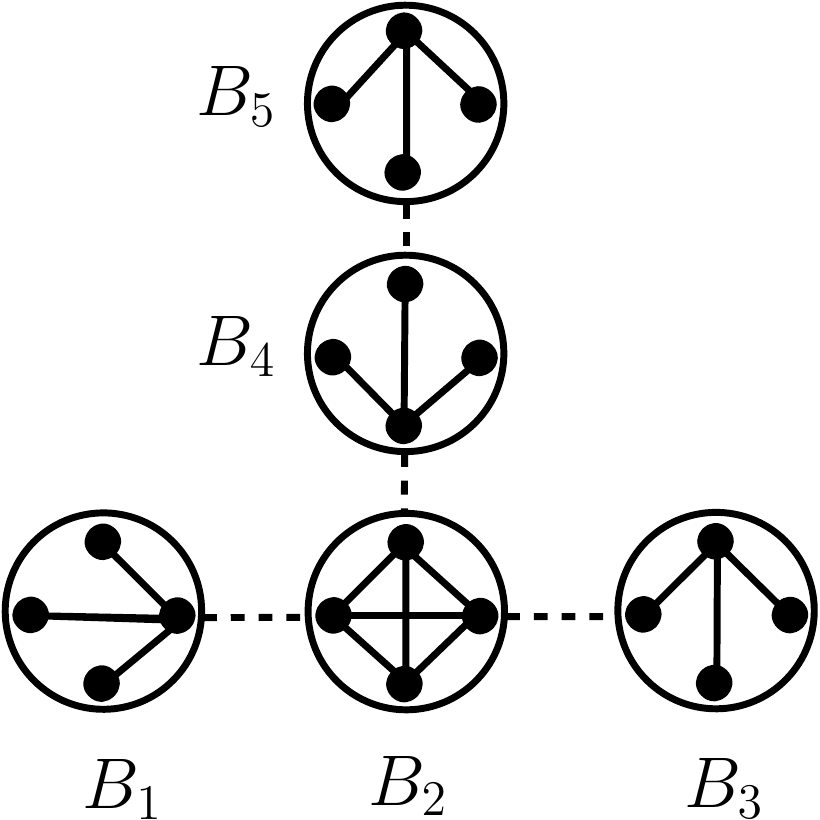} \quad\quad
\includegraphics[scale=0.35]{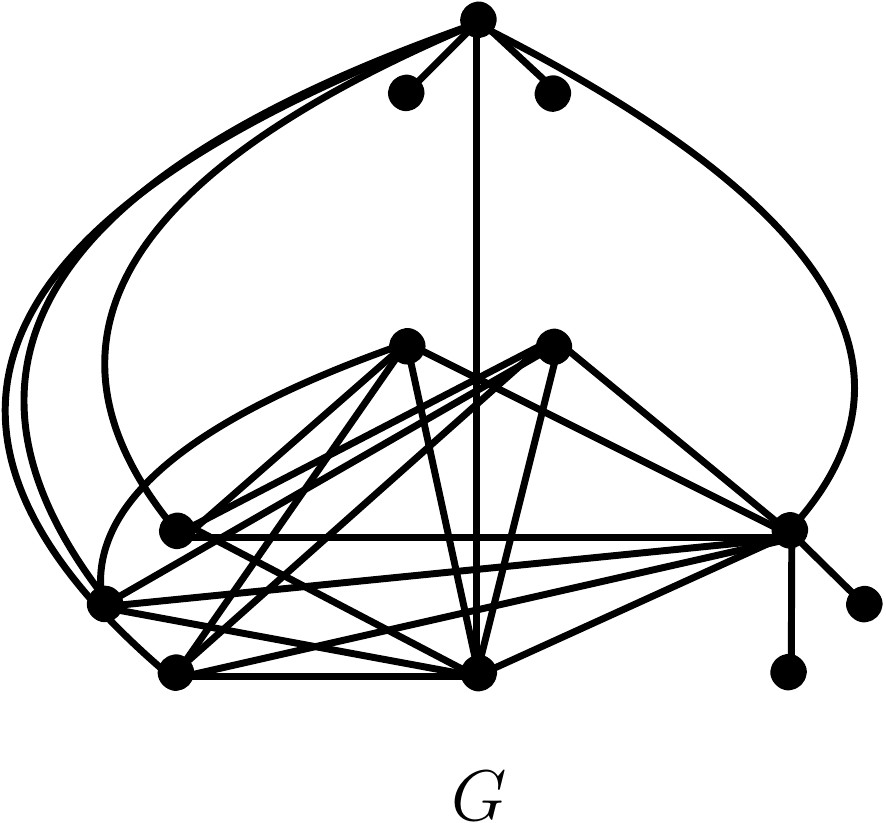} }
\caption{An example of a split decomposition of a distance-hereditary graph. Dashed edges denote marked edges and each $B_i$ denotes a bag. }
\label{fig:decomposition}
\end{figure}

When $G$ admits a split $(A, B)$, we construct a marked graph $D$ on the vertex set $V(G) \cup \{a',b'\}$ such that
\begin{itemize}
\item $a'b'$ is a new marked edge, 
\item $A$ is anti-complete to $B$, 
\item $\{a'\}$ is complete to $N_G(B)$, $\{b'\}$ is complete to $N_G(A)$, and 
\item for vertices $a,b$ with $\{a,b\}\subseteq A$ or $\{a,b\}\subseteq B$, $ab\in E(G)$ if and only if $ab\in E(D)$.
\end{itemize}
The marked graph $D$ is called a \emph{simple decomposition of} $G$.
A \emph{split decomposition} of a connected graph $G$ is a marked graph $D$ defined inductively to be either $G$ or a marked graph defined from a split decomposition $D'$
of $G$ by replacing a bag $B$ with its simple decomposition.  
It is known that for two vertices $u,v$ in $G$, $uv\in E(G)$ if and only if there is a path from $u$ to $v$ in $D$ where its first and last edges are unmarked, and
an unmarked edge and a marked edge alternatively appear in the path~\cite[Lemma 2.10]{AKK2014}. 
For convenience, we call a bag a \emph{star bag} or a \emph{complete bag} if it is a star or a complete graph, respectively.
 
Naturally, we can define a reverse operation of decomposing into a simple decomposition; for a marked edge $xy$ of a split decomposition $D$, 
\emph{recomposing $xy$} is the operation of removing two vertices $x$ and $y$ and making $N_D(x)\setminus \{y\}$ complete to $N_D(y)\setminus \{x\}$ with unmarked edges.
It is not hard to observe that if $D$ is a split decomposition of $G$, then $G$ can be obtained from $D$ by recomposing all marked edges.

Note that there are many ways of decomposing a complete graph or a star, because every its vertex partition $(A,B)$ with $\abs{A}\ge 2$ and $\abs{B}\ge 2$ is a split.
Cunningham and Edmonds \cite{CunninghamE80} developed a canonical way to decompose a graph into a split decomposition by not allowing to decompose a star bag or a complete bag.
A split decomposition $D$ of $G$ is called a \emph{canonical split decomposition} if each bag of $D$ is either a prime graph, a star, or a complete graph, and recomposing any marked edge of $D$ violates this property. Bouchet~\cite[(4.3)]{Bouchet1988a} observed that every canonical split decomposition has no marked edge linking two complete bags, and no marked edge linking a leaf of a star bag and the center of another star bag. Furthermore, for each pair of twins $a,b$ in $G$, it holds that $a,b$ must both be located in the same bag of the canonical split decomposition. 
\begin{THM}[Cunningham and Edmonds~\cite{CunninghamE80}] \label{thm:CED} 
Every connected graph has a unique canonical split decomposition, up to isomorphism.
\end{THM}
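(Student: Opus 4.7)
The plan is to establish existence and uniqueness separately, following the classical framework of split decomposition theory.

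For existence, I would start from the trivial decomposition $D_0 := G$, whose unique bag is $G$ itself, and iteratively refine it. At each step, if some bag $B$ of the current decomposition admits a split that is not \emph{forbidden}---where forbidden means the resulting marked edge would link two complete bags, or would link a leaf of a star bag to the center of another star bag---then I replace $B$ by its simple decomposition along that split. This process strictly increases the number of bags and each bag must have at least two vertices, so the procedure terminates. The termination condition, together with the exclusion of forbidden splits, will force each surviving bag to be a prime graph, a complete graph, or a star: any connected graph on at least four vertices that is not prime admits a split, and among the bags admitting splits only those whose every nontrivial vertex partition is a split can resist further canonical decomposition, and these are exactly complete graphs and stars.

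For uniqueness, the main tool is the theory of strong splits. I say two splits $(A,B)$ and $(A',B')$ of $G$ \emph{cross} if all four intersections $A\cap A'$, $A\cap B'$, $B\cap A'$, $B\cap B'$ are nonempty, and a split is \emph{strong} if it crosses no other split. I would first show that strong splits form a laminar family on $V(G)$, which yields a natural tree representation. Each ``atom'' of this tree (a coarsest region delimited by strong splits) carries a family of pairwise crossing splits, and a classical observation identifies such atoms: the only graphs whose every vertex partition $(A,B)$ with $|A|,|B|\geq 2$ is a split are complete graphs and stars. Hence each atom is either prime or degenerate, and each degenerate atom is canonically represented by a single complete or star bag. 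Since both the tree of strong splits and the isomorphism type of each atom are intrinsic to $G$, any canonical split decomposition must recover the same tree and the same bag at each node, giving uniqueness up to isomorphism by induction on $|V(G)|$.

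The main obstacle will be the technical analysis of crossing splits inside a degenerate atom: specifically, one needs to show that within a single atom the crossing relation behaves uniformly (so that the atom is well-defined), and that no canonical split decomposition can further refine a complete or star atom without producing a forbidden marked edge (two adjacent complete bags, or a star leaf adjacent to another star's center). Once these combinatorial lemmas are secured, the inductive step is routine: pick an outermost strong split in $G$, observe that it must appear as a marked edge in any canonical decomposition, and apply the induction hypothesis to the two sides.
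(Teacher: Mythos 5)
The paper does not prove Theorem~\ref{thm:CED} at all; it is imported verbatim from Cunningham and Edmonds, so there is no internal argument to compare yours against, and your attempt has to be judged on its own. Judged that way, the existence half has a concrete gap. Your greedy refinement forbids only those splits whose \emph{newly created} marked edge would join two complete bags or a star leaf to a star center, but it never re-examines the \emph{pre-existing} marked edges, whose endpoints can change bag type when a bag is split. Example: let $G$ be the $K_4$ on $\{a,b,p,q\}$ plus a pendant vertex $r$ adjacent to $q$. Splitting along $(\{a,b\},\{p,q,r\})$ creates a triangle bag $\{a,b,c\}$ and a bag $B=\{p,q,r,m\}$ (edges $pq,qr,mp,mq$) joined by the marked edge $cm$; nothing forbidden. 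Splitting $B$ along $(\{m,p\},\{q,r\})$ creates a triangle bag $\{m,p,x\}$ and a star bag $\{q,r,y\}$ with new marked edge $xy$, which again passes your test (a complete bag next to a star leaf is allowed). Now every bag is complete or a star, so your procedure halts; yet the old edge $cm$ joins two complete bags, and recomposing it gives a complete $K_4$ bag without violating the prime/star/complete property. So the output satisfies the first condition of canonicity (all bags prime, complete, or star) but not the second (``recomposing any marked edge violates the property''), i.e.\ it is not the canonical decomposition, which here has just two bags ($K_4$ and a star). Your existence argument therefore needs either a stronger admissibility test or, as in the standard treatments, a terminating recomposition phase that merges adjacent complete bags and center-to-leaf adjacent star bags, with a proof that what remains is canonical.

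The uniqueness half follows the right classical skeleton — uncrossed (strong) splits, their laminar family, and the characterization of the degenerate graphs (those in which every bipartition with both sides of size at least two is a split) as exactly the complete graphs and stars — and this is essentially the route of Cunningham and Edmonds themselves. But as written it is a plan rather than a proof: the two facts you explicitly defer, namely that the crossing structure inside a degenerate atom is coherent and that every marked edge of an arbitrary canonical decomposition corresponds to a strong split of $G$ (your ``an outermost strong split must appear as a marked edge in any canonical decomposition''), are precisely the substance of the theorem, so the inductive step is not yet ``routine.'' In short: correct general strategy, but the existence procedure as stated provably can terminate at a non-canonical decomposition, and the uniqueness argument leaves its key lemmas unproved.
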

\begin{THM}[Dahlhaus~\cite{Dahlhaus00}]\label{thm:dahlhaus}
The canonical split decomposition of a graph  $G$ can be computed in time $\mathcal{O}(\abs{V(G)}+\abs{E(G)})$.
\end{THM}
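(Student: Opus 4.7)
The plan is to reduce the task to a partition refinement procedure in the spirit of linear-time algorithms for modular decomposition. The starting point is the characterization that a bipartition $(A,B)$ of a connected graph $G$ with $\abs{A},\abs{B}\ge 2$ is a split if and only if every vertex of $A$ with at least one neighbor in $B$ has the same neighborhood in $B$ (and symmetrically). Equivalently, the neighborhoods $N_G(v)\cap B$, as $v$ ranges over $A$, take at most two values: $\emptyset$ and $N_G(A)\cap B$. This gives a local, neighborhood-based criterion that is amenable to refinement.

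I would first describe an easy polynomial-time algorithm to guide intuition. Start from the trivial decomposition consisting of $G$ itself, and repeatedly pick a bag $B$, test whether it is a complete graph, a star, or prime, and if not, locate a non-trivial split $(A_1,A_2)$ in $B$ using the characterization above and apply a simple decomposition. By Theorem~\ref{thm:CED} this process terminates with the canonical split decomposition, and since every simple decomposition strictly increases the number of bags while keeping the total number of vertices (counting marked ones) linear, the procedure halts after $O(n)$ steps. A careful but straightforward implementation of split-finding per bag yields an overall polynomial running time.

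The main obstacle, and the heart of Dahlhaus's contribution, is getting from polynomial to linear time. The technique I would use is ordered partition refinement: maintain a partition $\mathcal{P}$ of $V(G)$ together with an order on its classes, and repeatedly pivot on vertices, splitting each class $C\in \mathcal{P}$ into $C\cap N_G(v)$ and $C\setminus N_G(v)$ while preserving the order. Classes of the final partition correspond to the unmarked vertices within each bag, and the nesting structure induced by the refinement order encodes the tree of bags. The amortized argument is the standard one: each pivot vertex $v$ contributes work proportional to its degree, and each vertex is pivoted on $O(1)$ times at each level, giving $O(n+m)$ in total.

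The most intricate step will be identifying the bag types (prime, star, complete) on the fly from the refinement, and ensuring that the decomposition produced is indeed canonical, i.e., that no marked edge links two complete bags or connects a leaf of a star bag to the center of another star bag. I would handle this by a post-processing pass over the refinement tree that recognizes each bag from the adjacency pattern within its class and greedily recomposes marked edges violating canonicity; since each recomposition removes one marked edge and the total number of marked edges is $O(n)$, this pass also runs in linear time. Combining the refinement with this bag-identification pass gives the $\mathcal{O}(\abs{V(G)}+\abs{E(G)})$ bound, and uniqueness of the output is guaranteed by Theorem~\ref{thm:CED}.
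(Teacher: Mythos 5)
This statement is not proved in the paper at all: it is quoted verbatim from Dahlhaus~\cite{Dahlhaus00}, so there is no internal argument to compare against, and what you are attempting is a re-proof of a substantial external result. As such, your proposal has to be judged as a standalone proof, and it has a genuine gap at its core. The reduction of split decomposition to ordered partition refinement is asserted, not established. Partition refinement is the engine behind linear-time \emph{modular} decomposition, but a split is not a module: in a split $(A,B)$ the vertices of $A$ with no neighbor in $B$ are unconstrained, and the defining condition is that $N_G(A)\cap B$ is complete to $N_G(B)\cap A$, a condition on the two boundaries simultaneously rather than on how one fixed set looks from outside. Your central claim---that the classes of the final refinement ``correspond to the unmarked vertices within each bag'' and that the nesting order of the refinement encodes the tree of bags---is precisely the content of the theorem, and nothing in the proposal justifies it. The actual linear-time algorithms (Dahlhaus's recursive scheme, and later approaches via LexBFS and graph-labelled trees) require considerably more machinery exactly because pivot-based refinement does not by itself detect splits.

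The complexity accounting is also unsupported. You claim ``each vertex is pivoted on $O(1)$ times at each level, giving $O(n+m)$ in total,'' but the number of levels is not bounded by a constant: already for an induced path the canonical split decomposition has $\Theta(n)$ star bags, so a naive level-by-level analysis yields only $O(n(n+m))$ unless one supplies a careful charging scheme, which is again where the real work of~\cite{Dahlhaus00} lies. Finally, the post-processing pass that ``recognizes each bag from the adjacency pattern within its class'' presupposes that the classes already are the bags, which is the unproven claim above; the greedy recomposition step to restore canonicity is fine but does not repair this. In short, the easy polynomial-time algorithm in your second paragraph is correct (and its correctness via Theorem~\ref{thm:CED} is fine), but the passage from polynomial to linear time---the entire substance of the cited theorem---is missing.
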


We now give the second characterization of distance-hereditary graphs that is crucial for our results.

\begin{THM}[Bouchet~\cite{Bouchet1988a}]\label{thm:bouchet}
A graph is distance-hereditary if and only if every bag in its canonical split decomposition is either a star bag or a complete bag.
\end{THM}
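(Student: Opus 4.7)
My plan is to handle the two directions separately. For the forward direction, I would first show that every bag $B$ in the canonical split decomposition $D$ of $G$ appears as an induced subgraph of $G$: for each marked vertex $m\in B$, pick any unmarked vertex $v_m$ on the opposite side of the marked edge at $m$, and verify through the alternating-path criterion for adjacency in $D$ (the characterization cited from \cite[Lemma 2.10]{AKK2014}) that the map sending $m\mapsto v_m$ and fixing the unmarked vertices of $B$ produces an induced copy of $B$ in $G$. Since distance-heredity is inherited by induced subgraphs (immediate from the definition), every bag of $D$ is itself distance-hereditary. It then suffices to show that every distance-hereditary graph $H$ on at least five vertices admits a split, ruling out prime bags. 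By the Bandelt--Mulder construction theorem \cite{BandeltM1982}, $H$ contains either a pair of twins $u,v$ or a pendant $u$ with unique neighbor $w$; in the twin case $(\{u,v\}, V(H)\setminus\{u,v\})$ is a split because $u,v$ share all external neighbors, and in the pendant case $(\{u,w\}, V(H)\setminus\{u,w\})$ is a split because only $w$ has external neighbors. Combined with the first subclaim, this forces every bag of the canonical split decomposition to be either a star or a complete graph.

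For the reverse direction I would induct on $|V(G)|$, with the trivial base case of a single-bag decomposition giving $G$ itself as a star or complete graph. In the inductive step, pick a leaf bag $B$ of the decomposition tree together with its unique marked vertex $m$. If $B$ is complete, the unmarked vertices of $B$ have identical external neighborhoods (all routed through $m$) and are pairwise adjacent in $G$, hence they form a set of true twins; if $B$ is a star with $m$ as the center, its unmarked leaves are false twins in $G$; if $B$ is a star with $m$ as a leaf, the remaining unmarked leaves are pendants attached to the unmarked center of $B$ in $G$. In each case one removes an unmarked vertex $v$ to obtain $G' = G - v$ whose canonical split decomposition still consists only of star and complete bags (possibly after recomposing at $m$ if $B$ collapses to size two). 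By the inductive hypothesis $G'$ is distance-hereditary, and the Bandelt--Mulder characterization \cite{BandeltM1982} shows that adjoining $v$ back as a twin or pendant preserves distance-heredity.

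The main obstacle is the faithful embedding of each bag into $G$ as an induced subgraph in the first step above. Adjacencies within the unmarked portion of $B$ transfer directly, but verifying that non-adjacencies are preserved requires ruling out unintended alternating paths between two representatives $v_{m_1}$ and $v_{m_2}$ whose corresponding marked vertices are non-adjacent in $B$. This relies on the tree structure of $D$ together with the matching condition on $M(D)$, and constitutes the technical core of the argument; the rest of the proof is essentially a translation between the recursive twin/pendant construction and leaf bags of the split decomposition.
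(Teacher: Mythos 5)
The paper does not prove Theorem~\ref{thm:bouchet}; it is cited from Bouchet~\cite{Bouchet1988a}, so there is no in-paper argument to compare against. Evaluating your proposal on its own terms: the plan (forward direction by realizing each bag as an induced subgraph of $G$ and then invoking the Bandelt--Mulder pruning characterization to exclude prime bags; reverse direction by induction on a leaf bag) is sound and is essentially the standard route.

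One step is stated incorrectly as written. For a marked vertex $m\in B$ with marked edge $e=mm'$, you say to ``pick \emph{any} unmarked vertex $v_m$ on the opposite side of the marked edge at $m$.'' That fails for arbitrary choices: if $v_m$ lies in the component of $D-e$ containing $m'$ but $v_m\notin R_{m'}(e)=N_G(U_m(e))$ (equivalently, $v_m$ is not joined to $m'$ by an odd-length alternating path in $D-e$), then $v_m$ has no neighbor at all in $U_m(e)$, and the map $m\mapsto v_m$ erases $m$'s edges into the unmarked part of $B$. You must take $v_m\in R_{m'}(e)$, which is non-empty precisely because $(U_m(e),U_{m'}(e))$ is a split. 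With that constraint the argument you sketch does go through: since the bags of $D$ are joined in a tree by the matching $M(D)$ and all edges inside a bag are unmarked, the alternating path between two representatives $v_{m_1},v_{m_2}$ must enter and leave $B$ via $m_1m_1'$ and $m_2m_2'$ and hence traverse $B$ along the single unmarked edge $m_1m_2$, which exists iff $m_1m_2\in E(B)$; the same reasoning handles a representative paired with an unmarked vertex of $B$. In the reverse direction, the point that deserves one more sentence is why the \emph{canonical} split decomposition of $G-v$ again has only star or complete bags: after deleting $v$ from $D$ and recomposing at $m$ when $B$ shrinks to two vertices, you have \emph{some} split decomposition of $G-v$ with only degenerate bags, and any further recompositions needed to reach the Cunningham--Edmonds canonical one (complete--complete, or star-leaf to star-center) again produce a complete bag or a star bag, so the induction hypothesis applies.
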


\subsection{Extending a canonical split decomposition}\label{subsec:extendingsd}

Let $G$ and $H$ be connected graphs such that $G$ is distance-hereditary and $H$ is obtained from $G$ by adding a vertex $v$.
Gioan and Paul~\cite{GioanP2012} characterized when $H$ is again distance-hereditary or not, and described the way to extend the canonical split decomposition of $G$ to the canonical split decomposition of $H$
when $H$ is distance-hereditary. 
We need this characterization in Section~\ref{sec:boundingnontrivialcc}.

\begin{figure}
\centerline{
\includegraphics[scale=0.35]{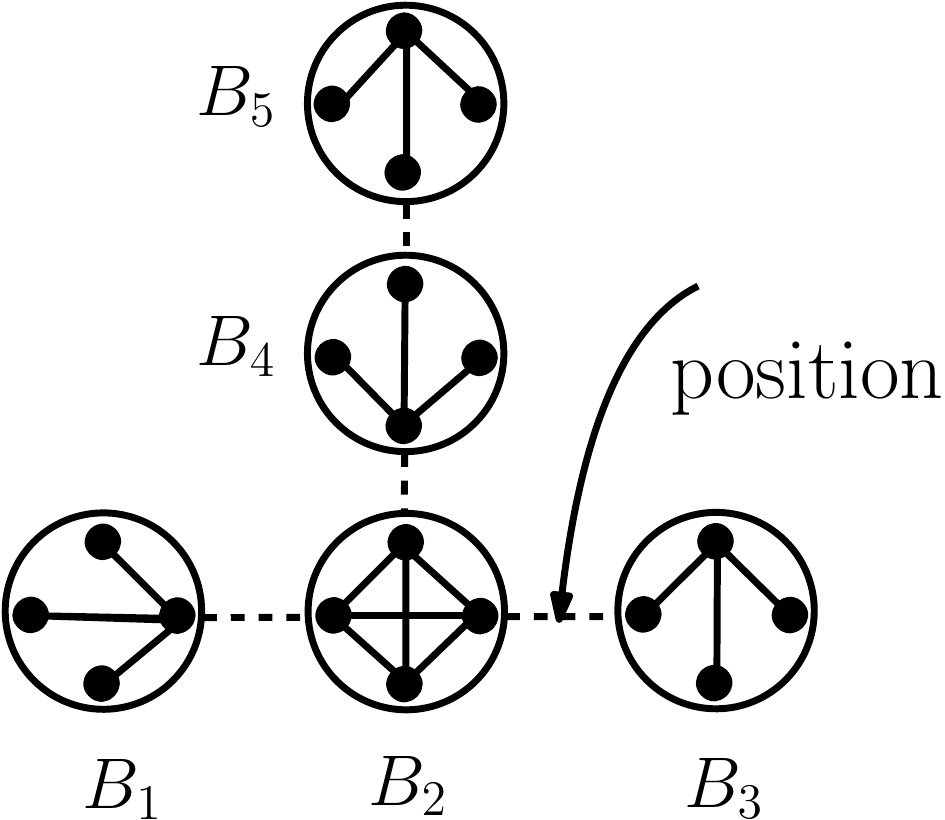} \quad\quad
\includegraphics[scale=0.35]{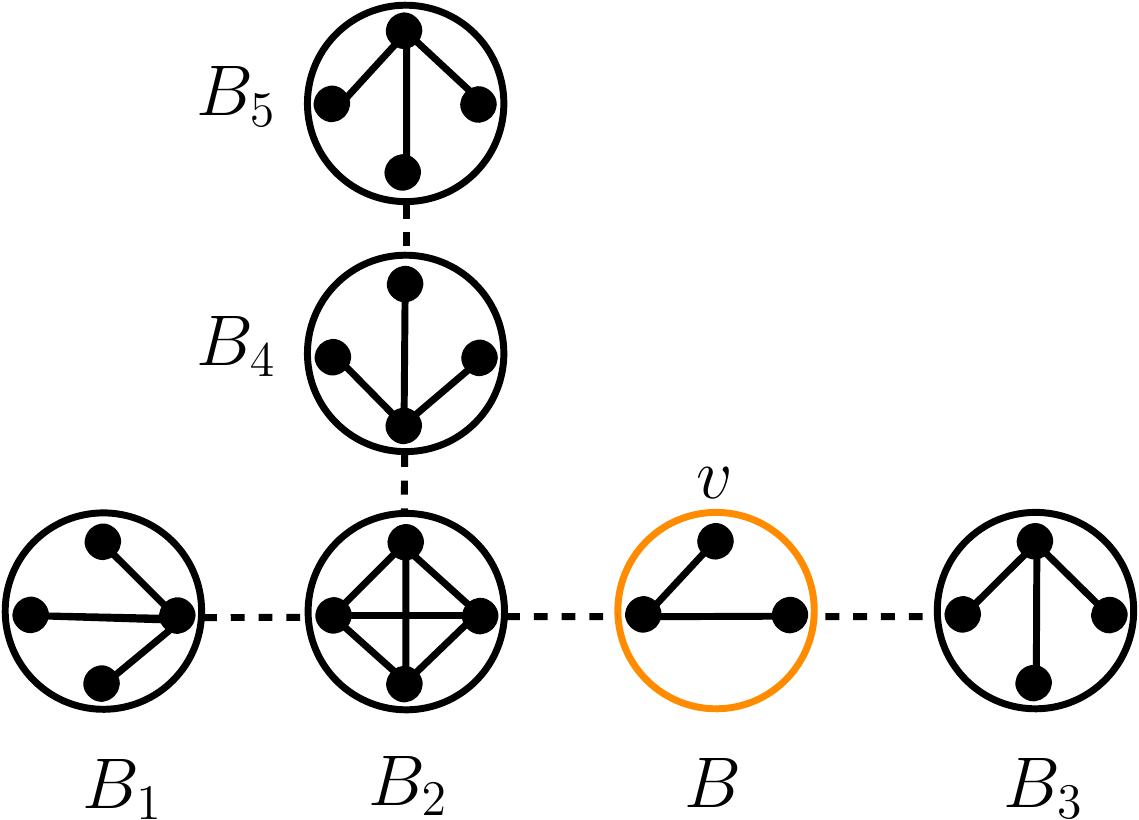} \quad\quad
\includegraphics[scale=0.35]{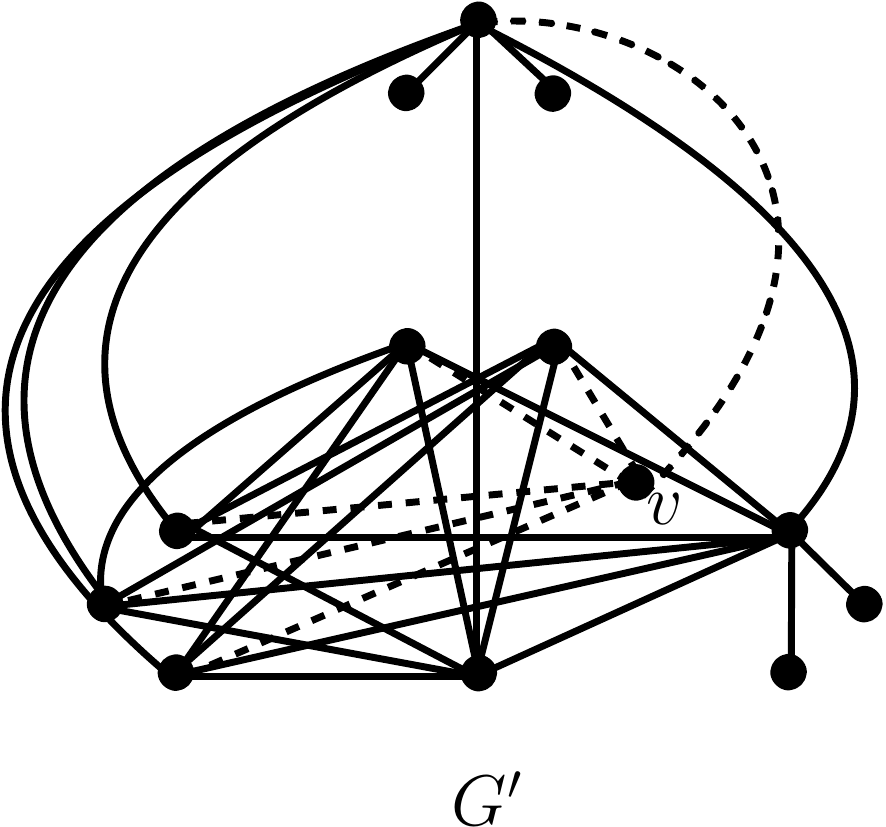} }
\caption{The way of extending the given split decomposition by adding a new vertex $v$. The pointed position is the marked edge described in (2) of Theorem~\ref{thm:GioanP2012}, and it can be computed in polynomial time. The second split decomposition is the modified split decomposition. }
\label{fig:dynamic}
\end{figure}

Let $D$ be the canonical split decomposition of a connected distance-hereditary graph $G$. For $S\subseteq V(G)$ and 
a vertex $v$ in $D$ with a bag $B$ containing $v$, $v$ is \emph{accessible} with respect to $S$ if either $v\in S$, or the component of $D-V(B)$ having a neighbor of $v$ contains a vertex in $S$. For $S\subseteq V(G)$ and a bag $B$ of $D$,
\begin{enumerate}
\item $B$ is \emph{fully accessible} with respect to $S$ if all vertices in $B$ are accessible with respect to $S$,
\item $B$ is \emph{singly accessible} with respect to $S$ if $B$ is a star bag of $D$, and exactly two vertices of $B$
including the center of $B$ are accessible with respect to $S$, and
\item $B$ is \emph{partially accessible} with respect to $S$ if otherwise.
\end{enumerate}
A star bag $B$ of $D$ is \emph{oriented towards} a bag $B'$ (or a marked edge $e$) in $D$ if the center of $B$ is marked, and  the path from the center of $B$ to a vertex of $B'$ (or to end vertices of $e$) contain the marked edge incident with the center of $B$. For $S\subseteq V(G)$, we define $D(S)$ as the minimal connected subdecomposition of $D$ such that
\begin{enumerate}
\item $D(S)$ is induced by the union of a set of bags of $D$, and
\item $D(S)$ contains all vertices of $S$.
\end{enumerate}

\begin{THM}[Gioan and Paul, Theorem 3.4 of \cite{GioanP2012}]\label{thm:GioanP2012}
Let $G$ and $H$ be connected graphs such that $G$ is distance-hereditary and $H$ is obtained from $G$ by adding a vertex $v$.
Let $N_H(v)=S$. Then $H$ is distance-hereditary if and only if
at most one bag of $D(S)$ is partially accessible in $D$, and
\begin{enumerate}[(1)]
\item if there is a partially accessible bag $B$ in $D(S)$, then each star bag $B'\neq B$ in $D(S)$ is oriented towards $B$ if and only if it is fully accessible,
\item otherwise, there exists a marked edge $e$ of $D(S)$ such that each star bag $B$ in $D(S)$ is oriented towards $e$ if and only if it is fully accessible.
\end{enumerate}
Furthermore, we can find the partially accessible bag $B$ in (1) or the marked edge $e$ in (2) in time $\mathcal{O}(\abs{V(G)})$. \footnote{Theorem 3.4 of \cite{GioanP2012} presented one more condition that every complete bag is either fully or partially accessible, which is redundant by definition.}
\end{THM}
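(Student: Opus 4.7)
My plan is to use Bouchet's characterization (Theorem~\ref{thm:bouchet}): $H$ is distance-hereditary if and only if $H$ admits a canonical split decomposition whose bags are all stars or complete graphs. So the strategy is, on the one hand, to explicitly construct such a decomposition of $H$ from $D$ when the conditions hold, and on the other hand, to derive a DH obstruction (or a forbidden prime bag) when some condition fails. The accessibility labels are precisely the bookkeeping required to track, bag by bag, which internal vertices of $D$ must be reached by the newly added vertex $v$ via the alternating-path description of adjacency in a split decomposition.

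For the sufficiency direction, I would first reduce to understanding how $v$ with $N_H(v)=S$ should be inserted into $D$. Since adjacency in a split decomposition is given by alternating unmarked/marked paths, the vertex $v$ must be attached by a single unmarked edge inside (or adjacent to) a single bag of $D(S)$, and from that attachment point the alternating-path reachability must hit exactly $S$. In case~(2), where every bag of $D(S)$ is either fully or singly accessible with the prescribed orientation towards a marked edge $e=xy$, I would subdivide $e$ by inserting a new bag consisting of $v$ together with two fresh marked endpoints connected along the split; a case analysis then verifies that each star bag still behaves like a star (its orientation lines up with the new path to $v$), each fully accessible complete bag remains complete, and $v$ sees exactly $S$ through alternating paths. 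In case~(1), the partially accessible bag $B$ is where $v$ is absorbed: one extends $B$ by adding $v$ as a new unmarked vertex incident to the accessible vertices inside $B$, and one checks that $B$ remains a star or complete bag (this is where ``partially accessible'' is the only way it can happen without breaking the star/complete structure), while all other bags of $D(S)$ are governed by the orientation condition and remain intact. After this insertion, one recomposes whatever marked edges now violate canonicity (e.g.\ two newly adjacent complete bags, or a star center meeting a star leaf) to obtain the canonical split decomposition of $H$, which by construction has only star and complete bags.

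For necessity, assume $H$ is distance-hereditary with canonical split decomposition $D_H$. By uniqueness (Theorem~\ref{thm:CED}), removing $v$ from $D_H$ and recomposing the resulting marked edges must yield $D$, so the position of $v$ in $D_H$ corresponds either to an enlarged bag of $D$ or to a subdivided marked edge of $D$. Tracing alternating paths from $v$ in $D_H$ back into $D$ shows that a bag $B$ of $D$ is fully accessible exactly when the corresponding bag of $D_H$ lies on the $v$-side via a marked edge fed by the center of a star orientation, and singly accessible exactly when it sits on a leaf of a star bag of $D_H$ oriented away from $v$. The resulting dichotomy forces the orientation conditions~(1)--(2); any bag that is neither fully nor singly accessible and is not the unique absorbing bag $B$ would force a non-star, non-complete bag in $D_H$, contradicting Bouchet's theorem. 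The ``at most one partially accessible bag'' condition follows because a second such bag would similarly produce a prime bag containing $v$ in $D_H$, which one can translate (via Lemma~\ref{lem:createdhobs} applied to an induced path of length at least~$3$ in the forbidden prime bag together with~$v$) into an explicit DH obstruction in $H$.

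The main obstacle is the necessity direction, because one must read off precisely which ``absorbing location'' in $D$ is used by $D_H$ without circular reasoning, and then match the local star/complete structure of $D_H$ around $v$ with the global accessibility labels on $D$. For the linear-time computation, I would root $D$ at an arbitrary bag, propagate two bits per bag (``contains a vertex of $S$ in its subtree'', ``reaches $S$ through the complement'') with a single DFS, then a second DFS to classify every bag of $D(S)$ as fully, singly, or partially accessible and to test the orientation condition; the partially accessible bag is unique when it exists, and in the other case the marked edge $e$ is found by walking from any fully accessible star bag in the direction opposite to its orientation until the orientations of incident star bags stop agreeing, giving total time $\mathcal{O}(|V(G)|)$.
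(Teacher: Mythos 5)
First, a point of comparison: the paper gives no proof of this statement at all---it is imported verbatim from Gioan and Paul~\cite{GioanP2012} and used as a black box---so your attempt can only be measured against the original argument, which is a substantial part of that paper. Your overall plan (reduce to Theorem~\ref{thm:bouchet} and argue by explicitly inserting $v$ into the decomposition, with the accessibility labels as bookkeeping for the alternating-path description of adjacency) is the natural strategy and roughly the route of the original proof, but as written it contains genuine gaps and one concrete error.

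In the sufficiency direction, case (1): it is not true that one can add $v$ as a new unmarked vertex inside the partially accessible bag $B$ and have $B$ remain a star or a complete bag. If, say, $B$ is complete and $v$ must be attached to a proper subset $A$ of its vertices with $\abs{A}\ge 2$ and $\abs{V(B)\setminus A}\ge 2$, the enlarged bag is neither a star nor complete; the correct update \emph{splits} $B$ (its accessible and non-accessible parts end up in different bags joined through a new bag containing $v$), as Figure~\ref{fig:dynamic} of this paper and Section~4.1 of \cite{GioanP2012} indicate, whereas your fix-up step goes in the opposite direction (recomposing only merges bags). Moreover, in both cases you never verify the crucial identity $N_H(v)=S$, i.e.\ that alternating-path reachability from the attachment point hits exactly $S$; this is precisely where the orientation condition distinguishing fully accessible from singly accessible star bags is needed, and it is only asserted. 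In the necessity direction, the heart of the theorem---that in the canonical decomposition $D_H$ the vertex $v$ sits either inside a bag coming from a single bag of $D$ or on a subdivided marked edge, that deleting $v$ and recanonicalizing returns $D$, and that this forces the stated accessibility/orientation dichotomy---is treated as evident (``tracing alternating paths \ldots shows''), but it is exactly the content to be proved; likewise, ruling out a second partially accessible bag requires exhibiting a DH obstruction in $H$ from two such bags, and the appeal to a ``prime bag containing $v$'' presupposes the correspondence you are trying to establish. The closing algorithmic remark is plausible, but the structural equivalence itself is not established by the proposal.
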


In Section 4.1 of \cite{GioanP2012}, Gioan and Paul explained how to obtain a canonical split decomposition of $H$ from $D$ using 
such a partially accessible bag or a marked edge. 
Since it is sufficient to find such a bag or a marked edge for our purpose, we will not describe how to update it, 
but we give an example in Figure~\ref{fig:dynamic}.
One important property is that for every two vertices in a bag of $D$ that is not the partially accessible bag, 
they are still twins in $H$.

\section{Approximation algorithm}\label{sec:approximation}

We present a polynomial-time algorithm which constructs an approximate \dhm\ of $G$ whenever $(G,k)$ is a \YES-instance for \sdhd. The main result is as follows.

\begin{THM}\label{thm:approx}
There is a polynomial-time algorithm which, given a graph $G$ and a positive integer $k$, either correctly reports that $(G,k)$ is a \NO-instance to \sdhd, or returns a \dhm\ $S\subseteq V(G)$ of size $O(k^3\cdot \log n)$. 
\end{THM}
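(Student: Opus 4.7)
The plan follows the Jansen--Pilipczuk template for \textsc{Chordal Vertex Deletion}, leveraging the structural fact that every distance-hereditary graph admits a \emph{balanced biclique separator}. This follows from Theorem~\ref{thm:bouchet}: the canonical split decomposition has bags that are stars or complete graphs, and a centroid-like choice of marked edge induces a split $(A,B)$ of $G$ with $\max(\abs{A},\abs{B})\le \tfrac{2}{3}\abs{V(G)}$; the interface vertices (those of $A$ with a neighbour in $B$, together with those of $B$ with a neighbour in $A$) then form a biclique by the very definition of a split, and their removal separates the graph. Crucially, this separator remains locatable when $G$ is only \emph{close} to distance-hereditary by running the argument on the canonical split decomposition of the DH-portion.

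The first technical step is a subroutine: given $H = H_1\uplus K$ where $H_1$ is distance-hereditary and $K$ is a biclique with bipartition $(A,B)$, and given $k$, in polynomial time either produce a \dhm\ of $H$ of size $\mathcal{O}(k^2)$, or certify that no \dhm\ of size at most $k$ exists. Because $H_1$ is distance-hereditary and has no edge to $K$, every DH obstruction of $H$ lies inside $K$ (Theorem~\ref{thm:characterization}), so the task reduces to hitting DH obstructions of $K$. If $\min(\abs{A},\abs{B})\le 2k$, we include the small side in the modulator and finish. Otherwise both sides are large, and using Lemma~\ref{lem:createdhobs} together with the biclique structure we exhibit a family of $\Omega(k^2)$ ``sufficiently disjoint'' small obstructions (long holes and houses from the bipartite pattern, gems from intra-side edges), from which a hitting-set argument on both sides yields the claimed $\mathcal{O}(k^2)$ bound.

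The main algorithm then accumulates a \dhm\ $S$ iteratively. Starting from $S=\emptyset$, each iteration picks a non-DH component $C$ of $G-S$, finds a balanced biclique separator $K$ of (the DH-approximation of) $C$ using the first paragraph's structural fact (or certifies \NO\ if none exists), calls the subroutine on one side of the separation unioned with $K$, adds the $\mathcal{O}(k^2)$ output to $S$, and recurses on both sides. The $\tfrac{2}{3}$-balance bounds the recursion depth by $\mathcal{O}(\log n)$. A charging argument shows that there are at most $k$ pairwise-disjoint DH obstructions in total (else $(G,k)$ is a \NO-instance), each triggering at most $\mathcal{O}(\log n)$ iterations before being localised into a small piece, giving $\ell = \mathcal{O}(k\log n)$ calls to the subroutine and $\abs{S} = \mathcal{O}(k^3\log n)$.

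The principal obstacle is the $\mathcal{O}(k^2)$ bound in the subroutine: the reduction to hitting obstructions of $K$ alone is immediate, but $K$ can simultaneously host long induced cycles from the bipartite backbone and short obstructions (gems, houses) arising from intra-side edges, so cleanly hitting all of them with $\mathcal{O}(k^2)$ vertices needs careful case analysis and an aggregated counting argument avoiding double-counting across the two parts. A secondary technicality is verifying that balanced biclique separators remain locatable in $G-S$ throughout the recursion even when $G-S$ is not strictly distance-hereditary; this is handled by applying Theorem~\ref{thm:bouchet} to a maximal DH induced subgraph and treating the non-DH attachments as additional separator vertices as needed.
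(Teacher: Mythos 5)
There is a genuine gap, and it sits exactly at the technical heart of the argument: the $\mathcal{O}(k^2)$ subroutine. You assume the subroutine is handed a graph $H = H_1\uplus K$ with \emph{no edges} between the distance-hereditary part $H_1$ and the biclique $K$, so that every DH obstruction lies inside $K$ and the problem reduces to hitting obstructions of $K$ alone. But the graphs your recursion actually produces are not of this form: when you take ``one side of the separation unioned with $K$,'' that side has edges to $K$ (that is what makes $K$ a separator), and after earlier deletions the side need not be distance-hereditary either. The paper's subroutine (Proposition~\ref{prop:apprcontrolled}) is stated for a \emph{controlled graph}, i.e.\ a partition $D\uplus K$ with $G[D]$ distance-hereditary, $K$ a biclique, and arbitrary edges between $D$ and $K$; the whole difficulty is that induced cycles of length at least $7$ weave through both parts. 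The paper handles this by carrying an LP-feasible fractional solution with all values below $\tfrac{1}{20}$ (Subsection~\ref{subsec:preprocessing}), zeroing it on $K$ and doubling it on $D$ (Lemma~\ref{lem:newDHDsol}), and rounding via a \textsc{Vertex Multicut} instance on $G[D]$ using Gupta's $\mathcal{O}(|x'|^2)$ rounding (Theorem~\ref{thm:fractionalmulticut}), with a delicate correctness argument (Claims~\ref{clm:onecc} and~\ref{clm:hole7}) showing that a surviving long cycle through $K$ forces a violated LP constraint. Your replacement --- ``exhibit $\Omega(k^2)$ sufficiently disjoint small obstructions inside $K$ and apply a hitting-set argument'' --- neither addresses the cross obstructions nor is itself a proof; you flag it yourself as the principal obstacle, and it is.

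A second, smaller but still real gap is locating the separator when the current piece is not distance-hereditary. The paper does not apply Theorem~\ref{thm:bouchet} to a ``maximal DH induced subgraph'': after removing a maximal packing of small obstructions it enumerates all maximal bicliques (there are only $\mathcal{O}(n^3)$ of them in a graph with no small DH obstruction, Lemma~\ref{lem:numbiclique}), and for each candidate $K$ runs the Feige--Hajiaghayi--Lee $\mathcal{O}(\sqrt{\log \mathsf{opt}})$-approximation to find an extra set $X$ with $|X|=\mathcal{O}(k\sqrt{\log k})$; the existence of a biclique balanced separator inside the unknown DH part (Lemma~\ref{lem:sepbiclique}) is used only in the correctness analysis. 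Your plan of treating ``non-DH attachments as additional separator vertices'' gives no bound on the number of such attachments, and a maximal DH induced subgraph is not something you can compute. (Also, your one-marked-edge centroid argument for Lemma~\ref{lem:sepbiclique} is too quick --- in the paper's proof the separator sometimes has to be assembled from a star bag with unmarked center and several marked edges around it --- but that is a repairable slip compared with the two issues above.)
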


Recall that a DH obstruction is said to be small if it contains at most 6 vertices. If $G$ contains $k+1$ vertex-disjoint copies of small DH obstructions, then $(G,k)$ is clearly a \NO-instance. Therefore, we may assume that a maximal packing of small DH obstructions in $G$ has cardinality at most $k$. Notice that after removing all the vertices in a maximal packing, the resulting graph has no small DH obstruction. The following is the key statement for proving Theorem~\ref{thm:approx} and most part of this section is devoted to its proof.

\begin{PROP}\label{prop:modulator}
There is a polynomial-time algorithm which, given a graph $G$ without a small DH obstruction and a positive integer $k$, either correctly reports that $(G,k)$ is a \NO-instance to \sdhd, or returns a \dhm\ $S\subseteq V(G)$ of size $O(k^3\cdot \log n)$. 
\end{PROP}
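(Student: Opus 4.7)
The plan is to adapt the Jansen--Pilipczuk approximation scheme for \textsc{Chordal Vertex Deletion}, with \emph{bicliques} in the role of cliques. The key structural anchor is that every connected distance-hereditary graph admits a \emph{balanced biclique separator} $K$: a vertex set inducing a biclique of $G$ such that each component of $G-K$ has at most $\tfrac{2}{3}|V(G)|$ vertices. This follows from Theorem~\ref{thm:bouchet}: in the canonical split decomposition of a distance-hereditary graph every bag is a star or a complete graph, and a centroidal marked edge in the decomposition tree, after recomposition, yields such a biclique. If $(G,k)$ is a \YES-instance with an unknown DH-modulator $S^*$ of size at most $k$, applying this to $G-S^*$ produces a balanced separator of $G$ of the form $K \cup Y$, where $K$ is a biclique of $G-Y$ and $Y \subseteq S^*$ has size at most $k$.

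Given this anchor, the algorithm proceeds recursively. First compute a maximal vertex-disjoint packing of small DH obstructions; if its size exceeds $k$ return \NO, and otherwise add its $O(k)$ vertices to the output. Afterwards $G$ contains no small DH obstruction. In the recursive step on a graph $G'$, search for a balanced separator $K \cup Y$ of $G'$ with $K$ a biclique of $G'-Y$ and $|Y|=O(k)$; if the search fails, report \NO. For each component $R$ of $G'-(K\cup Y)$, feed the graph $G'[R\cup K]$---which, modulo the attachment of $K$ to $R$, is the disjoint union of a distance-hereditary graph and a biclique---into the promised sub-algorithm, obtaining a DH-modulator $M_R$ of size $O(k^2)$ or a \NO-certificate. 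Add $Y \cup \bigcup_R M_R$ to the output and recurse on each $R \setminus M_R$. Balancedness forces recursion depth $O(\log n)$, and since the modulator budget $k$ is split among the subproblems at each recursion level, the $Y$-contributions telescope to $O(k \log n)$ vertices in total; the sub-algorithm is therefore invoked $\ell = O(k \log n)$ times, each time producing $O(k^2)$ further vertices, yielding a modulator of size $O(k^3 \log n)$ as claimed.

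The principal obstacle is the polynomial-time search for a balanced biclique separator of $G'$ without knowing $S^*$. Following Jansen--Pilipczuk, the plan is to set up a fractional separator LP tailored to biclique separators and round it: an optimal fractional solution should either round to a true biclique $K$ within an $O(\log n)$ factor and a contamination $Y$ of size $O(k)$, or expose $k+1$ vertex-disjoint DH obstructions (invoking Lemma~\ref{lem:createdhobs} to extract obstructions from long induced paths inside candidate separators), thereby certifying \NO. The second nontrivial ingredient is the $O(k^2)$ sub-algorithm for ``distance-hereditary plus a biclique'': here one exploits the fact that any DH obstruction must use at most a bounded number of biclique vertices (since a biclique contains no induced $P_4$ needed for the long chord-structure of the obstructions in Figure~\ref{fig:obsdh}), so only $O(k^2)$ attachment patterns of a modulator-vertex into the biclique need to be enumerated; a bounded-depth branching on DH obstructions then produces the required modulator. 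These two steps together complete the proof of Proposition~\ref{prop:modulator}.
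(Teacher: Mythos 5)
Your high-level architecture coincides with the paper's: remove a maximal packing of small obstructions, recursively extract balanced separators of the form (biclique)\,$\uplus$\,(small set), recurse to depth $O(\log n)$ so that the ``DH plus biclique'' sub-algorithm is invoked $O(k\log n)$ times, each call contributing $O(k^2)$ vertices. However, both of the ingredients you yourself flag as the crux are left as plans whose proposed implementations would not go through as described. First, for finding the balanced biclique separator you propose ``a fractional separator LP tailored to biclique separators'' whose rounding either yields a biclique $K$ plus a contamination $Y$ of size $O(k)$ or exposes $k+1$ disjoint obstructions; no such LP/rounding scheme is exhibited, and it is far from clear how a fractional solution could be rounded so that the separator is an exact biclique. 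The paper avoids this entirely: it proves that a graph with no small DH obstruction has only $O(n^3)$ maximal bicliques, enumerable in polynomial time (Lemma~\ref{lem:numbiclique}), iterates over all of them as candidates for $K$, and applies the Feige--Hajiaghayi--Lee $O(\sqrt{\log \mathsf{opt}})$-approximation (Theorem~\ref{thm:approxsep}) to the largest component of $G-K$; correctness when $(G,k)$ is a \YES-instance follows because the biclique balanced separator of the largest DH component of $G-X_0$ extends to a maximal biclique of $G$ (Lemma~\ref{lem:findsep}). Relatedly, your one-line justification of the structural anchor (``a centroidal marked edge yields such a biclique'') oversimplifies Lemma~\ref{lem:sepbiclique}: a single marked edge need not work, and the paper's proof orients the marked edges, rules out sink bags that are complete or have a marked center, and in the remaining case assembles the biclique from the unmarked center of a star bag together with several sets $R_{x_i}(e_i)$, not from one split.

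Second, your $O(k^2)$ sub-algorithm for a controlled instance rests on the claim that every DH obstruction meets the biclique in a bounded number of vertices; this is false in the relevant regime. A long induced cycle may pass through the biclique many times (Lemma~\ref{lem:longpath} only bounds the intersection when it is connected; when it has $\ell\ge 2$ components the paper does not bound its size but instead extracts a long $K$-path and shortcuts through a vertex of the opposite side). Moreover ``bounded-depth branching on DH obstructions'' cannot run in polynomial time against obstructions that are arbitrarily long induced cycles, and branching would in any case not obviously yield the $O(k^2)$ bound needed for the final $O(k^3\log n)$ count. The paper's mechanism here is genuinely different and is the technical heart of the section: starting from an LP solution with all values below $\tfrac1{20}$, it zeroes the solution on $K$ and doubles it on $D$, proving feasibility is preserved (Lemma~\ref{lem:newDHDsol}), then builds a \textsc{Vertex Multicut} instance on $G[D]$ whose terminal pairs are the pairs at fractional distance at least $1$, applies Gupta's rounding to get an integral multicut of size $O(\abs{x^*}^2)=O(k^2)$ (Theorem~\ref{thm:fractionalmulticut}, Lemma~\ref{lem:usemulticut}), and shows by a minimal-counterexample argument that this multicut is a \dhm\ of the whole controlled graph. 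Without a concrete replacement for these two subroutines, your proposal does not yet constitute a proof of Proposition~\ref{prop:modulator}.
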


A recent work of Jansen and Pilipczuk~\cite{JansenP2016} for a polynomial kernel for \textsc{Chordal Vertex Deletion} employs an approximation algorithm as an important subroutine. We follow a similar vein of approach as in~\cite{JansenP2016} here. The crux is to reduce the problem of finding an approximate \dhm\ to a restricted version where an input instance comes with a biclique \dhm\ (which we call a {\sl controlled instance}). In this special case, finding an approximate \dhm\ boils down to solving \textsc{Vertex Multicut}. We construct an instance of \textsc{Vertex Multicut} using a feasible solution to the LP relaxation of \sdhd. To this end, we obtain a decomposition 
\[V(G)=D\uplus K_{\ell} \uplus \cdots K_1 \uplus X_{\ell}\uplus \cdots X_1\] 
where $G[D]$ is distance-hereditary, each $K_i$ is a biclique, $\abs{X_i}=O(k\sqrt{\log{k}})$ and $\ell=O(k\cdot \log n)$, if $(G,k)$ is a \YES-instance (Proposition~\ref{prop:decomposition}).
Such a decomposition is achieved by  recursively extracting pairs $(K, X)$ from $G$, where $K$ is a biclique, $X$ is a vertex set of size at most $O(k\sqrt{\log k})$ until the resulting graph after removing those sets is distance-hereditary. Then in the graph $G[D\uplus K_{\ell} \uplus \cdots K_1 ]$, we recursively find an approximate DH-modulator using the algorithm for a controlled instance.

As we consider distance-hereditary graphs instead of chordal graphs, our approximation algorithm makes some important deviations from~\cite{JansenP2016}. First, we extract a pair containing a biclique, instead of a clique. Second, unlike in~\cite{JansenP2016}, we cannot guarantee that a long induced cycle traverses exactly one connected component of the distance-hereditary graph $D$. Such differences call for nontrivial tweaks in our approximation. 

Except for the last subsection in which we prove Theorem~\ref{thm:approx}, we assume that $G$ contains no small DH obstruction.
 
\subsection{LP relaxation and preprocessing}\label{subsec:preprocessing}

Given a graph $G$, let $x$ be a mapping from $V(G)$ to $\mathbb{R}$ and denote $x(v)$ as $x_v$ for every $v\in V(G)$. For a subgraph $H$ of $G$, we define $x(H):=\sum_{v\in V(H)}x_v$ and $\abs{x}:=x(G)$. As we assume that $G$ does not contain any small DH obstructions, the following is a linear program formulation of \sdhd\ for an instance $(G,k)$:
\begin{align*}
&\min \sum_{v} x_v \\
s.t \qquad & x(H) \geq 1 \quad & \forall \text{$H$ is an induced cycle of length at least 7}\\
&x_v \geq 0 \quad & \forall v\in V(G).
\end{align*} It is clear that an optimal integral solution corresponds to an optimal \dhm\ and vice versa. We call a feasible solution to this LP a \emph{feasible fractional solution} to \sdhd\ (for $G$).

\begin{OBS}\label{obs:subgraph}
If $x^*$ is a feasible fractional solution to \sdhd\ for a graph $G$, then for any induced subgraph $G'$ of $G$, $x^*$ restricted to $V(G')$ is a feasible fractional solution for $G'$ as well.
\end{OBS}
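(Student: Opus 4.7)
The plan is essentially a one-line verification that every LP constraint for $G'$ is inherited from a corresponding constraint for $G$. The crucial (but immediate) property is that an induced subgraph of an induced subgraph is again an induced subgraph, so each witness to a constraint in $G'$ is already a witness in $G$.

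More precisely, first I would note that since $G$ contains no small DH obstruction and $G'$ is an induced subgraph of $G$, the graph $G'$ also contains no small DH obstruction; hence the LP formulation displayed just above the observation is the correct LP relaxation for $\sdhd$ on $G'$ as well. This means I need only check feasibility against two kinds of constraints: (i) $x_v \geq 0$ for every $v \in V(G')$, and (ii) $x(H) \geq 1$ for every induced cycle $H$ of $G'$ of length at least $7$.

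For (i), non-negativity of the restriction of $x^*$ is immediate from non-negativity of $x^*$ on $V(G)\supseteq V(G')$. For (ii), let $H$ be any induced cycle of $G'$ with $|V(H)|\geq 7$. Because $G'$ is an induced subgraph of $G$, $H$ is also an induced cycle of $G$ of length at least $7$, so the constraint $x^*(H)\geq 1$ holds by feasibility of $x^*$ for $G$; and since $V(H)\subseteq V(G')$, the sum $x^*(H)=\sum_{v\in V(H)} x^*_v$ is unaffected by restricting the domain of $x^*$ to $V(G')$. Both families of constraints are therefore satisfied, establishing the claim. There is no real obstacle here — the only thing to be mindful of is to point out why the same LP (with only long-hole constraints) is the right one for $G'$, which follows from heredity of the "no small DH obstruction" assumption under taking induced subgraphs.
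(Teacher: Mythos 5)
Your argument is correct and supplies precisely the verification the paper leaves implicit (it states this as an Observation with no proof). The whole content is that an induced cycle of an induced subgraph is an induced cycle of the original graph, together with the heredity of the no-small-DH-obstruction assumption, and you handle both points correctly.
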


An optimal fractional solution $x^*$ to \sdhd\ can be found in polynomial time using the ellipsoid method\footnote{We mention that for our LP relaxation, found optimal solution has all rational values, each being represented using polynomial number of digits in $n$.} provided that there is a separation oracle for detecting a violated constraint for a fractional solution $x'$. Such a separation oracle can be easily implemented in polynomial time.

\begin{LEM}\label{lem:oracle}
Let $G$ be a graph with non-negative weights $x':V(G)\rightarrow \mathbb{Q}_{\geq 0}$ and $\ell\in \mathbb{Q}_{\geq 0}$. For any positive integer $d$, one can decide in time $O(n^{d+2})$ whether there is an induced cycle $H$ of length at least $d$ with $x'(H)< \ell$.
\end{LEM}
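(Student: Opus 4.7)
The plan is to find a low-weight induced cycle of length at least $d$ by enumerating a candidate $d$-vertex stretch of the cycle and then closing it optimally via a shortest-path computation. For each ordered $d$-tuple of distinct vertices $(v_0,\ldots,v_{d-1})$ --- there are $O(n^d)$ of them --- I would test in $O(d^2)$ time whether $v_0v_1\cdots v_{d-1}$ induces a cycle $C_d$ or a path $P_d$ in $G$. Each induced $C_d$ is checked directly against $\ell$ by summing its vertex weights. For each induced $P_d$, I would form
\[
G'_P := G - \bigl(\{v_1,\ldots,v_{d-2}\}\cup (N_G(\{v_1,\ldots,v_{d-2}\})\setminus\{v_0,v_{d-1}\})\bigr),
\]
which preserves $v_0,v_{d-1}$ but deletes every other vertex having a neighbor among the interior of $P$. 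A single run of Dijkstra on $G'_P$ with vertex weights $x'$ then computes in $O(n^2)$ time the minimum weight $w$ of a simple $v_0$--$v_{d-1}$ path, and I would declare \YES\ as soon as $x'(P)+w-x'(v_0)-x'(v_{d-1})<\ell$ for some $P$. The total running time is $O(n^d)\cdot O(n^2)=O(n^{d+2})$.

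For correctness, suppose the algorithm outputs \YES\ on an induced $P_d$ candidate $P$ with associated simple path $Q^*$ of weight $w$ in $G'_P$. Because $x'\geq 0$, every chord $u_iu_j$ of $Q^*$ with $j>i+1$ can be used to shortcut $Q^*$ to a subpath of weight at most $w$; iterating yields an \emph{induced} $v_0$--$v_{d-1}$ path $Q\subseteq G'_P$ with $x'(Q)\leq w$. By the construction of $G'_P$, the internal vertices of $Q$ are non-adjacent to $v_1,\ldots,v_{d-2}$, and since $Q$ is induced, $v_0$ and $v_{d-1}$ have no chord to any internal vertex of $Q$; hence $P\cup Q$ is an induced cycle of length at least $d+1$ with weight $<\ell$. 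Conversely, if $H$ is an induced cycle of length at least $d$ with $x'(H)<\ell$, then either $\abs{V(H)}=d$ and $H$ is flagged directly by the $C_d$ scan, or $\abs{V(H)}>d$ and any $d$ consecutive vertices of $H$ form an induced $P_d$ enumerated by the algorithm, whose complementary arc in $H$ lies inside $G'_P$ (by inducedness of $H$) and certifies a value of $w$ small enough to trigger \YES.

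The main subtlety will be bridging the gap between minimum-weight \emph{simple} and minimum-weight \emph{induced} $v_0$--$v_{d-1}$ paths: Dijkstra returns the former, but only the latter guarantees that closing $P$ yields an induced cycle. The chord-shortcut argument, available precisely because $x'$ is non-negative, supplies the needed bridge and sidesteps the NP-hardness of minimum-weight induced path in general graphs.
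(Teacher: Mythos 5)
Your proof follows the paper's own approach exactly: enumerate $d$-vertex induced paths $P$, delete the interior of $P$ together with its neighborhood (keeping the two endpoints), and run a weighted shortest-path computation to close the cycle, giving the stated $O(n^{d+2})$ bound. Your chord-shortcutting observation, which converts the simple shortest path into an induced one without increasing weight, is a welcome tightening of a step the paper dismisses with ``Clearly,'' and the separate handling of the $|V(H)|=d$ case (a $C_d$ rather than a $P_d$) is likewise a small point the paper glosses over.
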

\begin{proof}
For every size-$d$ vertex subset $D$ of $G$ such that $G[D]$ is an induced path, with $s$ and $t$ as end vertices, we solve a weighted shortest-path problem on $G'$ with weights $x'|_{V(G')}$, where $G'$ is obtained by removing all internal vertices of $D$ and their neighbors, except for $s$ and $t$. Clearly, $D$ together with the obtained shortest $(s,t)$-path forms an induced cycle of length at least $d$. Conversely, if there is an induced cycle of length at least $d$, then it can be detected by considering all size-$d$ vertex subsets. 
\end{proof}

Let $x^*$ be an optimal fractional solution to \sdhd\ for $G$ and let $\tilde{X}$ be the set of all vertices $v$ such that $x^*_v\geq \frac{1}{20}$. Observe  
\begin{equation}
k\geq x^*(G)\geq \frac{1}{20}\cdot \abs{\tilde{X}} + x^*(G-\tilde{X}) 
\end{equation}
From Inequality (1), it is easy to see that if $\abs{\tilde{X}}> 20k$, then $(G,k)$ is a \NO-instance. Therefore, we may assume that $\abs{\tilde{X}}\le 20k$. Furthermore, by  Observation~\ref{obs:subgraph}, $x^*$ restricted to $V(G)\setminus \tilde{X}$ is a feasible fractional solution for $G-\tilde{X}$ and $x^*_v<\frac{1}{20}$ for every $v\in V(G)\setminus \tilde{X}$. We add $\tilde{X}$ to the approximate \dhm\ $S$ we are constructing and remove $\tilde{X}$ from $G$. Henceforth, we assume that $x^*$ is a feasible fractional solution to \sdhd\ for $G$ such that $x^*_v<\frac{1}{20}$ for every $v\in V(G)$. Clearly, $\abs{x^*}\leq k$.

\subsection{Decomposition into DH and bicliques} \label{subsec:decomposition}

Using the following observations, we decompose the vertex set of $G$ into bicliques, a set inducing a distance-hereditary graph, and a bounded number of extra vertices.

We first prove an $\mathcal{O}(n^3)$ bound on the number of maximal bicliques in a graph having no small DH obstructions.
This can be seen as an extension of an $\mathcal{O}(n^2)$ bound on the number of maximal cliques in graphs without $C_4$~\cite[Proposition 2]{Farber1989} or an $\mathcal{O}(n\sqrt{n})$ bound for graphs without $C_4$ and the diamond~\cite[Theorem 2.1]{EschenHCSS2011}. The tightness of the latter result was also confirmed.
We say two sets $A$ and $B$ \emph{cross} if $A\setminus B\neq \emptyset$, $A\cap B\neq \emptyset$, and $B\setminus A\neq \emptyset$.

\begin{LEM}\label{lem:noncrossingpartition}
Let $A$ be a set and let $\mathcal{C}=\{C_1, \ldots, C_m\}$ be a family of subsets of $A$ such that 
for each distinct integers $i, j\in \{1, \ldots, m\}$,  $C_i\neq C_j$ and $C_i$ and $C_j$ do not cross.
Then $m\le \frac{\abs{A}(\abs{A}+1)}{2}$.
\end{LEM}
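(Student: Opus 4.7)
The plan is to prove the bound by induction on $\abs{A}$, exploiting the observation that within a non-crossing family the sets containing any fixed element must form a chain under inclusion. The base case $\abs{A} \le 1$ is essentially trivial. For the inductive step, I would fix an arbitrary element $a \in A$ and split $\mathcal{C}$ as $\mathcal{C}_a := \{C \in \mathcal{C} : a \in C\}$ and $\mathcal{C}_{\bar a} := \mathcal{C} \setminus \mathcal{C}_a$.

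The bound on $\abs{\mathcal{C}_a}$ is the crux. For any two members $C, C' \in \mathcal{C}_a$, the element $a$ lies in $C \cap C'$, so $C \cap C' \neq \emptyset$; combined with the non-crossing hypothesis this forces $C \subseteq C'$ or $C' \subseteq C$. Thus $\mathcal{C}_a$ is a chain of distinct subsets of $A$, each of cardinality between $1$ and $\abs{A}$, so $\abs{\mathcal{C}_a} \le \abs{A}$. Meanwhile, $\mathcal{C}_{\bar a}$ is a non-crossing family of subsets of the smaller ground set $A \setminus \{a\}$, so the inductive hypothesis gives $\abs{\mathcal{C}_{\bar a}} \le \frac{(\abs{A}-1)\abs{A}}{2}$. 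Adding the two estimates yields
\[ \abs{\mathcal{C}} \;\le\; \abs{A} + \frac{(\abs{A}-1)\abs{A}}{2} \;=\; \frac{\abs{A}(\abs{A}+1)}{2},\]
which is exactly the bound claimed.

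I do not foresee a real obstacle: this is a standard laminar-family counting argument, and the only tiny subtlety is to regard $\emptyset$ as excluded from $\mathcal{C}$, which is harmless for the intended application to counting maximal bicliques since those are non-empty by definition. The approach is also robust in that a weaker partition (singling out any one element) already suffices; one does not need to select a balanced splitter or to invoke the tree structure of laminar families, even though either would give the slightly sharper $2\abs{A}-1$ bound.
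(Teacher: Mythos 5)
Your proof is correct and follows essentially the same route as the paper: fix an element $a\in A$, observe that the members containing $a$ form a chain (hence at most $\abs{A}$ of them), and apply induction on $A\setminus\{a\}$ to the remaining sets. Your explicit remark that $\emptyset$ must be excluded from $\mathcal{C}$ is a valid (and appreciated) refinement that the paper leaves implicit, and it is indeed harmless in the intended application.
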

\begin{proof}
We may assume $\abs{A}\ge 2$. Let $v\in A$. Note that the sets in $\mathcal{C}$ containing $v$ can be ordered linearly by the inclusion relation.
Therefore, there are at most $\abs{A}$ sets containing $v$.
By induction $\{C_i\in \mathcal{C} : v\notin C_i\}$ contains at most $\frac{(\abs{A}-1)\abs{A}}{2}$ elements.
Thus, in total, $\abs{\mathcal{C}}\le \frac{\abs{A}(\abs{A}+1)}{2}$.
\end{proof}

\begin{LEM}\label{lem:numbiclique}
Let $G$ be a graph on $n$ vertices that has no small DH obstructions. Then $G$ contains at most $\frac{n^3+5n}{6}$ maximal bicliques, and they can be enumerated in polynomial time.
\end{LEM}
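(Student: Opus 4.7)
\smallskip\noindent\emph{Plan.} Fix a vertex $v$; we bound the number of maximal bicliques containing $v$ by encoding each such biclique $K$ as its ``opposite side of $v$'', namely the side $X_v(K)\subseteq N_G(v)$ of a bipartition of $K$ that does not contain $v$. The family $\mathcal F_v=\{X_v(K):K\ni v\}$ consists of distinct sets, since the maximal biclique can be recovered from $X$ as $X$ together with its Galois closure $\bigcap_{x\in X}N_G(x)\setminus X$, and we shall argue that it is non-crossing. Lemma~\ref{lem:noncrossingpartition} then bounds $\abs{\mathcal F_v}$ by $n(n-1)/2$, and a standard aggregation over all $v$ produces the $(n^3+5n)/6$ bound on the number of maximal bicliques.

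\smallskip\noindent\emph{Proof of non-crossing (main step).} Suppose $X_1,X_2\in\mathcal F_v$ cross, with $a\in X_1\cap X_2$, $b\in X_1\setminus X_2$, $c\in X_2\setminus X_1$. Because $cv\in E(G)$ yet $c\notin X_1$, the maximality of $K_1$ produces a witness $s_c\in A_{K_1}\setminus\{v\}$ with $cs_c\notin E(G)$ (the degenerate case $A_{K_1}=\{v\}$ is ruled out, since then $X_1=N_G(v)$, contradicting $c\in N_G(v)\setminus X_1$); symmetrically we obtain $s_b\in A_{K_2}\setminus\{v\}$ with $bs_b\notin E(G)$. The complete-bipartiteness of $K_1$ and $K_2$ forces the edges $bs_c,s_ca,as_b,s_bc\in E(G)$, hence $b{-}s_c{-}a{-}s_b{-}c$ is a walk of length $4$ whose endpoints both neighbor $v$. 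When this walk is an induced $P_5$, Lemma~\ref{lem:createdhobs} applied with $v$ yields an induced DH obstruction on at most $6$ vertices, contradicting that $G$ has no small DH obstruction. The remaining chord patterns are handled by a case analysis that in every case produces a strictly shorter induced $bc$-path of length at least $3$ in $G-v$, to which the lemma again applies.

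\smallskip\noindent\emph{Counting and enumeration.} Summing over $v$ gives $\sum_v\abs{\mathcal F_v}\le n^2(n-1)/2$; since each maximal biclique $K$ of size $s$ is counted exactly $s$ times in this sum, the number of maximal bicliques of size at least $3$ is at most $n^2(n-1)/6$. The remaining maximal bicliques have size $2$, and are precisely the isolated edges of $G$, of which there are at most $n/2$. Combining, the total count is at most $(n^3-n^2+3n)/6\le(n^3+5n)/6$, establishing the bound. Enumeration in polynomial time is immediate from this polynomial count: one can, for instance, iterate over all triples $\{x,y,z\}\subseteq V(G)$, determine from $G[\{x,y,z\}]$ the bipartition in any maximal biclique extending it, and compute the Galois closure to output the (at most one) such maximal biclique; isolated edges are collected separately. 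The principal obstacle is the chord case analysis in the key step, where each configuration on $\{v,a,b,c,s_b,s_c\}$ (including corner cases in which $b\in A_{K_2}$ or $c\in A_{K_1}$) must be matched to an explicit gem, house, domino, $C_5$, or $C_6$ among the five small DH obstructions.
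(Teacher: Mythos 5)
Your encoding of a maximal biclique $K\ni v$ by its ``$v$-free part'' $X_v(K)\subseteq N_G(v)$, and the handshake aggregation $\sum_v\abs{\mathcal F_v}=\sum_K\abs{V(K)}$, are genuinely different from the paper's route. The paper instead fixes $v$, sets $N_1:=N_G(v)$, $N_2:=N(N_1)\setminus(\{v\}\cup N_1)$, and proves non-crossing of the \emph{anchored} family $\{N_G(w)\cap N_1 : w\in N_2\}$; the per-vertex bound $\binom{n}{2}+1$ is then aggregated by a peeling sum $\sum_{i=1}^n\bigl(\binom{i}{2}+1\bigr)=\frac{n^3+5n}{6}$. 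Anchoring on $w,z\in N_2$ is a material simplification: it forces the two witnesses to be \emph{nonadjacent to $v$}, which pins down the placement of $v$ in the would-be obstruction and cuts the case analysis to three clean cases on the adjacencies among $w,z,a_1,a_3$. Your witnesses $s_b\in A_{K_2}$, $s_c\in A_{K_1}$ carry no such guarantee: they may lie in $N_G(v)$, may coincide, may be adjacent to each other, and you also have to rule out the degenerate configurations $c\in A_{K_1}$, $b\in A_{K_2}$ before the witnesses even exist.

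This is exactly where the proof stalls. The non-crossing claim for $\mathcal F_v$ is the entire content of the lemma, and you defer its verification to ``a case analysis that in every case produces a strictly shorter induced $bc$-path of length at least 3 in $G-v$'' and then explicitly flag that ``each configuration on $\{v,a,b,c,s_b,s_c\}$ \dots\ must be matched to an explicit gem, house, domino, $C_5$, or $C_6$.'' That analysis is not carried out, and it is not routine: for instance with chords $ba$ and $ac$ present and $bc$ absent, the surviving induced $b$--$c$ path inside $\{b,s_c,a,s_b,c\}$ can have length only $2$, so Lemma~\ref{lem:createdhobs} does not apply and one must bring $v$ or a different vertex subset into play directly. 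Because the gap sits in the one step that actually uses the absence of small DH obstructions, the proof as written does not establish the lemma. If you want to keep your cleaner aggregation, the safest repair is to replace the witnesses $s_b,s_c$ by vertices of $N_2$ as the paper does (take any $w\in A_{K_1}\cap N_2$ and $z\in A_{K_2}\cap N_2$; such vertices exist whenever $X_i\ne N_1$, and the family $\{N_G(w)\cap N_1\}$ refines your $\mathcal F_v$), which recovers the paper's three-case argument verbatim.
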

\begin{proof}
We claim that every vertex is contained at most $\frac{(n-1)n}{2}+1$ maximal bicliques of $G$. 
Then by recursively enumerating all maximal bicliques containing $v$ of $G$ and discarding $v$ from $G$, we obtain the bound $\sum_{1\le i\le n} \left( \frac{(i-1)i}{2}+1 \right)\le \frac{n^3+5n}{6}$ on the number of maximal bicliques in total.

Let $v\in V(G)$, and let $N_1:=N_G(v)$ and let $N_2$ be the set of vertices in $G- (\{v\}\cup N_G(v))$ that have a neighbor in $N_1$.
We claim that for $w,z\in N_2$, $N_G(w)\cap N_1$ and $N_G(z)\cap N_1$ do not cross.
Suppose for contradiction that there exists $w, z\in N_2$ where $N_G(w)\cap N_1$ and $N_G(z)\cap N_1$ cross.
By definition, there exists $a_1\in (N_G(w)\setminus N_G(z))\cap N_1$, $a_2\in  (N_G(w)\cap N_G(z))\cap N_1$,  and $a_3\in (N_G(z)\setminus N_G(w))\cap N_1$.
If $wz\in E(G)$, then regardless of the adjacency between $a_1$ and $a_3$, 
$G[\{w, z, a_1, a_3, v\}]$ is isomorphic to $C_5$ or the house.
We may assume $wz\notin E(G)$.
If $a_1a_3\in E(G)$, then $G[\{w, z, a_1, a_2, a_3\}]$ is isomorphic to $C_5$, the house, or the gem.
If $a_1a_3\notin E(G)$, then $G[\{v,w,z,a_1, a_2, a_3\}]$ contains an induced subgraph isomorphic to the domino, the house, or the gem.
Therefore, we can always find a small DH obstruction.
We conclude that for $w,z\in N_2$, $N_G(w)\cap N_1$ and $N_G(z)\cap N_1$ do not cross, as $G$ has no small DH obstructions.

So, $\{N_G(v)\cap N_1:v\in N_2\}$ forms a family of sets such that any two sets do not cross.
Since $\abs{N_1}\le \abs{V(G)}-1$, by Lemma~\ref{lem:noncrossingpartition}, it contains at most $\frac{(n-1)n}{2}$ sets.

Notice that every biclique containing $v$ and a vertex in $N_2$ should consists of one part with the union of $v$ and a set of vertices in $N_2$ having the same neighbors on $N_1$.
Thus, the number of such bicliques is at most $ \frac{(n-1)n}{2}$. With the maximal biclique containing the trivial biclique with the partitions $\{v\}$ and $N_1$, 
there are at most $\frac{(n-1)n}{2}+1$ bicliques containing $v$, as required.
Furthermore, by classfying the classes of vertices in $N_2$ having the same neighborhoods in $N_1$, 
we can enumerate all bicliques in polynomial time.
\end{proof}

One main observation is that every connected distance-hereditary graph admits a biclique that is also a balanced vertex separator.
We prove this.

\begin{LEM}\label{lem:sepbiclique}
Let $G$ be a connected distance-hereditary graph on at least two vertices. Then $G$ contains a blicique that is a balanced vertex separator of $G$. 
\end{LEM}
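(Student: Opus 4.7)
The plan is to exploit the canonical split decomposition $D$ of $G$ (Theorem~\ref{thm:CED}), which by Theorem~\ref{thm:bouchet} consists entirely of star bags and complete bags. Let $T$ be the associated decomposition tree, with bags as nodes and marked edges of $D$ as edges, and weight each bag by its number of unmarked vertices, so that the total weight equals $n$.

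My first attempt is to find a \emph{balanced marked edge}: a marked edge $e$ of $D$ whose induced split $(A,B)$ of $G$ satisfies $|A|,|B|\le 2n/3$. If such an $e$ exists, then by the very definition of a split the boundary set $K:=N_G(A)\cup N_G(B)$ induces a biclique in $G$, and every component of $G-K$ lies in $A\setminus N_G(B)$ or in $B\setminus N_G(A)$ and hence has at most $2n/3$ vertices; so $K$ is already the desired biclique balanced separator.

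If no marked edge is balanced, then every edge of $T$ has one side of weight $>2n/3$, and orienting each edge toward its heavier side yields a source bag $B^*$ with every subtree of $T-B^*$ having weight $<n/3$. I would use $B^*$ as an anchor: for each marked vertex $b$ of $B^*$ with partner $b'$ in the adjacent bag, let $R(b)$ be the set of unmarked vertices $v$ of the subtree of $T-B^*$ through $b$ for which there is an alternating $D$-path from $b'$ to $v$ of odd length starting with an unmarked edge, and set
\[
K_{B^*}\ :=\ \{\text{unmarked vertices of }B^*\}\ \cup\ \bigcup_{b\,\text{ marked in }B^*}R(b).
\]
Using the alternating-path characterisation of adjacency in $D$ (\cite[Lem.~2.10]{AKK2014}), any $G$-edge whose endpoints lie in distinct subtrees of $T-B^*$ must correspond to an alternating $D$-path that traverses $B^*$ through exactly one unmarked edge---the unique unmarked edge between the two marked neighbours if $B^*$ is a complete bag, and the unique unmarked edge incident to the center if $B^*$ is a star. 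From this I would extract (i) a biclique bipartition of $G[K_{B^*}]$---for instance $(R(b_1),\{\text{unmarked of }B^*\}\cup\bigcup_{i\ge 2}R(b_i))$ in the complete-bag case, and a ``center side vs.\ rest'' bipartition in the star-bag case (the center $c$ itself if it is unmarked, or $R(c)$ if $c$ is marked, versus everything else)---and (ii) that for distinct marked $b,b'\in B^*$, no $G$-edge can connect a subtree-vertex outside $R(b)$ to one outside $R(b')$, so every component of $G-K_{B^*}$ is contained in a single subtree of $T-B^*$ and has fewer than $n/3$ vertices.

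The main obstacle I anticipate is the combinatorial bookkeeping in the star-bag subcase, where the correct definition of $R(\cdot)$ and the biclique partition depend on whether the center and each leaf of $B^*$ is marked or unmarked and on matching the parities of alternating paths on each side; a careful case analysis will be needed. The degenerate situations in which $K_{B^*}$ has fewer than two vertices---including the extreme case $|V(T)|=1$, where $G$ itself is a star or a complete graph---reduce to the trivial case where $V(G)$, or the center of $G$ together with a single leaf, is an obvious biclique balanced separator.
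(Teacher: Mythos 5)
Your argument is correct, and it uses the same basic machinery as the paper's proof of Lemma~\ref{lem:sepbiclique} (the canonical split decomposition via Theorems~\ref{thm:CED} and~\ref{thm:bouchet}, the boundary sets $R(\cdot)$ of the splits induced by marked edges, an orientation of the marked edges, and a bag towards which all incident marked edges point), but the finishing step is genuinely different. The paper orients a marked edge $e=xy$ towards $y$ exactly when $G[U_y(e)\setminus R_y(e)]$ has a component on more than $n/2$ vertices (an unoriented edge immediately yields the separator $K_e$, playing the role of your balanced-edge case), and then a case analysis eliminates leaf bags, complete bags and star bags with marked center, so that the all-inward bag is a star with unmarked center $v$ and the biclique is $\{v\}\cup\bigcup_i R_{x_i}(e_i)$, with balancedness inherited from the orientation rule. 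You instead weight each side of a split by its number of unmarked vertices, orient towards the heavier side when no split has both sides of size at most $2n/3$, and at the resulting bag $B^*$ take the larger set $K_{B^*}$ consisting of \emph{all} unmarked vertices of $B^*$ together with all sets $R(b)$; this buys you a uniform treatment of every bag type (your bipartitions -- $R(b_1)$ versus the rest for a complete bag, and the center, respectively $R(c)$, versus the rest for a star bag -- are indeed complete by the alternating-path characterization of adjacency), and balancedness is immediate since each subtree carries fewer than $n/3$ vertices. The case analysis you defer does go through with your definitions, and the degeneracies you worry about do not arise: each $R(b)$ is nonempty because $G$ is connected, and every bag of a split decomposition with at least two bags has at least three vertices, so both sides of each proposed bipartition are nonempty. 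Two slips are harmless: the bag you reach is a sink, not a source, of your orientation, and in a star bag the center is incident with one unmarked edge per leaf rather than a unique one -- but what you actually use for part (ii) is only that a subtree vertex with a $G$-neighbour outside its subtree lies in the corresponding $R(b)$, so all cross-subtree edges are absorbed by $K_{B^*}$ in any case.
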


\begin{proof}%
Let $D$ be a canonical split decomposition of $G$. If $D$ consists of a single bag $B$, then $B=G$. 
Note that $B$ is either a star or a complete graph on at least two vertices. In this case, we take the whole vertex set $V(G)$ as a balanced vertex separator of $G$. 
Henceforth, we can assume that $D$ contains at least two bags.

For every marked edge $e=xy$ of $D$, let $U_x(e)$ (respectively, $U_y(e)$) denote the set of unmarked vertices contained in the connected component of $D-e$ containing $x$ (respectively, $y$). Recall that $(U_x(e),U_y(e))$ defines a split of $G$. Let $R_x(e):=N_G(U_y(e))$, $R_y(e):=N_G(U_x(e))$, and $K_e:=R_x(e)\cup R_y(e)$.
We note that a vertex $v$ belongs to $R_y(e)$ if and only if $v$ is an unmarked vertex in $D$ that is connected to $y$ by an odd-length path in $D-e$ alternating in marked edges and unmarked edges. Since $(U_x(e),U_y(e))$ is a split, the vertex set  $K_e$ forms a biclique. 

Let  $\vec{D}$ be a graph obtained by orienting every marked edge $e=xy$ from $x$ to $y$ whenever $G[U_y(e)\setminus R_y(e)]$ contains a connected component of size larger than $\frac{1}{2}n$. Note that a connected component of $G[U_y(e)\setminus R_y(e)]$ is a connected component of $G-K_e$. If there is an unoriented edge $e=xy$ in $\vec{D}$,  $K:=K_e$ is a desired biclique that is a balanced vertex separator. Hence we assume that every marked edge is oriented in $\vec{D}$. Since every bag of $\vec{D}$ can have at most one marked edge oriented outward, there exists a bag $B$ such that all incident marked edges are oriented toward $B$  in $\vec{D}$. 

Consider the case when $B$ is a leaf bag, $e=xy$ is the unique marked edge incident with $B$ oriented from $x$ to $y$, and $y\in V(B)$. If $B$ is a complete bag or a star bag in which $y$ is the center, clearly all unmarked vertices of $B$ belong to $R_y(e)$, and thus $U_y(e)\setminus R_y(e)=\emptyset$, contradicting the orientation of $e$. If $B$ is a star bag in which $y$ is a leaf, the (unmarked) center of $B$ belongs to $R_y(e)$. Observe that each unmarked leaf vertex of $B$ forms a trivial connected component of $G[U_y(e)\setminus R_y(e)]$, contradicting the orientation of $e$.

Hence $B$ is incident with at least two marked edges, say $e_i=x_iy_i$ for $i=1,2,\ldots $, where $y_i\in V(B)$ for every $i$. Let $U_B$ be the set of unmarked vertices in $B$. We argue that $B$ is a star bag with its center unmarked.

\medskip
\noindent (a) \emph{$B$ is a complete bag:} Observe that for every $i$, $R_{y_i}(e_i)=\bigcup_{j\neq i} R_{x_j}(e_j)\cup U_B$. This means $K_{e_i}=R_{x_i}(e_i)\cup R_{y_i}(e_i)=\bigcup_{j} R_{x_j}(e_j)\cup B$, that is, $K_{e_i}$ is the same for every $i$. Moreover, we have $U_B\subseteq K_{e_i}$. This means that each connected component of $G-K_{e_i}$ of size at most $\frac{1}{2}n$, contradicting the orientations.

\medskip
\noindent (b) \emph{$B$ is a star bag with its center marked:} Notice that the center of $B$ is one of $y_i$'s, say $y_1$. The orientation of $e_1$ implies that $G[U_{y_1}(e_1)\setminus R_{y_1}(e_1)]$ has a connected component $C$ containing more than $\frac{1}{2}n$ vertices. On the other hand, all vertices of $U_B$ and $R_{x_i}(e_i)$ for $i\neq 1$ have an odd-length path in $D-e_1$ to $y_1$ alternating in marked and unmarked edges, implying that $U_B\cup \bigcup_{i\neq 1}R_{x_i}(e_i)\subseteq R_{y_1}(e_1)$. Therefore, $C$ is contained in a connected component of $G[U_{x_j}(e_j)\setminus R_{x_j}(e_j)]$ for some $j\neq 1$. This contradicts the orientation of $e_j$.

\medskip
Hence, $B$ is a star bag with its center unmarked. Let $v$ be the center of $B$ and observe that for every $i$, $v$ is the only vertex in $B_{y_i}(e_i)$. Observe that $K:=\{v\}\cup \bigcup_i R_{x_i}(e_i)$ is a biclique with the desired property. This completes the proof.
\end{proof}

The following result is from~\cite{FeigeHL08}.
\begin{THM}[Feige, Hajiaghayi, and Lee~\cite{FeigeHL08}]\label{thm:approxsep}
There is an $O(\sqrt{\log{\sf opt}})$-approximation algorithm for finding a balanced vertex separator.
\end{THM}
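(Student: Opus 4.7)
The plan is to follow the semidefinite programming (SDP) approach pioneered by Arora, Rao, and Vazirani (ARV) for sparsest cut, then adapt it to the vertex-cut / vertex-separator setting and sharpen the bound so that the approximation factor depends on the optimum value ${\sf opt}$ rather than on $n$.

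The first step is to formulate an SDP relaxation for \textsc{Balanced Vertex Separator}: introduce, for every vertex $v$, a unit vector $u_v\in\mathbb{R}^{|V(G)|}$ together with scalar indicator variables $y_v\in[0,1]$ interpreted as the probability that $v$ lies in the separator. One minimizes $\sum_v y_v$ subject to (i) each edge $vw$ being paid for in the sense that $\|u_v - u_w\|^2 \le y_v + y_w$, (ii) a balance constraint expressing that the mass of vectors on each ``side'' is at most $\tfrac{2}{3}|V(G)|$, and (iii) the $\ell_2^2$-triangle inequality $\|u_a - u_b\|^2 + \|u_b - u_c\|^2 \ge \|u_a - u_c\|^2$ for all triples. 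This is solvable in polynomial time up to arbitrary precision.

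The second step is the rounding, which is where the ARV machinery enters. One shows that the $\ell_2^2$ metric induced by the vectors $\{u_v\}$ admits, in the sense of expander flows / measured descent, an embedding into $\ell_1$ with distortion $O(\sqrt{\log n})$. Applying a random hyperplane (or, more precisely, a random Gaussian projection followed by a sweep) to the vectors, one extracts a cut of $\ell_2^2$ value $O(\sqrt{\log n})$ times the SDP value while maintaining balance. In the vertex version the cut produces the separator, and the SDP cost upper-bounds the integral ${\sf opt}$. Finally, to replace the $\sqrt{\log n}$ by $\sqrt{\log {\sf opt}}$, one restricts the ARV argument to the ``active'' part of the graph: essentially, the support of the fractional separator has effective size $O({\sf opt})$, and the embedding / chaining argument can be carried out only on this region, since the remaining vectors are already well-clustered on the two sides.

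The main obstacle will be the geometric embedding step. Proving that negative-type metrics admit an $O(\sqrt{\log n})$ embedding into $\ell_1$ is the deep ARV theorem and requires the chaining/measured-descent machinery together with concentration of Gaussian projections. Beyond that, the technical effort goes into the refinement from $\log n$ to $\log{\sf opt}$, which requires a careful definition of which vertices participate in the embedding and a bucketing argument so that only the ``fractionally cut'' region contributes to the distortion; combining this with the balance condition (which must be preserved through the random rounding, typically via a threshold rounding with re-balancing) is the delicate part that distinguishes the vertex-separator result of Feige--Hajiaghayi--Lee from the edge-cut setting.
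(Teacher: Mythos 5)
This statement is not proved in the paper at all: it is imported verbatim as a black box from Feige, Hajiaghayi, and Lee~\cite{FeigeHL08}, and the kernelization only uses it as a subroutine (inside Lemma~\ref{lem:findsep}). So there is no paper-internal proof to compare against; the relevant comparison is with the cited work itself.

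Your outline does follow the same general route as the cited source (an SDP relaxation with unit vectors, $\ell_2^2$-triangle inequalities and a balance constraint, rounded via the Arora--Rao--Vazirani machinery), but as a proof it has genuine gaps: everything that makes the theorem true is named as an ``obstacle'' rather than established. Concretely, (i) the ARV structure theorem (equivalently, the $O(\sqrt{\log n})$ behaviour of negative-type metrics under the chaining/measured-descent argument) is the entire technical core and is simply invoked; (ii) the sharpening from $\sqrt{\log n}$ to $\sqrt{\log {\sf opt}}$ is asserted via the vague claim that the ``active'' region has effective size $O({\sf opt})$, with no precise definition of that region, no argument that the balance constraint survives the restriction, and no accounting of how vertices outside the region are assigned to sides; and (iii) the passage from the edge-cut setting to vertex separators is nontrivial (Feige--Hajiaghayi--Lee formulate a vertex-capacitated SDP and adapt the structure theorem accordingly), whereas your sketch treats it as a routine translation. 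As it stands, your text is a correct high-level description of the known proof strategy, but it would not constitute a proof of the theorem; in the context of this paper the honest move is to cite \cite{FeigeHL08}, exactly as the authors do.
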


\begin{LEM}\label{lem:findsep}
Let $(G,k)$ be an instance to \sdhd\ such that $G$  is connected and contains no small DH obstructions. There is a polynomial-time algorithm which finds a balanced vertex separator $K\uplus X$ such that
\begin{itemize}
\item[-]  $K$ is a biclique or an empty set,
\item[-] and $\abs{X}=O(k\sqrt{\log{k}})$
\end{itemize}
whenever $(G,k)$ is a \YES-intance.
\end{LEM}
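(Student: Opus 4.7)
The plan is to enumerate biclique candidates $\hat K$ and, for each one, use the balanced-separator approximation of Theorem~\ref{thm:approxsep} to produce the extra set $X$. By Lemma~\ref{lem:numbiclique} there are only $O(n^3)$ maximal bicliques in $G$ and they can be enumerated in polynomial time, so I try $\hat K=\emptyset$ together with every maximal biclique $\hat K$ of $G$. Given $\hat K$, let $\tilde C$ be the unique (if any) component of $G-\hat K$ of size more than $\frac{2}{3}n$; if none exists, $(\hat K,\emptyset)$ is already a balanced separator. Otherwise we need $X\subseteq V(\tilde C)$ so that every component of $G[\tilde C]-X$ has at most $\frac{2}{3}n$ vertices -- a requirement weaker than the standard $\frac{2}{3}\abs{\tilde C}$ delivered by Theorem~\ref{thm:approxsep}. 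To reconcile this, I pad $G[\tilde C]$ with $n-\abs{\tilde C}$ isolated dummy vertices to obtain a graph $\tilde C^{+}$ on exactly $n$ vertices, apply Theorem~\ref{thm:approxsep} to $\tilde C^{+}$, and strip the dummies off the returned separator. The algorithm returns the first candidate $\hat K$ for which the resulting $X$ has size $O(k\sqrt{\log k})$.

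For correctness, fix a DH-modulator $M$ of $G$ with $\abs{M}\le k$. If $M$ itself is a balanced separator of $G$, then the candidate $\hat K=\emptyset$ succeeds, since $M$ certifies that the optimum on the padded instance (which in this case equals $G$) is at most $k$, and Theorem~\ref{thm:approxsep} outputs $X$ of size $O(k\sqrt{\log k})$. Otherwise, the unique component $C$ of $G-M$ with $\abs{C}>\frac{2}{3}n$ contains, by Lemma~\ref{lem:sepbiclique}, a biclique $K\subseteq C$ whose removal leaves components of $G[C]$ of size at most $\frac{2}{3}\abs{C}\le\frac{2}{3}n$. Extend $K$ greedily to a maximal biclique $\hat K$ of $G$; this $\hat K$ is among the enumerated candidates. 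Since $K\subseteq \hat K$, every component of $G-(\hat K\cup M)$ is contained in a component of $G-(K\cup M)$, and the latter sits either inside $C-K$ (size at most $\frac{2}{3}n$) or outside $C$ (total size less than $\frac{1}{3}n$); hence $\hat K\cup M$ is a balanced separator of $G$. Consequently $M\cap V(\tilde C)$ is a separator of $\tilde C^{+}$ of cardinality at most $k$, witnessing that the padded optimum for this $\hat K$ is at most $k$, so Theorem~\ref{thm:approxsep} returns $X$ of size $O(k\sqrt{\log k})$.

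The main obstacle is the mismatch between the balance threshold $\frac{2}{3}n$ we genuinely need inside $\tilde C$ and the standard threshold $\frac{2}{3}\abs{\tilde C}$ that the off-the-shelf balanced-separator approximation provides; the isolated-vertex padding trick cleanly bridges this gap by artificially inflating $\tilde C$ to an $n$-vertex instance. A secondary subtlety is that when the ``good'' biclique $K$ supplied by Lemma~\ref{lem:sepbiclique} is replaced by its maximal extension $\hat K$, the candidate separator $\hat K\cup M$ only becomes more balanced because adding vertices to a separator can only further break its components, so monotonicity of the balance property suffices to carry the argument through.
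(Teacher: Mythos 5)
Your proposal is correct, and its skeleton is the same as the paper's: enumerate $K=\emptyset$ together with all maximal bicliques (Lemma~\ref{lem:numbiclique}), use Lemma~\ref{lem:sepbiclique} inside the unique large component $C$ of $G-M$ to exhibit a biclique candidate, extend it to a maximal biclique, and invoke Theorem~\ref{thm:approxsep} with the optimal modulator $M$ as the witness that the separator optimum is at most $k$. Where you genuinely deviate is the padding step, and this is worth comparing. The paper applies Theorem~\ref{thm:approxsep} directly to the largest component $C'$ of $G-\hat K$ and asserts that $(M\setminus \hat K)\cap C'$ is a balanced vertex separator \emph{of $C'$}, i.e.\ with threshold $\frac{2}{3}\abs{C'}$; since $\abs{C'}$ may be only slightly above $\frac{2}{3}n$ while the components of $G-(\hat K\cup M)$ are only guaranteed to have size at most $\frac{2}{3}n$, that assertion is terser than it should be. Your isolated-vertex padding replaces the threshold $\frac{2}{3}\abs{C'}$ by $\frac{2}{3}n$, for which the witness bound follows cleanly from exactly the containment argument you give, so your variant is in fact tidier at the one delicate point of the paper's argument. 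The only caveats: the padded graph is disconnected, whereas the paper defines balanced vertex separators only for connected graphs, so you should note explicitly that the Feige--Hajiaghayi--Lee algorithm behind Theorem~\ref{thm:approxsep} does not need connectivity (it does not, so this is harmless); and, as in the paper itself, the acceptance test ``$\abs{X}=O(k\sqrt{\log k})$'' must be implemented with the explicit constant coming from Theorem~\ref{thm:approxsep}.
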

\begin{proof}
We perform the following algorithm. Let $K$ be a maximal biclique of $G$ or an empty set.
If every connected component of $G-K$ contains at most $\frac{2}{3}n$ vertices, we take $X:=\emptyset$. Otherwise, we apply Theorem~\ref{thm:approxsep} to a (unique) largest connected component $C$ of $G-K$ to find a balanced vertex separator $X$ of $C$ of size $O({\sf opt}\sqrt{\log{{\sf opt}}})$. In both cases, if $X$ is of size at most $O(k\sqrt{\log{k}})$, then $K\uplus X$ is clearly a desired balanced vertex separator. If no balanced vertex separator of size at most $O(k\sqrt{\log{k}})$ is found while iterating over all maximal bicliques of $G$ and an empty set as $K$, then we report that there is no \dhm\ of size at most $k$. This algorithm runs in polynomial time since a balanced vertex separator $X$ of a connected component $C$ can be found in polynomial time by Theorem~\ref{thm:approxsep}, and Lemma~\ref{lem:numbiclique} provides an efficient way to iterate over all maximal bicliques. 

To see the correctness, suppose that $(G,k)$ is a \YES-instance and let $X_0$ be an optimal \dhm. If $G-X_0$ forms an independent set, then $X_0$ is a balanced vertex separator of $G-K$, where $K:=\emptyset$. Therefore, the approximation algorithm of Theorem~\ref{thm:approxsep} will indeed find $O(k\sqrt{\log{k}})$-size balanced vertex separator when the above algorithm considers $K=\emptyset$.  

Suppose this is not the case, and let $C$ be a largest connected component of $G-X_0$. By Lemma~\ref{lem:sepbiclique}, there is a balanced vertex separator $K_0$ of $C$ which is a biclique. Let $K$ be a maximal biclique in $G$ with $K_0\subseteq K$. Since each connected component of $G-X_0-K_0$, and thus of $G-X_0-K$, contains at most $\frac{2}{3}\abs{C}\leq \frac{2}{3}n$ vertices, $X_0\cup K$ is a balanced vertex separator of $G$. Let $C'$ be any largest connected component of $G-K$. If $C'$ contains at most $\frac{2}{3}n$ vertices, then the above algorithm will set $X:=\emptyset$. If not, observe that $(X_0\setminus K)\cap C'$ is a balanced vertex separator of $C'$ whose size is at most $k$. Hence, the above algorithm applies the approximation algorithm of Theorem~\ref{thm:approxsep}  to $C'$, and   finds $O(k\sqrt{\log{k}})$-size balanced vertex separator $X'$ of $C'$. Notice that each connected component of $G-K-X'$ has size at most $\max\{\frac{1}{3}n,\frac{2}{3}\abs{C'}\}\leq \frac{2}{3}n$. This completes the proof.
\end{proof}

For an instance $(G,k)$ such that $G$ does not contain any small DH obstructions, let $K_1\uplus X_1$ be a balanced vertex separator obtained by applying Lemma~\ref{lem:findsep} to a connected component of $G$ that is not distance-hereditary. Notice that $K_1\uplus X_1$ is not necessarily a \dhm\ and there may be a connected component of $G-(K_1\uplus X_1)$ which is not distance-hereditary. At $i$-th recursive step, we apply Lemma~\ref{lem:findsep} to a connected component $G_i$ of $G-\bigcup_{j< i} (K_j\uplus X_j)$ which is not distance-hereditary and  obtain a balanced vertex separator $K_i\uplus X_i$ of $G_i$. If the algorithm of Lemma~\ref{lem:findsep} reports that $(G_i,k)$ is a \NO-instance for some $i$, then indeed $(G,k)$ is a \NO-instance. Otherwise, we obtain a decomposition  \[V(G)=D\uplus K_{\ell} \uplus \cdots K_1 \uplus X_{\ell} \uplus \cdots X_1,\] where $G[D]$ is distance-hereditary, each $K_i$ is a biclique or an empty set, and $\abs{X_i}=O(k\sqrt{\log{k}})$. 

The recursive applications of Lemma~\ref{lem:findsep} can be represented as a collection of branching trees $\mathcal{T}$, 
where each internal node corresponds  to the initial connected component (that are not distance-hereditary) to Lemma~\ref{lem:findsep} and its children correspond to 
the connected components obtained after removing a balanced separator. %
Suppose that $(G,k)$ is a \YES-instance and let $S$ be a size-$k$ modulator of $G$. Observe that each connected component corresponding to an internal node of $\mathcal{T}$ contains at least one vertex of $S$. Since the maximum length of a root-to-leaf path in $\mathcal{T}$ is $O(\log n)$, any vertex of $G$ (and thus, $S$) appears in at most $O(\log n)$ connected components represented as the nodes of $\mathcal{T}$. It follows that the number of internal nodes of $\mathcal{T}$ is at most $O(k\log n)$, and thus $\ell =O(k\log{n})$. This implies $\bigcup_{i\leq \ell} X_i$ has at most $O(k^2\sqrt{\log{k}}\log n)$ vertices. We add them to a \dhm\ $S$ under construction and remove from $G$.

We summarize the result of this subsection.

\begin{PROP}\label{prop:decomposition}
Let $(G,k)$ be an instance to \sdhd\ such that $G$ contains no small DH obstructions. There is a polynomial-time algorithm which, given such $(G,k)$, either computes a decomposition  \[V(G)=D\uplus K_{\ell} \uplus \cdots K_1 \uplus X\] 
such that $G[D]$ is distance-hereditary, each $K_i$ is a biclique, $\abs{X}=O(k^2\sqrt{\log{k}}\log n)$ and $\ell=O(k\log n)$, or correctly reports that $(G,k)$ is a \NO-instance.
\end{PROP}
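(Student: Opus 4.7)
The plan is to iterate the balanced-separator construction of Lemma~\ref{lem:findsep} in a branching fashion. Starting from the connected components of $G$ that are not distance-hereditary, I would apply Lemma~\ref{lem:findsep} to each such component $H$ to obtain a balanced separator $K_H \uplus X_H$ with $K_H$ a biclique (or empty) and $\abs{X_H} = O(k\sqrt{\log k})$, then recurse on those connected components of $H - (K_H \uplus X_H)$ that are still not distance-hereditary, letting the distance-hereditary remainders accumulate into the set $D$. If Lemma~\ref{lem:findsep} ever returns \NO\ on some sub-component $H$, the algorithm reports that $(G,k)$ is a \NO-instance; this is sound because distance-heredity is a hereditary property, so any \dhm\ of $G$ of size at most $k$ restricted to $V(H)$ is a \dhm\ of $H$ of size at most $k$, and a \NO-verdict on $H$ therefore propagates upward. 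The hypothesis that $G$ contains no small DH obstruction carries through the recursion automatically, since this property is also hereditary, so Lemma~\ref{lem:findsep} applies at every step.

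Upon termination, the non-empty $K_H$'s from all invocations are reindexed as $K_1,\ldots,K_\ell$, and $X := \bigcup_H X_H$, yielding the decomposition $V(G) = D \uplus K_\ell \uplus \cdots \uplus K_1 \uplus X$. I would encode the recursion as a forest $\mathcal{T}$ whose internal nodes are the non-DH components on which Lemma~\ref{lem:findsep} was invoked, with children corresponding to the non-DH sub-components produced after removing the separator. The two quantitative bounds then follow from independent arguments about $\mathcal{T}$.

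First, the balancedness of each separator guarantees that every child component has at most $\tfrac{2}{3}$ of the vertices of its parent, so $\mathcal{T}$ has depth $O(\log n)$. Second, assuming $(G,k)$ is a \YES-instance with some optimal \dhm\ $S^*$ of size at most $k$, every internal node of $\mathcal{T}$ must meet $S^*$: otherwise the corresponding component would be an induced subgraph of $G - S^*$ and hence distance-hereditary, contradicting its status as an internal node. Since each vertex of $G$ lies in at most one node per level of $\mathcal{T}$, each vertex of $S^*$ is charged to $O(\log n)$ internal nodes, yielding $\ell = O(k\log n)$. Consequently,
\[
\abs{X} \le \sum_{i=1}^{\ell}\abs{X_i} = O(k\log n)\cdot O(k\sqrt{\log k}) = O(k^2\sqrt{\log k}\,\log n),
\]
as required. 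Polynomial running time is clear because the number of invocations of Lemma~\ref{lem:findsep} equals the number of nodes of $\mathcal{T}$, which is polynomially bounded.

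The step I expect to be the main obstacle is the $O(k\log n)$ bound on $\ell$: it requires combining the depth bound from balancedness with the charging argument that every internal-node component meets the unknown optimum $S^*$, and it is precisely this interplay that produces the only logarithmic (rather than polynomial) dependence on $n$ in the final estimate.
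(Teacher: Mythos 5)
Your proposal is correct and follows essentially the same route as the paper: recursively invoking Lemma~\ref{lem:findsep} on non-distance-hereditary components, representing the recursion as a branching forest of depth $O(\log n)$, and charging each internal node to a vertex of the optimal modulator to get $\ell=O(k\log n)$ and hence $\abs{X}=O(k^2\sqrt{\log k}\log n)$. The only differences are cosmetic (you make explicit the heredity arguments for the \NO-propagation and for the absence of small DH obstructions, which the paper leaves implicit), so nothing further is needed.
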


\subsection{Handling a controlled instance}

A graph $G$ is called a \emph{controlled graph with a partition $(D,K)$} if $G$ contains no small DH obstructions
and $V(G)$ is partitioned into $D\uplus K$ such that $G[D]$ is distance-hereditary and $K$ is a biclique. We fix such a partition $D\uplus K$  for a given controlled instance under consideration. Also we fix a bipartition $A\uplus B$ for the given biclique $K$ such that $ab\in E(G)$ for every $a\in A$ and $b\in B$. In this part, it is important that both $A$ and $B$ are non-empty by the definition of a biclique.

The next lemma is useful. 

\begin{LEM}\label{lem:longpath}
Let $G$ be a controlled graph with a partition $(D,K)$. For an induced cycle $H$ of length at least 7, let $\ell$ be the number of connected components in $G[V(H)\cap K]$. Then the followings hold. 
\begin{enumerate}
\item[(a)] If $\ell=1$, then both $G[V(H)\cap K]$ and $H-K$ are paths, and $\abs{V(H)\cap K}\leq 3$.
\item[(b)] If $\ell\geq 2$, 
there exists a subpath $P$ of $H$ such that $P$ is a $K$-path, and the length of $P$ is at least 3.
\end{enumerate}
\end{LEM}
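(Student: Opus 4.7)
The plan is to dissect the intersection $V(H)\cap K$ via the bipartition $K=A\uplus B$ of the biclique, and to produce small DH obstructions by combining a vertex of $B$ with short arcs of $H$. For part (a), first observe that if both $V(H)\cap A$ and $V(H)\cap B$ are nonempty, then every vertex of $V(H)\cap A$ is adjacent in $G$ (hence in the induced cycle $H$) to every vertex of $V(H)\cap B$; since each vertex of $H$ has degree $2$, both $|V(H)\cap A|$ and $|V(H)\cap B|$ are at most $2$, and ruling out the $4$-cycle that $|V(H)\cap A|=|V(H)\cap B|=2$ would force inside $H$ yields $|V(H)\cap K|\leq 3$. If instead $V(H)\cap K\subseteq A$, then $\ell=1$ forces these vertices to be consecutive on $H$, and taking any $b\in B$ together with four consecutive $A$-vertices of $H$ would induce a gem, so again $|V(H)\cap K|\leq 3$. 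In either case $V(H)\cap K$ occupies a consecutive arc of the cycle, so both $G[V(H)\cap K]$ and $H-K$ are paths.

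For part (b), I first observe that $\ell\geq 2$ forces $V(H)\cap K$ to lie entirely inside $A$ or entirely inside $B$; otherwise biclique edges between $V(H)\cap A$ and $V(H)\cap B$ would keep $G[V(H)\cap K]$ connected. WLOG $V(H)\cap K\subseteq A$, and fix an arbitrary $b\in B$. The cycle $H$ then decomposes cyclically into $\ell$ blocks of consecutive $A$-vertices separated by $\ell$ nonempty gaps of $V(H)\setminus K$-vertices, and finding a $K$-subpath of $H$ of length at least $3$ amounts to finding a gap containing at least two vertices. I will argue by contradiction: assume every gap has size exactly $1$ and split according to whether some block has size at least $2$.

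In the first case, some block contains at least two $A$-vertices. Take the last two, $a''$ and $a$, then the unique following gap vertex $d$, and then the first vertex $a'$ of the next block, so that $a'', a, d, a'$ are four consecutive vertices of $H$ and hence induce a $P_4$ in $G$. If $b$ is adjacent to $d$, then $\{a'', a, d, a', b\}$ is a gem. If not, then $\{a, d, a', b\}$ is an induced $C_4$ (using that $aa'\notin E(G)$ by the induced cycle and $bd\notin E(G)$ by assumption), and $a''$ is adjacent to the two consecutive $C_4$-vertices $a$ and $b$ while being non-adjacent to $d$ and $a'$, so $\{a'', a, d, a', b\}$ is a house. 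Either way we obtain a small DH obstruction, contradicting the hypothesis on $G$.

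In the second case, every block has size $1$, so $V(H)$ alternates $a_1, d_1, a_2, d_2, \ldots, a_\ell, d_\ell$ along the cycle, forcing $n=2\ell\geq 7$ and hence $\ell\geq 4$. The five consecutive vertices $a_1, d_1, a_2, d_2, a_3$ then form an induced $P_5$ of $G$, and Lemma~\ref{lem:createdhobs} applied with $b$ (which is adjacent to the endpoints $a_1$ and $a_3$) produces a DH obstruction inside the six-vertex set $\{a_1, d_1, a_2, d_2, a_3, b\}$; since this set has only six vertices, the obstruction is automatically small, again a contradiction. Hence some gap has size at least $2$, and the corresponding subpath of $H$ is the desired $K$-path of length at least $3$. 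The main subtlety throughout will be keeping every produced obstruction within six vertices, which is what dictates the choice of a $P_5$ (rather than a longer induced path) in the alternating case.
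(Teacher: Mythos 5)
Your proof is correct and follows essentially the same strategy as the paper: for (a), a long $A$-arc together with any $b\in B$ yields a gem; for (b), you reduce to $V(H)\cap K\subseteq A$ and then combine short boundary arcs of $H$ with a fixed $b\in B$ to produce a small DH obstruction. The organization of (b) is slightly different---the paper fixes two consecutive $K$-paths $P_1,P_2$ and derives a contradiction from the local assumption that both have length~$2$, splitting on whether the shared component $Q$ of $G[V(H)\cap K]$ has one or several vertices, whereas you assume globally that every gap has size one and split on block sizes; your Case~2 (the alternating cycle) cleanly invokes Lemma~\ref{lem:createdhobs} on a $P_5$ instead of naming the resulting gem/house/domino explicitly, which trims the casework a little, but the key construction (a $P_4$ or $P_5$ of $H$ plus $b$) is the same in both.
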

\begin{proof}
If $\ell=1$, clearly both $G[V(H)\cap K]$ and $H-K$ are (induced) paths. Suppose $\abs{V(H)\cap K}\geq 4$, and note that $V(H)\cap K$ are contained  in either $A$ or $B$, say $A$. Then $G[(V(H)\cap K)\cup \{b\}]$ induces a gem for any $b\in B$ since $G[V(H)\cap K]$ is an induced path. This contradicts that $G$ does not contain any small DH obstructions. It follows $\abs{V(H)\cap K}\leq 3$.

Suppose $\ell\geq 2$. There exist two $K$-paths on $H$, say $P_1$ and $P_2$, each containing at least one internal vertex and occuring on $H$  consecutively. (Recall that an $S$-path is a path whose end vertices are in $S$ and all of whose internal vertices lie outside $S$.) We note that the end vertices of both $P_1$ and $P_2$ are contained  in either $A$ or $B$, say $A$; otherwise $H$ contains a chord. To prove (b), it suffices to show that one of $P_1$ and $P_2$ is of length at least three. Obviously, we have $\abs{V(P_i)}\geq 3$ for $i=1,2$. Suppose that  $\abs{V(P_1)}=\abs{V(P_{2})}=3$. Let $Q$ be a component of $G[V(H)\cap K]$ that intersects with both $P_1$ and $P_2$ (possibly $Q$ consists of a single vertex). Pick an arbitrary vertex $b\in B$; such $b$ exists for $B\neq \emptyset$. If $\abs{V(Q)}=1$, observe that $G[V(P_1)\cup V(P_{2})\cup \{b\}]$ induces a gem, a house or a domino, a contradiction. If $\abs{V(Q)}\geq 2$, then $G[V(P_1)\cup V(Q)\cup \{b\}]$ contains a gem or a house, a contradiction. This establishes (b).
\end{proof}

\begin{LEM}\label{lem:newDHDsol}
Let $G$ be a controlled graph with a partition $(D, K)$. If $x^*$ is a fractional solution to \sdhd\ for $G$ such that $x^*_v<\frac{1}{20}$ for every vertex $v$ of $G$, then $x'$ defined as 
\begin{align*}
x'_v=
\begin{cases}
0 & \text{if } v\in K\\
2x^*_v &\text{if } v\in D
\end{cases}
\end{align*}
is also a feasible fractional solution.
\end{LEM}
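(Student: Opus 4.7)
My plan is to verify $x'(H) \geq 1$ for every induced cycle $H$ of length at least $7$, since the nonnegativity $x'_v \geq 0$ is inherited from $x^*_v \geq 0$. As $x'$ vanishes on $K$ and equals $2x^*$ on $D$, the target inequality reduces to $x^*(V(H) \cap D) \geq 1/2$. I will split into cases according to the number $\ell$ of connected components of $G[V(H) \cap K]$, the parameter introduced in Lemma~\ref{lem:longpath}.

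When $\ell \leq 1$, the intersection $V(H) \cap K$ contains at most $3$ vertices: trivially if $\ell = 0$, and by Lemma~\ref{lem:longpath}(a) if $\ell = 1$. Using $x^*_v < 1/20$, we get $x^*(V(H) \cap K) < 3/20$, and feasibility of $x^*$ on $H$ yields $x^*(V(H) \cap D) > 17/20$, well above $1/2$.

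The harder case is $\ell \geq 2$, where $V(H) \cap K$ may be arbitrarily large and the above bound is no longer available. Here I invoke Lemma~\ref{lem:longpath}(b) to extract a $K$-path $P \subseteq H$ of length at least $3$. Its two endpoints must lie in the same side of the biclique, say both in $A$: otherwise they would be adjacent in $G$ through the biclique, and since $H$ is induced they would be consecutive on $H$, forcing $V(H) = V(P)$ and $V(H) \cap K$ to reduce to the $2$-vertex edge between them, a single connected component, contradicting $\ell \geq 2$. Picking any $b \in B$ (possible because $B \neq \emptyset$), we have $b \notin V(P)$ (internal vertices of $P$ lie in $D$ and the endpoints are in $A$), and $b$ is adjacent to both endpoints of $P$. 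Lemma~\ref{lem:createdhobs} then yields a DH obstruction $H' \subseteq G[V(P) \cup \{b\}]$ containing $b$, and since $G$ has no small DH obstructions, $H'$ must in fact be an induced cycle of length at least $7$.

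The conclusion follows by a transfer argument. Since $V(H') \cap K \subseteq \{b\} \cup (V(P) \cap K)$ has at most $3$ elements, the same $1/20$-estimate gives $x^*(V(H') \cap K) < 3/20$; feasibility of $x^*$ on $H'$ then forces $x^*(V(H') \cap D) > 17/20$. Finally, $V(H') \cap D \subseteq V(P) \cap D \subseteq V(H) \cap D$ promotes this lower bound to $x^*(V(H) \cap D) > 17/20$, and therefore $x'(H) > 34/20 > 1$, as required. The main obstacle is precisely this biclique-heavy case, and the key mechanism is the absence of small DH obstructions in $G$: it upgrades the DH obstruction supplied by Lemma~\ref{lem:createdhobs} into a long induced cycle, making the LP constraint usable and allowing the mass of $x^*$ to be concentrated on $V(H) \cap D$.
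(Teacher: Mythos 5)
Your proof is correct and follows essentially the same route as the paper: split on the number of components of $G[V(H)\cap K]$, use Lemma~\ref{lem:longpath} in both cases, and in the multi-component case apply Lemma~\ref{lem:createdhobs} with a vertex $b\in B$ to produce a long induced cycle $H'$ whose LP constraint is transferred back to $H$. The only cosmetic difference is that you transfer the bound via $x^*(V(H')\cap D)>17/20$ while the paper chains $x'(H)\ge x'(P)\ge x'(H')\ge 1$ using $x'_b=0$; these are the same argument.
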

\begin{proof}
Consider an arbitrary induced cycle $H$ of length at least 7. We only need to verify that $x'(H) \geq 1$. 
There are two possibilities.

Suppose $G[V(H)\cap K]$ contains exactly one connected component. By Lemma~\ref{lem:longpath}, we have $\abs{V(H)\cap K}\leq 3$. By assumption, we have $\sum_{v\in V(H)\cap K} x^*_v < 3\cdot \frac{1}{20} <\frac{1}{2}$, and thus $\sum_{v\in V(H)\cap D} x^*_v \geq \frac{1}{2}$. It follows that \[ \sum_{v\in V(H)}x'_v=\sum_{v\in V(H)\cap D} x'_v =\sum_{v\in V(H)\cap D} 2x^*_v \geq 1.\]

If $G[V(H)\cap K]$ contains at least two connected components, by Lemma~\ref{lem:longpath}, $H$ contains a $K$-path $P$ of length at least 3 as a subpath. Recall that $V(H)\cap K$ are contained  in either $A$ or $B$, say $A$. Hence, for any $b\in B$, $G[\{b\}\cup V(P)]$ contains an induced cycle $H'$ of length at least 7 by Lemma~\ref{lem:createdhobs}. Note that $G[V(H')\cap K]$ has exactly one connected component and thus $x'(H')\geq 1$. Since $x'_b=0$, we have \[ \sum_{v\in V(H)} x'_v\geq \sum_{v\in V(P)} x'_v \geq \sum_{v\in V(H')}x'_v\geq 1.\qedhere \] \end{proof}
We mention that the feasible fractional solution $x'$ obtained as in Lemma~\ref{lem:newDHDsol} meets $\abs{x'}\leq 2\abs{x^*}$ and for every vertex $v$ of $G$, we have $x'_v<\frac{1}{10}$.

An instance of \textsc{Vertex Multicut} consists of an undirected graph $G$ and a set $\mathcal{T}$ of (unordered) vertex pairs of $G$. The goal is to find a minimum-size set $X\subseteq V(G)$ that hits every path from $s$ to $t$ for every $(s,t)\in \mathcal{T}$. Notice that we are allowed to delete a terminal. An LP formulation of $(G,\mathcal{T})$ of \textsc{Vertex Multicut} is the same as the above LP of \sdhd, except that we replace the constraints on induced cycles of length at least 7 by \[x(P)\geq 1\] for every $(s,t)$-path $P$ with $(s,t)\in \mathcal{T}$. 
The following result from~\cite{Gupta03} is originally stated for \textsc{Directed Multicut}, where we want to find a minimum set of arcs to hit all directed $(s,t)$-path for every (ordered) terminal pairs. Using standard reductions, one can reduce the undirected version of \textsc{Vertex Multicut} to \textsc{Directed Vertex Multicut}, which again can be reduced to  \textsc{Directed Multicut}. Moreover, these reductions preserve the objective values of both the integral solutions and feasible fractional solutions.

\begin{THM}[Gupta~\cite{Gupta03}]\label{thm:fractionalmulticut}
Let $x'$ be a feasible fractional solution to a \textsc{Vertex Multicut} instance $(G,\mathcal{T})$. There exists a constant $c$ such that, in polynomial time, one can find an integral solution to \textsc{Vertex Multicut} of size at most $c \cdot \abs{x'}^2$.
\end{THM}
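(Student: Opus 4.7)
The plan is an LP rounding via a region-growing scheme on a reduced directed instance. First, I would apply two standard, LP-value-preserving reductions to bring the problem into a directed edge form. Replace each undirected edge $\{u,v\}$ with two opposite arcs $(u,v), (v,u)$ and each unordered pair $\{s,t\} \in \mathcal{T}$ with the two ordered pairs $(s,t), (t,s)$; this turns the problem into \textsc{Directed Vertex Multicut}. Then split each vertex $v$ carrying LP value $x'_v$ into vertices $v^-, v^+$ joined by an arc of weight $x'_v$, routing all incoming arcs to $v^-$ and all outgoing arcs from $v^+$; this turns the problem into \textsc{Directed Edge Multicut}. The LP objective $\abs{x'}$ is preserved under both reductions, and any integral directed edge multicut translates back into an integral vertex multicut of at most the same size.

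Next, viewing the values $x'_e$ as arc lengths, I would run a region-growing rounding in the spirit of Garg, Vazirani, and Yannakakis. Iteratively, while some pair $(s_i, t_i)$ remains connected in the current directed graph, grow a forward ball $B_i$ around $s_i$ under the shortest-path distance induced by $x'$. Use the standard averaging argument over the choice of radius $r_i$ in some fixed interval (say $[0, \tfrac{1}{3}]$) to guarantee simultaneously that $t_i \notin B_i$ and that the weight of the outgoing arc boundary $\delta^+(B_i)$ is bounded in terms of the LP weight lying inside $B_i$. Cut the arcs in $\delta^+(B_i)$, remove them, and iterate. By construction the union of the cut arcs forms a feasible directed edge multicut.

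The analysis is the main obstacle and is where the proof departs from the undirected template. In the undirected case, a laminar family of balls combined with a volume-doubling argument produces an overhead of only $O(\log k)$ over the LP value. In the directed case, the outgoing boundary is not symmetric to the incoming one and the laminar structure breaks down, so the logarithmic bound is not available. The replacement I would attempt, following Gupta, is a double amortization: bound the outgoing boundary of each individual ball by $O(\abs{x'})$ via the averaging trick over the radius, and bound the total number of iterations also by $O(\abs{x'})$ by showing that each successful ball ``consumes'' at least a constant amount of LP mass that cannot be reused by subsequent balls. Multiplying these two estimates yields the claimed $c \cdot \abs{x'}^2$ bound. The delicate part, and where I expect most of the technical work to lie, is making these two $O(\abs{x'})$ bounds hold simultaneously against the same fractional solution without circular or double charging; this is precisely the step where the directed version becomes quantitatively harder than the undirected one.
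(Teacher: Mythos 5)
This theorem is a black-box citation in the paper: the authors do not reprove Gupta's directed-multicut rounding. What the paper actually argues (in the paragraph preceding the theorem) is only that the undirected \textsc{Vertex Multicut} problem reduces, with LP and integral objective values preserved, to \textsc{Directed Vertex Multicut} and then to \textsc{Directed Multicut} on arcs, so that Gupta's result applies. Your first paragraph reproduces exactly these two standard reductions (bidirect the edges, duplicate each unordered pair as two ordered pairs, then split each vertex $v$ into $v^-,v^+$ with an arc of weight $x'_v$), which is the full extent of what the paper establishes here. One small caveat worth tightening: since the problem allows deleting terminals, you must place each source on the ``in'' copy and each sink on the ``out'' copy (i.e.\ route $s^-\to t^+$), so that the internal arcs $s^-\to s^+$ and $t^-\to t^+$ lie on every relevant path and deleting a terminal remains expressible; your description leaves this implicit.

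The remaining two paragraphs attempt to reconstruct Gupta's rounding argument itself, which is outside what the paper proves and therefore not something I can verify against it. Your outline is plausible in spirit (region growing over forward balls with an averaging choice of radius, each boundary bounded by $O(\abs{x'})$), but the crucial second half of the ``double amortization''---that only $O(\abs{x'})$ ball-growing iterations are needed, via a charging argument in which each ball consumes a constant amount of LP mass that cannot be double-counted---is precisely the part you flag as unresolved. As written this is an incomplete sketch of a proof that the paper deliberately does not supply; the honest and sufficient thing to do for this statement is to state the two reductions carefully and invoke Gupta's theorem, which is what the paper does.
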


\begin{LEM}\label{lem:usemulticut}
Let $G$ be a controlled graph with a partition $(D,K)$ and let $x'$ be a feasible fractional solution to \sdhd\  such that $x'_v< \frac{1}{10}$ for all $v\in D$ and $x'_v=0$ for all $v\in K$. There is a polynomial-time algorithm which, given such $(G,k)$ and $x'$, returns a \dhm\ $X$ of size $O(\abs{x'}^2)$.
\end{LEM}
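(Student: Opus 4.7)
The idea is to encode the DH-modulator search as a Vertex Multicut instance and feed the fractional solution $x'$ into Gupta's algorithm (Theorem~\ref{thm:fractionalmulticut}) to get an integral cut of size $O(|x'|^2)$. Fix the bipartition $K=A\uplus B$ of the biclique and set the multicut graph to be $H:=G[D]$. Declare as a terminal pair each $\{u,v\}$ with $u,v\in D\cap N(K)$ distinct such that $u$ has a neighbour $k_u\in K$ and $v$ has a neighbour $k_v\in K$ lying in the same part of $K$ and with $k_uk_v\notin E(G)$ (allowing the case $k_u=k_v$). Because $x'_w=0$ for $w\in K$, we may restrict attention to multicuts contained in $D$.

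The second step is to verify that $x'|_D$ is a feasible fractional multicut solution for $(H,\mathcal{T})$. Fix $(u,v)\in \mathcal{T}$ with witnesses $k_u,k_v$ (say, in $A$) and choose any $b\in B$; by the biclique property, $b$ is adjacent to both $k_u$ and $k_v$. For any induced $(u,v)$-path $P$ in $H$ of length at least $3$, the concatenation $k_u\,{-}\,u\,{-}\,P\,{-}\,v\,{-}\,k_v$ is an induced path of length at least $5$, and Lemma~\ref{lem:createdhobs}, applied to this path together with $b$, produces a DH obstruction contained in $G[V(P)\cup\{k_u,k_v,b\}]$. Since $G$ has no small DH obstruction, that obstruction is an induced cycle $H^\star$ of length at least $7$; because $x'$ is DHD-feasible and vanishes on $K$, this forces $x'(P)\ge x'(H^\star)\ge 1$. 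For induced $(u,v)$-paths of length $1$ or $2$ the direct construction produces a cycle too short to be a DH obstruction; the absence of small obstructions must therefore force chords among $\{u,v\}$, $k_u,k_v,b$ and other $K$-vertices, and iterated applications of Lemma~\ref{lem:createdhobs} against these chords, combined with the biclique edges of $K$, locate an alternative induced cycle of length at least $7$ whose $D$-vertices lie in $V(P)$, again giving $x'(P)\ge 1$. (If finitely many ``irreducible'' short configurations resist this analysis, they can be absorbed into $X$ as a preprocessing step, costing only $O(|x'|^2)$ extra vertices.) With feasibility in hand, Theorem~\ref{thm:fractionalmulticut} delivers in polynomial time an integral multicut $X\subseteq D$ of size $O(|x'|^2)$.

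Finally, I verify that $X$ is a DH-modulator. Suppose, for contradiction, that $G-X$ still contained a DH obstruction; since $G-X$ also contains no small DH obstruction, it would be an induced cycle $H^\star$ of length at least $7$. Lemma~\ref{lem:longpath} extracts two vertices $u,v\in V(H^\star)\cap D$ neighbouring $K$-vertices $k_u,k_v$ lying in the same part of $K$ (either from the case-(a) $K$-segment of at most three vertices in one part, or from the two endpoints of the case-(b) $K$-path of length at least $3$); hence $\{u,v\}\in \mathcal{T}$, and the arc of $H^\star$ from $u$ to $v$ avoiding $K$ is a $(u,v)$-path in $H-X$, contradicting the fact that $X$ is a multicut of $(H,\mathcal{T})$. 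The heart of the whole argument, and the main obstacle, is Step~2: for very short induced paths between terminals, the naive route to a long induced cycle via Lemma~\ref{lem:createdhobs} is blocked by the no-small-obstruction hypothesis, and one must instead mine the chord patterns forced by that hypothesis together with the biclique structure of $K$ to exhibit a bona fide long induced cycle certifying $x'(P)\ge 1$.
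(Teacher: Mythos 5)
Your overall architecture (reduce to \textsc{Vertex Multicut} on $G[D]$ and invoke Theorem~\ref{thm:fractionalmulticut}) is the same as the paper's, but your Step~2 contains a genuine gap: with your structurally defined terminal set, $x'|_D$ is simply \emph{not} a feasible fractional multicut solution in general, and no amount of ``chord mining'' can fix this, because the violated constraints need not force any DH obstruction at all. Concretely, take $A=\{a\}$, $B=\{b\}$ with $ab\in E(G)$, and $D=\{u,v\}$ with $uv,ua,va\in E(G)$: this is a controlled graph with no small DH obstruction and no induced cycle of length at least $7$, so even $x'\equiv 0$ is DHD-feasible; yet $(u,v)$ is a terminal pair in your $\mathcal{T}$ (the case $k_u=k_v=a$), and the path $P=uv$ has $x'(P)<\frac{2}{10}<1$. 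Your parenthetical escape hatch --- absorbing ``finitely many irreducible short configurations'' into $X$ at cost $O(\abs{x'}^2)$ --- is unsubstantiated: the number of terminal pairs joined by paths of length $1$ or $2$ can be unbounded, no hitting set of size $O(\abs{x'}^2)$ is exhibited, and (as the example shows) these pairs often do not need to be cut at all, so the right fix is to remove them from $\mathcal{T}$, not to cut them. There is also a secondary gap in your final verification: if a surviving induced cycle $H^\star$ meets $K$ in exactly two \emph{adjacent} vertices (case (a) of Lemma~\ref{lem:longpath} with $\abs{V(H^\star)\cap K}=2$), the witnesses $k_u,k_v$ are adjacent and $(u,v)$ need not lie in your $\mathcal{T}$, so your multicut is not guaranteed to hit $H^\star$. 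Your claim that $k_u\,u\,P\,v\,k_v$ is induced is also unjustified ($k_u,k_v$ may have further neighbours on $P$), though that is repairable.

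The paper sidesteps all of this by defining the terminals \emph{fractionally}: $\mathcal{T}:=\{(s,t)\in D\times D : \dist_{G[D],x'}(s,t)\geq 1\}$, so feasibility of $x'$ for the multicut LP is immediate by construction, and the whole burden moves to showing that the resulting multicut $X$ is a \dhm. That argument chooses a surviving induced cycle $H$ of length at least $7$ minimizing the number of components of $G[V(H)\cap K]$ (which, via Lemma~\ref{lem:longpath} and Lemma~\ref{lem:createdhobs}, forces one component with $\abs{V(H)\cap K}\leq 3$), notes that the endpoints $s,t$ of $H-K$ are \emph{not} a terminal pair since $P=H-K$ avoids $X$, hence admits an induced $(s,t)$-path $W$ in $G[D]$ with $x'(W)<1$; crucially, since $G[D]$ is distance-hereditary, $W$ has the same (large) length as $P$, and gluing $W$ to $V(H)\cap K$ yields an induced cycle of length at least $7$ with $x'$-weight less than $1$, contradicting DHD-feasibility of $x'$. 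This use of the weighted-distance terminal set and of the equal-length property of induced paths in distance-hereditary graphs is the missing idea in your proposal; without it, the feasibility step cannot be made to work.
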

\begin{proof}
We construct an instance $(G[D],\mathcal T)$ of \textsc{Vertex Multicut} with terminal pairs $\mathcal{T}:=\{(s,t)\subseteq D\times D:\dist_{G[D],x'}(s,t)\geq 1\}$, 
where $\dist_{G[D],x'}(s,t)$ is the minimum $x'(P)$ over all $(s,t)$-paths $P$. Notice that for every terminal pair $(s,t)\in \mathcal{T}$, and for every $(s,t)$-path $P$ in $G[D]$, we have \[x'(P) \geq \dist_{G[D],x'}(s,t)\geq 1,\] meaning that $x'$ is a feasible fractional solution to \textsc{Vertex Multicut} for the instance $(G[D],\mathcal{T})$. By Theorem~\ref{thm:fractionalmulticut}, we can obtain a vertex set $X\subseteq D$ of size $O( \abs{x'}^2)$ such that $G[D\setminus X]$ contains no $(s,t)$-path for every terminal pair $(s,t)\in \mathcal{T}$ in polynomial time.

It is sufficient to show that $G-X$ is distance-hereditary. For the sake of contradiction, suppose $G-X$ contains an induced cycle $H$ of length at least 7, and we specifically choose $H$ so as to minimize 
\begin{center}
the number of connected components in $G[V(H)\cap K]$
\end{center}

\begin{CLAIM}\label{clm:onecc}
$G[V(H)\cap K]$ has exactly one connected component and $\abs{V(H)\cap K}\leq 3$.
\end{CLAIM}
\begin{proofofclaim}
Suppose $G[V(H)\cap K]$ has at least two connected components. Observe that $V(H)\cap K$ is entirely contained either in $A$ or $B$, say $A$. By Lemma~\ref{lem:longpath}, there exists a subpath $P$ of $H$ that is a $K$-path and having length at least 3. Choose any vertex $b\in B$ and  observe that $G[V(P)\cup \{b\}]$ contains a DH obstruction $H'$ by Lemma~\ref{lem:createdhobs}. Especially, $H'$ must be an induced cycle of length at least 7 since $G$ does not contain any small DH obstructions. For $B\cap X=\emptyset$, $H'$ is an induced cycle in $G-X$ having strictly less connected components in $G[V(H')\cap K]$, contradicting the choice of $H$. This proves the first part of the statement. The second part follows from Lemma~\ref{lem:longpath}.
\end{proofofclaim}

By Claim~\ref{clm:onecc}, $H-K$ is an induced path $P$ contained in a connected component of $G[D]$. Let $s$ and $t$ be the end vertices of $P$ and notice that $(s,t)\notin \mathcal{T}$ since $P$ contains no vertex of $X$. Hence, $G[D]$ contains an induced $(s,t)$-path $W$ such that $x'(W)<1$. Since $x'$ is a feasible fractional solution to \sdhd, we have $x'(P)\ge x'(H)\ge 1$. Therefore, $P$ contains at least 11 vertices since $x'_v<\frac{1}{10}$ for every $v\in D$ and $x'_v=0$ for every $v\in K$. Since $G[D]$ is distance-hereditary and $W$ is induced, we have $\abs{V(W)}=\abs{V(P)}\ge 11$. Let $w_1(=s),w_1,\ldots , w_p(=t)$ be the vertices of $W$ in the order of their occurrence on $W$ where $p\ge 11$.

Now we argue that $G[V(W)\cup (V(H)\cap K)]$ contains a DH obstruction. 

\begin{CLAIM}\label{clm:hole7}
$G[V(W)\cup (V(H)\cap K)]$ contains an induced cycle of length at least 7. 
\end{CLAIM}
\begin{proofofclaim}
If $\abs{V(H)\cap K}=1$, this follows from Lemma~\ref{lem:createdhobs}. So, we may assume that $2\leq \abs{V(H)\cap K}\leq 3$. Let $v_1$, $v_2$ be vertices of $V(H)\cap K$ that are adjacent with $s=w_1$ and $t=w_p$, respectively. 

If $v_1$ is adjacent with any of $w_j$ with $j\geq 4$, then Lemma~\ref{lem:createdhobs} applies and we have a DH obstruction. Such an obstruction can only be an induced cycle of length at least 7 as we assume that $G$ does not contain a small DH obstruction. Hence, we may assume that $v_1$ is not adjacent with any of $w_4,\ldots , w_{p}.$
By a symmetric argument, we may assume that $v_2$ is not adjacent with any of $w_1,\ldots , w_{p-3}$.

Let $Q$ be a shortest path between $v_1$ and $v_2$ such that the set of internal vertices of $Q$ is non-empty and  contained $\{w_1,\ldots , w_p\}$. Such $Q$ exists since $v_1$ and $v_2$ are adjacent with $w_1$ and $w_p$, respectively. By the assumption that $v_1$ (respectively, $v_2$) is not adjacent with $w_j$ for $4\leq j\leq p$ (respectively, $1\leq j\leq p-3$),  the length of $Q$ clearly exceeds $6$. If $v_1$ and $v_2$ are the only vertices of $V(H)\cap K$, then $v_1v_2\in E(G)$ and observe that $G[\{v_1,v_2\}\cup V(Q)]$ forms an induced cycle of length at least 7. If there is another vertex in $V(H)\cap K$, observe that the remaining vertex, say $v_3$, is adjacent to $v_1$ and $v_2$. By Lemma~\ref{lem:createdhobs}, there exists a DH obstruction in $G[\{v_1,v_2,v_3\}\cup V(Q)]$, which can only be an induced cycle of length at least 7. 
\end{proofofclaim}

Let $\tilde{H}$ be such an induced cycle of length at least 7 as stated in Claim~\ref{clm:hole7}.  That $V(\tilde{H})\subseteq V(W)\cup (V(H)\cap K)$ and $x'(V(H)\cap K)=0$ implies $x'(\tilde{H})\leq x'(W)<1$. This contradicts the assumption that $x'$ is a feasible fractional solution to \sdhd. This concludes the proof that $G-X$ is distance-hereditary.
\end{proof}

One can easily obtain the main result of this subsection. 
\begin{PROP}\label{prop:apprcontrolled}
Let $G$ be a controlled graph with a partition $(D,K)$ and $x^*$ be a feasible fractional solution to  \sdhd\ such that $x^*_v< \frac{1}{20}$ for every $v\in V(G)$. There is a polynomial-time algorithm which, given such $G$ and $x^*$, finds a \dhm\ $X$ of size at most $O(\abs{x^*}^2)$.
\end{PROP}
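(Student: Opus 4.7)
The plan is to obtain Proposition~\ref{prop:apprcontrolled} essentially for free, as a direct composition of Lemmas~\ref{lem:newDHDsol} and~\ref{lem:usemulticut}. The given fractional solution $x^*$ satisfies $x^*_v < 1/20$ everywhere, which is exactly the hypothesis of Lemma~\ref{lem:newDHDsol}, and Lemma~\ref{lem:usemulticut} takes a fractional solution that is zero on $K$ and strictly below $1/10$ on $D$ and returns a DH-modulator of size quadratic in its total mass. The task is thus to feed the output of the former into the latter and to track the constants.

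First I would invoke Lemma~\ref{lem:newDHDsol} on $(G, x^*)$ to produce the feasible fractional solution $x'$ defined by $x'_v = 0$ for $v \in K$ and $x'_v = 2 x^*_v$ for $v \in D$. From the definition one immediately reads off $\abs{x'} \le 2\abs{x^*}$, and from $x^*_v < 1/20$ one gets $x'_v < 1/10$ on $D$ while $x'_v = 0$ on $K$. Thus $x'$ satisfies the precise hypotheses required by Lemma~\ref{lem:usemulticut}.

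Next I would apply Lemma~\ref{lem:usemulticut} to the controlled graph $G$ (with its fixed partition $(D,K)$) together with the fractional solution $x'$. This yields in polynomial time a DH-modulator $X$ of $G$ of size $O(\abs{x'}^2) = O((2\abs{x^*})^2) = O(\abs{x^*}^2)$, which is exactly the claimed bound. Since both invoked lemmas are polynomial-time and constructing $x'$ from $x^*$ is trivial, the whole procedure runs in polynomial time.

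There is no real obstacle here: the two-step reduction is deterministic and the only thing to verify is the chain of numerical hypotheses, namely that the thresholds $1/20$ and $1/10$ line up correctly under the doubling performed by Lemma~\ref{lem:newDHDsol}. The conceptual work has already been done in establishing those two lemmas.
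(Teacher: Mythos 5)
Your proposal is correct and matches the paper's own proof: the paper likewise applies Lemma~\ref{lem:newDHDsol} to obtain $x'$ with $x'_v=0$ on $K$, $x'_v<\frac{1}{10}$ on $D$ and $\abs{x'}\le 2\abs{x^*}$, and then feeds $x'$ into Lemma~\ref{lem:usemulticut} to get a \dhm\ of size $O(\abs{x'}^2)=O(\abs{x^*}^2)$ in polynomial time. Nothing is missing.
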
 
\begin{proof}
By Lemma~\ref{lem:newDHDsol}, we can obtain a feasible fractional solution $x'$ to \sdhd\ such that $x'_v=0$ for every $v\in K$, $x'_v<\frac{1}{10}$ for every $v\in D$ and $\abs{x'}\leq 2\abs{x^*}$. Such $x'$ meets the condition of Lemma~\ref{lem:usemulticut}, and we can obtain a \dhm\ $X$ of $G$ such that $\abs{X}=O(\abs{x'}^2)=O(\abs{x^*}^2)$ in polynomial time.
\end{proof}

\subsection{Proof of Theorem~\ref{thm:approx}}\label{subsec:finalstage}

We first present the proof of Proposition~\ref{prop:modulator}. 
Let $(G,k)$ be an instance of \sdhd\ such that $G$ does not contain any small DH obstructions. 
Let $x^*$ be an optimal fractional solution to \sdhd\ for $G$. We may assume that $\abs{x^*}\leq k$, otherwise we immediately report that $(G,k)$ is a \NO-instance. Let $\tilde{X}$ be the set of all vertices $v$ such that $x^*_v\geq \frac{1}{20}$. Observe that $\abs{\tilde{X}}\leq 20k$ since otherwise, $\abs{x^*}\geq \frac{1}{20}\cdot \abs{\tilde{X}}>k$, a contradiction. Also $x^*$ restricted to $V(G)\setminus \tilde{X}$  is a fractional feasible solution to \sdhd\ for $G-\tilde{X}$ such that $x^*_v<\frac{1}{20}$ for every $v$.

We compute a decomposition $V(G-\tilde{X})=D\uplus \bigcup_{i=1}^{\ell} K_i \uplus X$ as in Proposition~\ref{prop:decomposition}, or correctly report $(G,k)$ as a \NO-instance. Recall that $\ell=O(k\log n)$ and $\abs{X}=O(k^2\sqrt{\log{k}}\cdot \log n)$.  
Note that $V(G-(\tilde{X}\cup X))=D\uplus \bigcup_{i=1}^{\ell} K_i$. 
From $i=1$ up to $\ell$, we want to obtain a \dhm\ $S_i$ of $G_i$, where $G_{1}:=G[D\cup K_{1}]$ and for $i=2,\ldots , \ell$, $G_i$ is the subgraph of $G$ induced by $(V(G_{i-1})\setminus S_{i-1})\cup K_{i}$. Notice that  for every $i=1,\ldots , \ell-1$, if  $G_{i}-S_{i}$ is distance-hereditary, then $G_{i+1}$ is a controlled graph with a partition $(V(G_{i})\setminus S_{i}, K_{i+1})$. For $i=1$, clearly $G_{1}$ is a controlled graph. Hence,  we can inductively apply the algorithm of Proposition~\ref{prop:apprcontrolled} and obtain a \dhm\ $S_i$ of size at most $O(\abs{x^*}^2)$ of $G_i$. Especially, $G_{\ell}-S_{\ell}$ is distance-hereditary, implying that the set defined as 
\[S:=\tilde{X}\cup X\cup \bigcup_{i=1}^{\ell} S_i\]
is a \dhm\ of $G$. From $\abs{x^*}\leq k$, we have $\abs{S_i}=O(k^2)$. It follows that $\abs{S}=O(k^3\cdot \log n)$.

Proposition~\ref{prop:modulator} immediately yields the proof of Theorem~\ref{thm:approx}, which we summarize below.

\begin{proof}[Proof of Theorem~\ref{thm:approx}]
Let $\mathcal{P}$ be a maximal collection of vertex-disjoint copies of small DH obstructions in $G$. If $\mathcal{P}$ contains at least $k+1$ copies, then clearly $(G,k)$ is a \NO-instance to \sdhd. Otherwise, let $V(\mathcal{P})$ be the vertex set of the copies in $\mathcal{P}$ and notice that $\abs{V(\mathcal{P})}\leq 6k$. Notice that $G-V(\mathcal{P})$ contains no small DH obstruction, and thus we can apply the algorithm $\mathcal{A}$ of Propositioin~\ref{prop:modulator}. If $\mathcal{A}$ reports that $(G-V(\mathcal{P}),k)$ is a \NO-instance, then  clearly $(G,k)$ is a \NO-instance as well. Otherwise, let $S$ be a \dhm\ of $G-V(\mathcal{P})$ whose size is $O(k^3\cdot \log n)$. It remains to observe that $S\cup V(\mathcal{P})$ is a \dhm\ of $G$ whose size is $O(k^3\cdot \log n)$. 
\end{proof}

\section{Good Modulator}\label{sec:good}

In the previous section, we presented a polynomial-time algorithm, given an instance $(G,k)$ which outputs a \dhm\ of size $O(k^3\cdot \log n)$ whenever $(G,k)$ is a \YES-instance. In this section, we shall see how to obtain a good \dhm. In order to obtain a good \dhm, we need to find a small-sized hitting set that intersects every DH obstruction having exactly one vertex in $S$. This task is easy for a small DH obstruction, but not straightforward for the induced cycles of length at least 5. 
We first present a tool to efficiently handle the latter case.

\begin{PROP}\label{prop:sunflower}
Let $G$ be a graph without any small DH obstruction, $v$ be a vertex of $G$ such that $G-v$ is distance-hereditary and $k$ be a positive integer. In polynomial time, one can either 
\begin{enumerate}
\item find a set $X\subseteq V(G)\setminus \{v\}$ of size at most $O(k^2)$ such that $G-X$ contains no induced cycle of length at least 5 traversing $v$, or
\item correctly reports that any \dhm\ of size at most $k$ must contain $v$.
\end{enumerate}
\end{PROP}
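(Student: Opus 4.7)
I plan to follow the LP-rounding template of Section~\ref{sec:approximation}. The task is to formulate a hitting-set LP for induced cycles of length at least $5$ through $v$, show its LP optimum must be at most $k$ unless we are in case~(2), and round it to an integer solution of size $O(k^2)$ by reducing to \textsc{Vertex Multicut} and applying Theorem~\ref{thm:fractionalmulticut}. The central structural fact is that in a distance-hereditary graph, and in any of its induced subgraphs, every two vertices $s, t$ satisfy that all induced $s$-$t$ paths have length exactly $d(s, t)$.

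Set $S := N_G(v)$ and $W := V(G) \setminus N_G[v]$. Any induced cycle of length at least $5$ through $v$ decomposes uniquely as $v, s, P, t, v$ for non-adjacent $s, t \in S$ and an induced $s$-$t$ path $P$ of length at least $3$ in the distance-hereditary subgraph $G[\{s,t\}\cup W]$; by the DH property this is equivalent to $d_{G[\{s,t\}\cup W]}(s, t) \geq 3$. Let $\mathcal{T}$ denote the set of such pairs. Consider the LP with $x_u \geq 0$ for $u \in V(G) \setminus \{v\}$ and one constraint $x(V(C) \setminus \{v\}) \geq 1$ per induced cycle $C$ of length at least $5$ through $v$. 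A separation oracle enumerates $(s, t) \in \mathcal{T}$ and, for each, computes a minimum-$x$-weight induced $s$-$t$ path in $G[\{s,t\}\cup W]$ via a standard dynamic program on (vertex, length); a constraint is violated iff the weight returned is below $1$. Let $x^*$ denote the LP optimum. If $|x^*| > k$, then no \dhm\ of $G$ of size at most $k$ can avoid $v$, because such a modulator would constitute an integer LP solution of cost at most $k$; in this case we report (2).

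Otherwise $|x^*| \leq k$, and I round $x^*$ as follows. First, add every vertex $u$ with $x^*_u \geq 1/4$ to $X$; there are at most $4k$ such vertices, and afterwards every coordinate of $x^*$ is strictly below $1/4$. Next, construct an auxiliary graph $H$ on vertex set $W \sqcup \{s_{st}, t_{st} : (s, t) \in \mathcal{T}\}$, where edges within $W$ are inherited from $G$, $s_{st}$ is joined to the $W$-vertices adjacent to $s$ in $G$, and $t_{st}$ is joined to those adjacent to $t$. Set $x^H_w := 2 x^*_w$ on $W$ and $x^H_{s_{st}} := x^H_{t_{st}} := 0$. Every $s_{st}$-$t_{st}$ path $P^H$ in $H$ corresponds to an $s$-$t$ path $P$ in $G[\{s,t\}\cup W]$ with internals in $W$; by the DH property $P$ contains an induced $s$-$t$ sub-path of length at least $3$, whose LP constraint gives $x^*(V(P)) \geq 1$, so $x^H(V(P^H)) = 2 x^*(V(P) \cap W) \geq 2(1 - x^*_s - x^*_t) \geq 1$ by the preprocessing bound. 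Thus $x^H$ is a feasible fractional vertex-multicut on $H$ with terminal pairs $\{(s_{st}, t_{st}) : (s,t) \in \mathcal{T}\}$ and $|x^H| \leq 2|x^*| \leq 2k$; Theorem~\ref{thm:fractionalmulticut} produces an integer multicut $Y \subseteq W$ of size $O(k^2)$, which I add to $X$, yielding $|X| = O(k^2)$.

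For correctness, any induced cycle of length at least $5$ surviving in $G - X$ would give an induced $s$-$t$ path of length at least $3$ in $G[\{s,t\}\cup W] - X$ for some $s, t \in S \setminus X$; by the DH property only pairs in $\mathcal{T}$ admit such a path, and the path then lifts to an $s_{st}$-$t_{st}$ path in $H - Y$, contradicting that $Y$ is a multicut. The main obstacle I anticipate is the careful construction and analysis of the auxiliary graph $H$ — specifically, verifying that the fractional multicut $x^H$ remains feasible with bounded norm, which the $1/4$-preprocessing step is designed to ensure — together with the polynomial-time implementation of the separation oracle.
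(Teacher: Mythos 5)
Your overall architecture (LP for long induced cycles through $v$, threshold, reduce to \textsc{Vertex Multicut}, apply Theorem~\ref{thm:fractionalmulticut}; and for case (2), observe that a small \dhm\ avoiding $v$ is an integral LP solution) matches the spirit of the paper, and your case~(2) argument and your final ``multicut $\Rightarrow$ hitting set'' direction are fine. But there is a genuine gap in the rounding step: the claimed feasibility of $x^H$ for the auxiliary multicut instance is false. In your graph $H$ every terminal copy $s_{st}$ is adjacent to the $W$-neighbors of $s$, so an $s_{st}$--$t_{st}$ path may pass through \emph{other} terminal copies, i.e., it may re-enter $N_G(v)$. Such a path does not correspond to an $s$--$t$ path with internal vertices in $W$, and it is not dominated by any constraint of your cycle LP: for instance, if $s,u,t\in N_G(v)$, $w_1\in W$ is a common neighbor of $s$ and $u$, $w_2\in W$ is a common neighbor of $u$ and $t$, and $u$ occurs in some terminal pair, then $s_{st},w_1,u_{ut'},w_2,t_{st}$ is a path in $H$ of $x^H$-weight $2(x^*_{w_1}+x^*_{w_2})$, while the walk $v,s,w_1,u,w_2,t,v$ in $G$ contains no induced cycle through $v$ (it is chorded by $vu$), so no LP constraint forces $x^*_{w_1}+x^*_{w_2}\geq \tfrac12$. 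One can realize this with $G-v$ a tree (hence distance-hereditary), $G$ free of small obstructions, and the cycle-LP optimum supported far away from $w_1,w_2$; then $x^H$ violates the multicut LP, Theorem~\ref{thm:fractionalmulticut} cannot be invoked, and the $O(k^2)$ bound is unproven. (The subsidiary claim $Y\subseteq W$ is also unjustified, since the rounding may delete terminal copies, though that is easily repaired by mapping copies back to their originals in $N_G(v)$.)

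The paper sidesteps exactly this difficulty by \emph{not} transferring the cycle-LP solution. It sets up the multicut instance directly on $G-v$ with terminal pairs $\mathcal{T}=\{(s,t):s,t\in N_G(v),\ \dist_{G-v}(s,t)\geq 3\}$, proves the two-way equivalence ``$X$ hits all induced cycles of length at least $5$ through $v$ iff $X$ is a multicut for $(G-v,\mathcal{T})$'' --- the nontrivial direction being that \emph{any} surviving $(s,t)$-path, induced or not and possibly passing through other neighbors of $v$, has length at least $3$ (since $\dist_{G-v}(s,t)\geq 3$) and hence, by Lemma~\ref{lem:createdhobs} and the absence of small obstructions, yields a surviving long induced cycle through $v$ --- and then solves the multicut LP of that instance itself, thresholding on \emph{its} optimum: value $>k$ gives case (2) via the equivalence, value $\leq k$ gives case (1) via Theorem~\ref{thm:fractionalmulticut}. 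Your argument can be repaired the same way: either drop the auxiliary copies and the transfer and optimize the multicut LP of the instance directly (your correctness paragraph then needs the Lemma~\ref{lem:createdhobs} argument for arbitrary, not just $W$-internal, surviving paths), or define terminal pairs by weighted distance at least $1$ as in Lemma~\ref{lem:usemulticut} so that feasibility holds by construction. As written, the feasibility transfer is the missing piece, and it is precisely where the work of the paper's proof lies.
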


\begin{proof}
Consider an instance $(G-v,\mathcal{T})$ of \textsc{Vertex Multicut} where \[\mathcal{T}:=\{(s,t): s,t\in N_G(v), \dist_{G-v}(s,t)\geq 3\}.\]

First, we claim that $X\subseteq V(G)\setminus \{v\}$ hits all induced cycles of $G$ of length at least 5 if and only if $X$ is a vertex multicut for $(G-v,\mathcal{T})$. Suppose $X$ is a vertex multicut for $(G-v,\mathcal{T})$ and  $H$ is an induced cycle of length at least 5 in $G-X$. Since $G-v$ is distance-hereditary,  $v$ is a vertex of $H$ and  $H-v$ is a path of length at least 3 between, say, $s$ and $t$. This means $\dist_{G-v}(s,t)\geq 3$ because $G-v$ is distance-hereditary and thus $(s,t)\in \mathcal{T}$, contradicting the assumption that $X$ hits all paths between every terminal pair in $\mathcal{T}$. Conversely, suppose $X\subseteq V(G)\setminus \{v\}$ hits all induced cycles of length at least 5 and there is an $(s,t)$-path $P$ in $(G-v)-X$ for some $(s,t)\in \mathcal{T}$. Note that $P$ is of length at least 3 and thus by Lemma~\ref{lem:createdhobs}, $G[\{v\}\cup V(P)]$ contains a DH obstruction $H$. By the assumption that $G$ does not contain a small DH obstruction, $H$ is an induced cycle of length at least 7 in $G-X$, a contradiction. 

Let $x^*$ be an optimal fractional solution to \textsc{Vertex Multicut}, which can be efficiently found using the ellipsoid method and an algorithm for the (weighted) shortest path problem as a separation oracle. If $\abs{x^*}\leq k$, then we can construct a multicut $X\subseteq V(G)\setminus \{v\}$ of size $O(\abs{x^*}^2)=O(k^2)$ using the approximation algorithm of Theorem~\ref{thm:fractionalmulticut}. By the previous claim, we know that $X$ hits all induced cycles of $G$ of length at least 5 (which must traverse $v$). 
If $\abs{x^*}>k$, then any integral solution for $(G-v,\mathcal{T})$ is larger than $k$. By the previous claim, any set $X\subseteq V(G)\setminus \{v\}$ hitting all induced cycles of $G$ of length at least 5 must be larger than $k$. It follows that any \dhm\ of size at most $k$ must contain $v$, completing the proof. 
\end{proof}

The following theorem states that a good \dhm\ of size $O(k^5\cdot \log n)$ can be constructed efficiently.

\begin{THM}\label{thm:goodmodulator}
There is a polynomial-time algorithm which, given a graph $G$ and a positive integer $k$, either 
\begin{enumerate}
\item returns an equivalent instance $(G',k')$ with a good \dhm\ $S'\subseteq V(G')$ of size $O(k^5\cdot \log n)$, or
\item correctly reports that $(G,k)$ is a \NO-instance to \dhd.
\end{enumerate}
\end{THM}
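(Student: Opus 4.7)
The plan is to start from an approximate \dhm\ $S$ of size $O(k^3\log n)$ obtained via Theorem~\ref{thm:approx}, and then, for each $v\in S$, construct either a local protection set $X_v\subseteq V(G)\setminus S$ of size $O(k^2)$ ensuring that $v$ is never the unique vertex of a DH obstruction in the final modulator $S'$, or conclude that $v$ must belong to every \dhm\ of $G$ of size at most $k$. In the latter case we move $v$ into the partial solution and reduce to $(G-v,k-1)$, while in the former we add $X_v$ to the modulator under construction. Writing $G_v:=G[(V(G)\setminus S)\cup\{v\}]$ and noting that $G_v-v$ is distance-hereditary, so every DH obstruction of $G_v$ contains $v$, the desired $X_v$ is precisely a \dhm\ of $G_v$ disjoint from $v$. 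The final good \dhm\ will then be $S':=S\cup\bigcup_{v\in S}X_v$, whose size is $O(k^5\log n)$.

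I would build $X_v$ in two stages, handling small and long DH obstructions separately. First, greedily compute a maximal packing $\mathcal{P}_v$ of small DH obstructions in $G_v$, each containing $v$, that pairwise intersect only at $v$. If $\abs{\mathcal{P}_v}>k$, then any hitting set of these obstructions that avoids $v$ must exceed $k$; so if $(G,k)$ had a \dhm\ $S^*$ of size at most $k$ with $v\notin S^*$, then $S^*\cap V(G_v)$ would be such a hitting set, a contradiction. Hence $v$ must lie in every $k$-sized \dhm\ and we reduce to $(G-v,k-1)$. Otherwise, $Y_v:=V(\mathcal{P}_v)\setminus\{v\}$ has size at most $5k$ and, by maximality, hits every small DH obstruction of $G_v$ that contains $v$, so $G'_v:=G_v-Y_v$ is free of small DH obstructions while $G'_v-v$ remains distance-hereditary.

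Now I would apply Proposition~\ref{prop:sunflower} to $G'_v$ at the vertex $v$ with parameter $k$. Its first outcome yields $X'_v\subseteq V(G'_v)\setminus\{v\}$ of size $O(k^2)$ hitting every induced cycle of length at least $5$ through $v$; combined with the absence of small DH obstructions and the fact that every DH obstruction of $G'_v$ passes through $v$, this makes $G_v-(Y_v\cup X'_v)$ distance-hereditary, and we set $X_v:=Y_v\cup X'_v$. The subtle case is the second outcome, which only claims that every \dhm\ of $G'_v$ of size at most $k$ contains $v$. To promote this to a statement about the original graph $G$, suppose $(G,k)$ admits a \dhm\ $S^*$ with $v\notin S^*$; then $S^*\cap V(G'_v)$ is a \dhm\ of the induced subgraph $G'_v$ of size at most $k$ avoiding $v$, contradicting this outcome. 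So we again conclude that $v$ is contained in every \dhm\ of size at most $k$ and reduce.

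After iterating over every $v\in S$, where each reduction preserves the approximate-modulator property of $S$ minus the removed vertices (since $G-v-(S\setminus\{v\})=G-S$), the final equivalent instance carries the \dhm\ $S'=S\cup\bigcup_v X_v$ of size $O(k^5\log n)$. For goodness, a DH obstruction $H$ of (the reduced) $G$ with $\abs{V(H)\cap S'}\le 1$ would either avoid $S'$ entirely, contradicting distance-hereditariness of $G-S'$, or meet $S'$ at exactly one vertex $w$; if $w\in S$, the construction of $X_w$ forces an additional vertex of $H$ into $X_w\subseteq S'$, and $w\in X_v\setminus S$ would place $H$ inside $G-S$, which is distance-hereditary, a contradiction. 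I expect the main technical obstacle to be precisely this translation for the second sunflower outcome, namely showing that a statement about \dhm s of the tailored subgraph $G'_v$ lifts to a statement about \dhm s of the original $G$, together with keeping the iterative reductions $(G,k)\mapsto(G-v,k-1)$ consistent across the loop.
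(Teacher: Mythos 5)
Your proposal is correct and follows essentially the same route as the paper's proof: obtain the $O(k^3\log n)$ modulator from Theorem~\ref{thm:approx}, handle small obstructions through each $v\in S$ by a packing argument yielding either a forced vertex or a hitting set of size at most $5k$, then invoke Proposition~\ref{prop:sunflower} for the long induced cycles, and conclude goodness and the $O(k^5\log n)$ bound exactly as the paper does. The only differences are cosmetic — you remove forced vertices one at a time as $(G-v,k-1)$ instead of batching them into $(G-U,k-|U|)$, and you spell out the (correct, heredity-based) lifting of the sunflower subroutine's ``$v$ must be deleted'' conclusion from $G'_v$ to the whole graph, which the paper leaves implicit.
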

\begin{proof}

	We first apply the algorithm of Theorem~\ref{thm:approx}. If this algorithm reports that $(G,k)$ is a \NO-instance, then we are done. 
	Hence, we assume that a \dhm\ $S$ of $G$ containing at most $O(k^3\cdot \log n)$ vertices is returned. 
	Let $U:=\emptyset$, and for each $v\in S$, let $H_v:=G[(V(G)\setminus S)\cup \{v\}]$.
	
	For each $v\in S$, we do the following. 
	First, find either  $k+1$ small DH obstructions in $H_v$ whose pairwise intersection is $v$, 
	or a vertex set $T_v$ of $V(G)\setminus S$ such that $\abs{T_v}\le 5k$ and $H_v-T_v$ has no small DH obstructions.
	It can be done in polynomial time by going through all $5$-size subsets of $V(G)\setminus S$.
	In the former case, we add $v$ to $U$.
	Otherwise, we obtain a vertex set $T_v$ in the second statement.
	
	Assume we obtained the vertex set $T_v$.
	Since $H_v-T_v$ has no small DH obstructions, 
	every DH obstruction in $H_v$ is an induced cycle of length at least $7$.
	We apply the algorithm of Proposition~\ref{prop:sunflower} to $H_v-T_v$ and $v$, and it returns 
	either a vertex set $X_v\subseteq V(H_v-T_v)\setminus \{v\}$ of size $\mathcal{O}(k^2)$ such that $H_v-(T_v\cup X_v)$ has no DH obstructions, 
	or correctly reports that any DH-modulator of size at most $k$ must contain $v$.
	In the latter case, we add $v$ to $U$.
	Otherwise, we obtain such a vertex set $X_v$.
	This finishes the algorithm.
	
	We claim that $(G-U, k-\abs{U})$ is an instance equivalent to $(G,k)$ and $S\cup (\bigcup_{v\in S\setminus U} (T_v\cup X_v))$ is a good DH-modulator for $G-U$.
 	It is easy to see that if $(G-U, k-\abs{U})$ a \YES-instance, then $(G,k)$ is a \YES-instance.
	Suppose $G$ has a vertex set $T$ such that $\abs{T}\le k$ and $G-T$ is distance-hereditary.
	Let $v\in U$. Since $v\in U$, by the algorithm as above, 
	either there are $k+1$ pairwise small DH obstructions in $H_v$ whose pairwise intersection is $v$, 
	or any DH-modulator of size at most $k$ must contain $v$.
	Thus, $T$ contains $v$, and it implies that $U\subseteq T$.
	Therefore, $T\setminus U$ is a solution of $(G-U, k-\abs{U})$.

	Let $S':=S\cup (\bigcup_{v\in S\setminus U} (T_v\cup X_v))$.
	It remains to see that $S'$ is a good \dhm\ for $G-U$. Clearly $S'$ is a \dhm. 
	For contradiction, suppose that $F$ is a DH obstruction of $G-U$ and that $V(F)\cap S'=\{w\}$. 
	Since $S$ is a \dhm\ of $G$, $w$ cannot be a vertex of $\bigcup_{v\in S\setminus U} (T_v\cup X_v)$, and thus we have $w\in S$ and $F$ is an induced subgraph of $H_w-(T_w\cup X_w)$. 
	However, $H_w-(T_w\cup X_w)$ has no DH obstruction, contradiction.
	This completes the proof.
\end{proof}

We remark that given a graph $G$ and a good \dhm\, removing a vertex $v$ in $V(G)\setminus S$ does not create a new DH obstruction. Hence, $S$ remains a good \dhm\ in the graph $G-v$. 

\section{Twin Reduction Rule}\label{sec:twinred}

	In a distance-hereditary graph, there may be a large set of pairwise twins.
	We introduce a reduction rule that bounds the size of a set of pairwise twins in $G-S$ by $\mathcal{O}(k^2\abs{S}^3)$, 
	where $S$ is a DH-modulator.
	In the FPT algorithm obtained by Eiben, Ganian, and Kwon~\cite{EibenGK2016}, there is a similar rule which reduces the size of a twin set 
	outside of a DH-modulator, however, under the assumption that the given instance has no small DH obstructions. 
	For our kernelization algorithm, we cannot assume that the given instance has no small DH obstructions.
	Therefore, we need to analyze more carefully. 

	The underlying observation is that it suffices to keep up to $k+1$ vertices 
	that are pairwise twins with respect to each subset of $S$ of small size.  
	For a subset $S'\subseteq S$, two vertices $u$ and $v$ in $V(G)\setminus S$ are \emph{$S'$-twins} if $u$ and $v$ have the same neighbors in $S'$. 
	It is not difficult to get an upper bound $\mathcal{O}(k\abs{S}^5)$, by considering all subsets $S'$ of $S$ of size $\min\{\abs{S}, 5\}$ and marking up to $k+1$ $S'$-twins.
	To get a better bound, we proceed as follows.

\begin{RRULE}\label{rrule:exttwinreduction}
	Let $W$ be a set of pairwise twins in $G-S$, and let $m:=\min \{\abs{S}, 3\}$.
	\begin{enumerate}[(1)]
	\item Over all subsets $S'\subseteq S$ of size $m$, we mark up to $k+1$ pairwise $S'$-twins in $W$ that are unmarked yet. 
	\item When $\abs{S}\ge 4$, over all subsets $S'\subseteq S$ of size $4$, 
	if there is an unmarked vertex $v$ of $W$ such that $G[S'\cup \{v\}]$ is isomorphic to the house or the gem, then 
	we mark up to $k+1$ previously unmarked vertices in $W$ including $v$ that are pairwise $S'$-twins.
	\item If there is an unmarked vertex $v$ of $W$ after finishing the marking procedure, we remove $v$ from $G$.
	\end{enumerate}
\end{RRULE}

\begin{LEM}\label{lem:exttwinreduction}
Reduction Rule~\ref{rrule:exttwinreduction} is safe
\end{LEM}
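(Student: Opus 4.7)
The plan is to prove safety by arguing that removing a single unmarked vertex $v$ of $W$ yields an equivalent instance; safety of the full rule then follows by iteration. The forward direction is immediate---if $X$ is a DH-modulator of $G$ of size at most $k$, then $X\setminus\{v\}$ is a DH-modulator of $G-v$ of no larger size, since induced subgraphs of distance-hereditary graphs are distance-hereditary. For the backward direction, I would fix a DH-modulator $X$ of $G-v$ with $\abs{X}\le k$ and argue by contradiction: if $G-X$ contained a DH obstruction $F$, then necessarily $v\in V(F)$, because $(G-v)-X$ is distance-hereditary.

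The pivotal observation I would use is that no DH obstruction---the house, gem, domino, or $C_\ell$ for $\ell\ge 5$---has two vertices with identical open neighborhoods: for $C_\ell$ a twin pair would create a triangle or a $4$-cycle, and for the three small obstructions this is a direct finite check. Since $W$ is a twin set of $G$, this forces $V(F)\cap W=\{v\}$. Writing $S_F:=V(F)\cap S$ and $V_F:=V(F)\setminus(S\cup\{v\})$, my strategy is to exhibit a ``substitute'' $v'\in W\setminus X$ such that $F':=G[(V(F)\setminus\{v\})\cup\{v'\}]$ still contains a DH obstruction; since $v\notin V(F')$, this obstruction would lie in $(G-v)-X$, the desired contradiction. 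To secure $v'$ I would designate a subset $S^*\subseteq S$ and invoke the marking guarantee: because $v$ is unmarked after Reduction Rule~\ref{rrule:exttwinreduction}, the rule must have marked $k+1$ $S^*$-twins of $v$ from $v$'s $S^*$-twin class without including $v$, so with $\abs{X}\le k$ some marked $S^*$-twin $v'$ lies outside $X$.

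The choice of $S^*$ would depend on $F$. (i) If $F$ is the house or the gem with $\abs{S_F}=4$, then $V_F=\emptyset$ and $G[S_F\cup\{v\}]\cong F$; I would take $S^*:=S_F$, handled by step~(2) of the rule, so any $S^*$-twin $v'$ of $v$ gives $F'\cong F$. (ii) If $F$ is a small obstruction with $\abs{S_F}\le 3$, I would take any $S^*$ of size $m:=\min\{\abs{S},3\}$ containing $S_F$: the $S^*$-twin property combined with the twin property on $V_F$ forces $F'\cong F$. (iii) If $F=C_\ell$ with $\ell\ge 5$, I would pick $S^*$ of size $m$ containing both cycle-neighbors of $v$ that lie in $S$; the twin property on $V(G)\setminus S$ and the $S^*$-twin property on $S$ jointly make both cycle-neighbors of $v$ adjacent to $v'$, so Lemma~\ref{lem:createdhobs} applied to the induced path $F-v$ (of length $\ell-1\ge 4$) produces a DH obstruction in $F'$. (iv) In the remaining case $F=$ domino with $\abs{S_F}\in\{4,5\}$, I would perform a finite case analysis over the position of $v$ in the domino and the choice of $V_F$, selecting a size-$3$ subset $S^*\subseteq S_F$ containing $v$'s $S_F$-neighbors and a suitable non-neighbor, and verifying that any $S^*$-twin $v'$ yields $F'$ in which either $F'\cong F$ or $F'$ contains a house---detected via an induced $C_4$ of $F-v$ whose two adjacent vertices are both adjacent to $v'$.

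The main obstacle will be the case analysis in (iv), particularly for the domino, where several positions of $v$ and choices of $V_F$ must be enumerated. The redundancy provided by the multiple induced $C_4$'s in the domino should guarantee that any size-$3$ choice of $S^*$ leaving a single free adjacency of $v'$ still produces a house in $F'$. The house and gem with $\abs{S_F}=4$ lack this redundancy, which is precisely why step~(2) marks size-$4$ twins separately: using only a size-$3$ $S^*$ there would leave one free adjacency of $v'$ that could simultaneously destroy $F'\cong F$ and avoid producing any alternative obstruction.
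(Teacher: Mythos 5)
Your overall strategy is the paper's: argue that the unmarked vertex $v$ can only lie in an obstruction $F$ of $G-T$ if a marked $S^*$-twin of $v$ outside $T$ can be substituted for $v$ to recreate an obstruction, with a case analysis on the type of $F$ and on $\abs{V(F)\cap S}$, using Lemma~\ref{lem:createdhobs} for cycles and step~(2) of Reduction Rule~\ref{rrule:exttwinreduction} for the house/gem with four $S$-vertices. However, there are two genuine gaps. First, your pivotal claim ``since $W$ is a twin set of $G$, this forces $V(F)\cap W=\{v\}$'' is false: $W$ is a set of pairwise twins in $G-S$, not in $G$, so its members may attach to $S$ in different ways, and an obstruction (in particular a long induced cycle alternating through $S$) can contain several vertices of $W$. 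This claim is not decorative: it is what you use to guarantee that the $k+1$ marked $S^*$-twins of $v$ avoid $V(F)$, so that one of them survives outside $T\cup V(F)$ and the substitution produces an obstruction disjoint from $v$. Without it you must argue, case by case, that no vertex of $F$ other than $v$ has the anchor adjacency pattern --- e.g.\ for a cycle, no other vertex of $F$ is adjacent to both cycle-neighbours of $v$; for the domino, no other vertex of $F$ is adjacent to both of $v$'s neighbours and non-adjacent to the chosen non-neighbour --- which is exactly what the paper does. (In your cases (i) and (ii) the repair is immediate: in (i) $V(F)\setminus\{v\}\subseteq S$ is disjoint from $W$, and in (ii) a substitute inside $F$ would be a twin of $v$ within $F$, impossible as no obstruction contains twins; in (iii) and (iv) the adjacency argument is needed.)

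Second, your case (iv), the domino with $\abs{V(F)\cap S}\ge 4$, is only a plan, and it is the crux of the lemma --- it is the reason step~(2) is restricted to the house and the gem and a size-$3$ anchor must suffice for the domino. The paper resolves it by anchoring on $S\cap\{v',w,z'\}$, where $v',w$ are the two neighbours of $v$ and $z'$ is the specific non-neighbour at distance $3$ from $v$, and then splitting on the substitute's two free adjacencies: if it sees $z$, Lemma~\ref{lem:createdhobs} applied to the induced path $v'w'wz$ gives an obstruction; otherwise, if it sees $w'$ one gets a house on $\{w,w',z,z'\}$ plus the substitute, and if not one gets a new domino. Your heuristic ``an induced $C_4$ of $F-v$ whose two adjacent vertices are both adjacent to $v'$ yields a house'' does not automatically close these subcases: a $C_4$ together with a vertex adjacent to three of its vertices is distance-hereditary, so the free adjacencies can destroy the intended house unless the non-neighbour anchor is chosen as above and all subcases are checked. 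As it stands, the proposal asserts rather than proves the two facts on which the safety argument hinges.
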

\begin{proof}
	We recall that $m:=\min \{\abs{S}, 3\}$.
	Let $W$ be a set of pairwise twins in $G-S$, 
	and suppose there is an unmarked vertex in $W$.
	We show that $(G,k)$ is a \YES-instance if and only if $(G-v, k)$ is a \YES-instance.
	The forward direction is clear. Suppose that $G-v$ has a vertex set $T$ with $\abs{T}\le k$ such that $(G-v)-T$ is distance-hereditary.
	We claim that $G-T$ is also distance-hereditary. 
	For contradiction, suppose $G-T$ contains a DH obstruction $F$. Clearly, $v\in V(F)$.

	In case when $F-v$ is an induced path, let $w,z$ be the end vertices of the path, and choose a set $S'\subseteq S$ of size $m$ containing $\{w,z\}\cap S$ (possibly, an empty set).
	Since $v$ is an unmarked vertex in Reduction Rule~\ref{rrule:exttwinreduction}, 
	there are $v_1, \ldots, v_{k+1}\in W\setminus \{v\}$ where $v_1, \ldots, v_{k+1}, v$ are pairwise $S'$-twins.
	Note that $V(F)\cap \{v_1, \ldots, v_{k+1}\}=\emptyset$ since no other vertex in $F$ is adjacent to both $w$ and $z$.
	Thus, there exists $v'\in \{v_1, \ldots, v_{k+1}\}\setminus T$ such that $G[V(F)\setminus \{v\}\cup \{v'\}]$ is a DH obstruction in $(G-v)-T$, a contradiction.
	We may assume $F-v$ is not an induced path, and in particular $F$ is not an induced cycle.

	Suppose $\abs{V(F)\cap S}\le 3$. Then there exists a set $S'\subseteq S$ of size $m$ such that $V(F)\cap S\subseteq S'$.
	Since $v$ remains unmarked in the application of Reduction Rule~\ref{rrule:exttwinreduction}, 
	there are $v_1, \ldots, v_{k+1}\in W\setminus \{v\}$ where $v_1, \ldots, v_{k+1}, v$ are pairwise $S'$-twins.
	Note that $V(F)\cap \{v_1, \ldots, v_{k+1}\}=\emptyset$ since $F$ contains no twins.
	Thus, there exists $v'\in \{v_1, \ldots, v_{k+1}\}\setminus T$ such that $G[V(F)\setminus \{v\}\cup \{v'\}]$ is a DH obstruction in $(G-v)-T$, a contradiction.
	
	Now we assume $\abs{V(F)\cap S}\ge 4$. There are two possibilities.

\smallskip
\noindent\textbf{Case 1.} $F$ is isomorphic to either the house or the gem.

	Since $\abs{V(F)\cap S}\ge 4$, we have $V(F)\setminus \{v\}\subseteq S$.
	By Step (2) of Reduction Rule~\ref{rrule:exttwinreduction}, there are $v_1, \ldots, v_{k+1}\in W\setminus \{v\}$ where $v_1, \ldots, v_{k+1}, v$ are pairwise $(V(F)\cap S)$-twins.
	Thus, there exists $v'\in \{v_1, \ldots, v_{k+1}\}\setminus T$ such that $G[V(F)\setminus \{v\}\cup \{v'\}]$ is a DH obstruction in $(G-v)-T$, contradiction.

\noindent\textbf{Case 2.} $F$ is isomorphic to the domino.

	Since $F-v$ is not an induced path, $v$ is a vertex of degree $2$ in $F$.
	Let $v', w$ be the neighbors of $v$ having degree $2$ and degree $3$ in $F$, respectively. 
	Let $w'$ be the vertex of degree $3$ in $F$ other than $w$.
	Let $z$ be the vertex in $N_F(w)\setminus \{v, w'\}$ and let $z'$ be the vertex in $N_F(w')\setminus \{v', w\}$.
	
	Now, we take a subset $S'$ of $S$ of size $3$ containing $S\cap \{v',w,z'\}$ (possibly, an empty set).
	Since $v$ is an unmarked vertex in Reduction Rule~\ref{rrule:exttwinreduction}, 
	there are $v_1, \ldots, v_{k+1}\in W\setminus \{v\}$ where $v_1, \ldots, v_{k+1}, v$ are pairwise $S'$-twins.
	Note that $V(F)\cap \{v_1, \ldots, v_{k+1}\}=\emptyset$ since no other vertex in $F$ is adjacent to both $v', w$ and not adjacent to $z'$.
	Thus, there exists a vertex $v''\in \{v_1, \ldots, v_{k+1}\}\setminus T$.
	
	If $v''$ is adjacent to $z$, then $v'w'wz$ is an induced path of length 3 and $v''$ is adjacent to its end vertices. It follows that 
	$G[\{v'', v', w', w, z\}]$ contains a DH obstruction by Lemma~\ref{lem:createdhobs}, a contradiction to the assumption that $(G-v)-T$ is distance-hereditary.
	Suppose that $v''$ is not adjacent to $z$.
	If $v''$ is adjacent to $w'$, then $G[\{v'', w,w',z,z'\}]$ is isomorphic to the house, 
	and otherwise, $G[\{v'', v', w, w', z, z'\}]$ is isomorphic to the domino.
	This contradicts the assumption that $(G-v)-T$ is distance-hereditary.

\smallskip

	Therefore, $G-T$ is distance-hereditary, which completes the proof.
\end{proof}

	We can apply Reduction Rule~\ref{rrule:exttwinreduction} exhaustively in polynomial time by considering all twin sets of $G-S$, 
	and for each twin set going through all subsets $S'$ of $S$ of size $\min (\abs{S}, 4)$. 
	We observe that the size of a twin set $W$ after applying Reduction Rule~\ref{rrule:exttwinreduction} is bounded by a polynomial function in $k$ and $\abs{S}$.
	
\begin{LEM}\label{lem:twinsize}
	Let $S$ be a  DH-modulator of $G$, and $W$ be a set of pairwise twins in $G-S$. If $(G,k)$ is irreducible with respect to Reduction Rule~\ref{rrule:exttwinreduction}, 
	then we have $\abs{W}=\mathcal{O}(k^2\abs{S}^3)$ or $(G,k)$ is a \NO-instance.
\end{LEM}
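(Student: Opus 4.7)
My plan is to bound $\abs{W}$ by counting the marks left after exhaustive application of Reduction Rule~\ref{rrule:exttwinreduction}; by step (3), every vertex surviving in $W$ must have been marked in step (1) or step (2).

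For step (1), I iterate over all $\binom{\abs{S}}{m}=\mathcal{O}(\abs{S}^3)$ subsets $S'\subseteq S$ of size $m=\min(\abs{S},3)$. For each such $S'$, the $S'$-twin relation partitions $W$ into at most $2^m\leq 8$ classes, and at most $k+1$ vertices per class are marked. Hence step (1) contributes $\mathcal{O}(k\,\abs{S}^3)$ marks. For step (2), each firing 4-subset $S'$ marks at most $k+1$ pairwise $S'$-twins, each of which together with $S'$ induces a house or gem. The bound thus reduces to counting the number $F$ of firing 4-subsets; the plan is to show $F=\mathcal{O}(k\,\abs{S}^3)$, so that step (2) contributes $\mathcal{O}(k^2\,\abs{S}^3)$ marks.

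The bound on $F$ comes via the \NO-instance alternative. If $F>c\,k\,\abs{S}^3$ for a suitable constant $c$, I would construct $k+1$ pairwise vertex-disjoint small DH obstructions in $G$ as follows. At the $i$-th iteration, pick any 4-subset $S_i$ remaining in the pool together with a witness $v_i$ taken from the $k+1$ marked $S_i$-twins subject to $v_i\notin\{v_1,\dots,v_{i-1}\}$; this is possible since there are $k+1$ choices and at most $i-1\leq k$ previous witnesses. Then remove from the pool every 4-subset sharing a vertex with $S_i$. Since a single element of $S$ is contained in $\binom{\abs{S}-1}{3}=\mathcal{O}(\abs{S}^3)$ four-subsets, each iteration removes at most $4\cdot \mathcal{O}(\abs{S}^3)=\mathcal{O}(\abs{S}^3)$ 4-subsets, so after $k$ iterations the pool still has $F-\mathcal{O}(k\,\abs{S}^3)>0$ members when $c$ is large enough; the process thus yields $k+1$ vertex-disjoint DH obstructions of size at most $5$. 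Any DH-modulator of $G$ must hit each of them, so $\abs{T}\geq k+1$ and $(G,k)$ is a \NO-instance.

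The main obstacle is the greedy packing: both the selected 4-subsets must be pairwise disjoint in $S$, and the selected witnesses must be pairwise distinct in $W$. The former is handled by the per-iteration removal of overlapping subsets, while the latter is made possible precisely by the choice to mark $k+1$ pairwise $S'$-twins per firing in step (2)---without that freedom, a fresh witness could not be guaranteed at each step. Combining the two counts, either $\abs{W}=\mathcal{O}(k\,\abs{S}^3)+\mathcal{O}(k^2\,\abs{S}^3)=\mathcal{O}(k^2\,\abs{S}^3)$, or $(G,k)$ is a \NO-instance.
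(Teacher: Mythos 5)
Your argument is correct and essentially the paper's own: step (1) is charged $O(k\abs{S}^3)$ marks directly, and step (2) is bounded by showing that in a \YES-instance the firing $4$-subsets cannot contain $k+1$ pairwise disjoint members (disjoint subsets with distinct witnesses give vertex-disjoint houses/gems), which the paper implements as a maximal packing in an auxiliary hypergraph plus a count of the hyperedges meeting it --- exactly your greedy extraction stated in contrapositive form. The one imprecision is your appeal to ``$k+1$ choices'' of witness per firing subset: the rule marks only \emph{up to} $k+1$ previously unmarked twins, so that pigeonhole is not guaranteed; a fresh witness $v_i$ exists anyway because the sets of vertices marked at distinct firings are pairwise disjoint (no vertex is ever marked twice), so any vertex marked when $S_i$ fired is automatically distinct from $v_1,\dots,v_{i-1}$.
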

\begin{proof}

	Suppose that $(G,k)$ is a \YES-instance. It suffices to prove that $\abs{W}$ satisfies the claimed bound.
	First assume that $\abs{S}\ge 4$.
	There are at most $\abs{S}\choose 3$ different choices of $S'$ in Step (1), and for each $S'$ we mark up to $2^3(k+1)$ vertices of $W$. 
	
	Consider an auxiliary hypergraph $\mathcal{H}$ on the vertex set $S$. A size-4 subset $S'$ of $S$ forms a hyperedge of $\mathcal{H}$ if and only if $S'$ is used to mark some vertex of $W$ in Step (2). Observe that if there exist $k+1$ vertex-disjoint hyperedges of $\mathcal{H}$, then there are $k+1$ vertex-disjoint copies of a house or a gem. As we assume that $(G,k)$ is a \YES-instance, a maximum packing of hyperedges has size at most $k$. Let $C$ be the vertices of $S$ that are contained in a maximal packing of hyperedges  and notice that $\abs{C}\leq 4k$. Since any hyperedge $e$ of $\mathcal{H}$ intersects with $C$, the following count on the maximum number of hyperedges is derived: 
	\[{4k \choose 4}+{4k \choose 3}{\abs{S}\choose 1}+{4k \choose 2} {\abs{S}\choose 2}+{4k \choose 1} {\abs{S}\choose 3}.\]
For each hyperedge $S'$ of $\mathcal{H}$, at most $2^4$ different sets of pairwise $S'$-twins, from which we mark up to $k+1$ vertices. It follows that
\[W \leq 2^3(k+1){\abs{S}\choose 3}+ 2^4(k+1)\left({4k \choose 4}+{4k \choose 3}{\abs{S}\choose 1}+{4k \choose 2} {\abs{S}\choose 2}+{4k \choose 1} {\abs{S}\choose 3}\right).\]

Assume that $\abs{S}\le 3$. Then we mark up to $2^{\abs{S}} (k+1)\le 8(k+1)$ vertices of $W$. Thus, if $\abs{W}> 8(k+1)$, then the set $W$ can be reduced further by removing a vertex.
The claimed bound follows.
\end{proof}

\section{Bounding the number of non-trivial connected components of $G-S$}\label{sec:countcc}

	We provide a reduction rule that that bounds the number of connected components of $G-S$ each having at least $2$ vertices, 
	when $S$ is a good DH-modulator.

	Let $(G,k)$ be an instance, and let $S$ be a good DH-modulator.
	For each pair of $v\in S$ and a connected component $C$ of $G-S$, let $N(v,C):=N_G(v)\cap V(C)$. 
	Note that for a non-trivial connected component $C$ of $G-S$, $(V(C), V(G)\setminus V(C))$ is {\sl not} a split 
	if and only if there exist $v,w \in S$ such that $N(v,C)\neq \emptyset$, $N(w,C)\neq \emptyset$ and $N(v,C)\neq N(w,C)$. 
	We say that a pair $(v,w)$ of vertices in $S$ is a \emph{witnessing pair} for a connected component $C$ of $G-S$
	if $N(v,C)\neq \emptyset$, $N(w,C)\neq \emptyset$ and $N(v,C)\neq N(w,C)$.
The following lemma is essential.

	\begin{LEM}\label{lem:twoccobs}
	Let $(G,k)$ be an instance and $S$ be a good DH-modulator.
	If $C_1$, $C_2$ are two connected components of $G-S$ and $v,w \in S$ such that $(v,w)$ is a witnessing pair for both $C_1$ and $C_2$,
	then $G[\{v,w\}\cup V(C_1)\cup V(C_2)]$ contains a DH obstruction.
	\end{LEM}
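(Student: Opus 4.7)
The plan is a case analysis based on whether $vw\in E(G)$ together with the shortest $v$-$w$ distance inside each $C_i$: let $\Gamma_i:=G[\{v,w\}\cup V(C_i)]$ and $\ell_i:=\dist_{\Gamma_i-vw}(v,w)$. Since $C_i$ is connected and both $N(v,C_i)$ and $N(w,C_i)$ are non-empty, $\ell_i\ge 2$ is well defined and any shortest such path $P_i$ is induced. The witness condition gives that $A_i:=N(v,C_i)\setminus N(w,C_i)$ and $B_i:=N(w,C_i)\setminus N(v,C_i)$ are not both empty, and after possibly swapping $v$ and $w$ I may assume there is a witness $u_1\in A_1$. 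Throughout, the anti-completeness between $V(C_1)$ and $V(C_2)$ inside $G-S$ and the minimality of the $P_i$ will repeatedly be used to rule out unwanted chords.

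The easy regime is when some $\ell_i$ is large. If $vw\notin E(G)$ and $\max(\ell_1,\ell_2)\ge 3$, then the concatenation $P_1\cup P_2$ is an induced cycle of length at least $5$. If $vw\in E(G)$ and some $\ell_i\ge 4$, closing $P_i$ with the edge $vw$ already gives an induced $C_{\ge 5}$. The remaining short configurations with $vw\in E(G)$ produce two classical gadgets on five or six vertices: when $\ell_1=\ell_2=3$, the six vertices in $V(P_1)\cup V(P_2)$ span a $6$-cycle with the unique chord $vw$ joining diametrically opposite vertices, namely a domino; and when $\{\ell_1,\ell_2\}=\{2,3\}$, the length-$3$ side yields an induced $C_4$ (through $v,w$ and the two internal vertices of the corresponding $P_i$) which, glued along $vw$ to the triangle $\{v,w,z\}$ on the common neighbour $z\in N(v)\cap N(w)\cap V(C_j)$ from the length-$2$ side, produces a house.

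The delicate regime is $\ell_1=\ell_2=2$: here the obvious $4$-cycle $vz_1wz_2v$ (with $z_i\in N(v)\cap N(w)\cap V(C_i)$) is only a $C_4$ or, if $vw\in E(G)$, $K_4$ minus an edge, neither of which is a DH obstruction. In this regime I exploit the witness $u_1\in A_1$. Let $R_1$ be a shortest path in $C_1$ from $u_1$ to $N(w,C_1)$, ending at some $y_1$; by minimality no internal vertex of $R_1$ is adjacent to $w$. Then $Q:=u_1R_1y_1wz_2$ is an induced path in $G$, since $w$ avoids the internals of $R_1$, $z_2$ is anti-complete to $V(C_1)$, and $u_1\notin N(w)$. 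If $|R_1|\ge 2$ then $|Q|\ge 4$, $v\notin V(Q)$, and $v$ is adjacent to both endpoints $u_1$ and $z_2$, so Lemma~\ref{lem:createdhobs} yields a DH obstruction inside $G[V(Q)\cup\{v\}]$. If $|R_1|=1$, so $u_1y_1\in E(G)$ with $y_1\in N(w,C_1)$, then the five-vertex set $\{v,u_1,y_1,w,z_2\}$ itself contains a DH obstruction; a direct case split on whether $y_1\in N(v,C_1)\cap N(w,C_1)$ or $y_1\in B_1$, and on whether $vw\in E(G)$, identifies it as a gem (if $y_1\in N(v)\cap N(w)$ and $vw\in E$), a house (if $y_1\in N(v)\cap N(w)$ and $vw\notin E$, or if $y_1\in B_1$ and $vw\in E$), or an induced $C_5$ (if $y_1\in B_1$ and $vw\notin E$).

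The main obstacle is precisely this $\ell_1=\ell_2=2$ sub-case: the naive $4$- or $5$-vertex configuration fails to be a DH obstruction, and one is forced to bring in the witness $u_1$ and carefully analyse the structure of $C_1$ in a neighbourhood of $u_1$ in order to stretch the configuration into either a long induced path (handled by Lemma~\ref{lem:createdhobs}) or one of the small five-vertex obstructions.
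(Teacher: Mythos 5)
Your proof is correct, but it is organized quite differently from the paper's. The paper gives a single uniform argument with no case analysis: since $N(v,C_1)\neq N(w,C_1)$, pick $z\in V(C_1)$ adjacent to exactly one of $v,w$ (say to $v$), pick any neighbour $z'$ of $v$ in $C_2$, and take a shortest $z$--$z'$ path $P$ in $G[\{v,w\}\cup V(C_1)\cup V(C_2)]-v$; every such path must pass through $w$, it has length at least $3$ because $zw\notin E(G)$, and Lemma~\ref{lem:createdhobs} applied to $P$ and $v$ finishes immediately. Your case split on $vw\in E(G)$ and on the distances $\ell_1,\ell_2$ is sound in every branch (the induced $C_{\ge 5}$, domino and house identifications all check out, as does the delicate $\ell_1=\ell_2=2$ analysis via the witness $u_1\in A_1$), and it has the minor merit of exhibiting the obstruction type explicitly; but note that your delicate regime is essentially the paper's whole proof in disguise, and even there your final five-vertex subcase ($|R_1|=1$) is redundant: the path $Q=u_1y_1wz_2$ already has length $3$, $v\notin V(Q)$, and $v$ is adjacent to both ends, so Lemma~\ref{lem:createdhobs} applies directly without distinguishing gem, house, or $C_5$. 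In short, the paper's choice of a shortest path avoiding $v$ (rather than shortest $v$--$w$ paths through each component) is what lets it collapse all of your cases into one.
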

	\begin{proof}
	Since $N(v, C_1)\neq \emptyset$, $N(w, C_1)\neq \emptyset$, and $N(v, C_1)\neq N(w, C_1)$, 
	there exists a vertex $z\in V(C_1)$ where $z$ is adjacent to only one of $v$ and $w$.
	Without loss of generality, we may assume that $z$ is adjacent to $v$.
	We choose a neighbor $z'$ of $v$ in $C_2$.
	Clearly, there is a path from $z$ to $z'$ in $G[\{v,w\}\cup V(C_1)\cup V(C_2)]-v$ because $w$ has a neighbor on each of $C_1$ and $C_2$. 
	Let $P$ be a shortest path from $z$ to $z'$ in $G[\{v,w\}\cup V(C_1)\cup V(C_2)]-v$.
	Note that $P$ has length at least $3$, as $w$ and $z$ are not adjacent.
	Thus, by Lemma~\ref{lem:createdhobs},  $G[\{v,w\}\cup V(C_1)\cup V(C_2)]$ contains a DH obstruction.
	\end{proof}

Lemma~\ref{lem:twoccobs} observes that if a pair of vertices in $S$ witnesses at least $k+2$ non-trivial connected components in $G-S$, at least one of the pair must be contained in any size-$k$ \dhm. Furthermore, keeping exactly $k+2$ non-trivial connected components would suffice to impose this restriction. This suggests the following reduction rule.

\begin{RRULE}\label{rrule:boundingcc} 
For each pair of vertices $v$ and $w$ in $S$, we mark up to $k+2$ non-trivial (previously unmarked) connected components $C$ of $G-S$ such that
$(v, w)$ is a witnessing pair for $C$. If there is an unmarked non-trivial connected component $C$ after the marking procedure, then we remove all edges in $C$. 

\end{RRULE}

The following is useful to see the safeness of our reduction rule. 
\begin{LEM}\label{lem:splitobs}
Let $G$ be a graph, and let $(A,B)$ be a split of $G$. Then for every DH obstruction $H$ of $G$, 
either $\abs{V(H)\cap A}\le 1$ or $\abs{V(H)\cap B}\le 1$.
\end{LEM}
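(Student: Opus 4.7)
The plan is to argue by contradiction: assume a DH obstruction $H$ of $G$ has $\abs{V(H)\cap A}\ge 2$ and $\abs{V(H)\cap B}\ge 2$, then derive that $H$ is itself distance-hereditary. Write $H_A:=V(H)\cap A$ and $H_B:=V(H)\cap B$. The first step is to show that $(H_A,H_B)$ is a split of $H$. The cardinality conditions are immediate. For the complete bipartite condition, any $a\in N_H(H_B)\subseteq H_A$ lies in $N_G(B)\cap A$ (since $H$ is induced in $G$ and $H_B\subseteq B$), and any $b\in N_H(H_A)\subseteq H_B$ lies in $N_G(A)\cap B$; the split property of $(A,B)$ in $G$ then forces $ab\in E(G)=E(H)$.

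Next I would exploit the fact that each of $C_\ell$ ($\ell\ge 5$), the gem, the house, and the domino is a \emph{minimal} forbidden induced subgraph for distance-heredity — no DH obstruction properly contains another — which is a short verification from Figure~\ref{fig:obsdh} (cycles have maximum degree $2$ while the other obstructions contain a vertex of degree at least $3$, the two five-vertex obstructions differ by edge count, and direct inspection rules out induced copies of a $5$-vertex obstruction inside the domino). Pick any $u\in N_H(H_B)$ and $v\in N_H(H_A)$, both non-empty because $H$ is connected. By the splitness of $(H_A,H_B)$ in $H$, the neighborhood of $v$ within $H_A$ is precisely $N_H(H_B)$: it includes all of $N_H(H_B)$ by the complete bipartite condition and nothing else by the very definition of $N_H(H_B)$. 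Hence $H_1:=H[H_A\cup\{v\}]$ is isomorphic to the bag incident with the $H_A$-side marker $a'$ in the simple decomposition of $H$ along $(H_A,H_B)$, with $v$ playing the role of $a'$. Symmetrically, $H_2:=H[H_B\cup\{u\}]$ is isomorphic to the other bag. Since $\abs{H_B}\ge 2$ gives $\abs{V(H_1)}\le \abs{V(H)}-1$, the graph $H_1$ is a proper induced subgraph of $H$, and by minimality $H_1$ is distance-hereditary; likewise for $H_2$.

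Finally, I would recompose to reach a contradiction. By Theorem~\ref{thm:bouchet}, the canonical split decompositions of $H_1$ and $H_2$ have only star and complete bags. Reassembling them across the marked edge $a'b'$ of the simple decomposition of $H$ gives a split decomposition of $H$; if this violates canonicality at $a'b'$ — namely two incident complete bags, or a leaf of a star bag linked to the center of another star bag, per the rules cited after Theorem~\ref{thm:CED} — merging those two bags still yields a single star or complete bag (two complete bags merge into a complete bag; a leaf-to-center star-star merge collapses to a star centered at the retained center). So the canonical split decomposition of $H$ consists of star and complete bags, and Theorem~\ref{thm:bouchet} applied in reverse shows that $H$ is distance-hereditary, contradicting its status as a DH obstruction. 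The main obstacle is this recomposition step — essentially the standard fact that distance-heredity is preserved under split composition — and the case analysis of merges at $a'b'$ sketched above makes it self-contained.
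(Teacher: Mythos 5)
Your proof is correct, but it takes a considerably longer route than the paper's. The paper simply observes that $(V(H)\cap A, V(H)\cap B)$ would be a split of $H$ (using connectivity of $H$) and then invokes the fact that no DH obstruction has a split — that is, every DH obstruction is a prime graph, which is a quick direct inspection of cycles $C_\ell$ ($\ell\ge 5$), the gem, the house, and the domino. You instead supply a proof of that primality fact indirectly: you verify minimality of the obstruction family (no DH obstruction properly contains another), extract the two sides $H_1, H_2$ of the split as strictly smaller induced subgraphs, conclude each is distance-hereditary, and then re-derive that the split composition of two distance-hereditary graphs is distance-hereditary by appealing to Theorem~\ref{thm:bouchet} and analyzing how recomposing the $a'b'$ marked edge keeps all bags stars or complete graphs. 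That closure-under-split-composition argument is essentially a proof of Bouchet's theorem in one direction, so your route is more self-contained but also packs in considerably more machinery than the one-line observation the paper uses. Both are valid; the paper's is the more economical given that primality of the finitely many small obstructions (and, for cycles, the fact that in a degree-two graph a split forces a $C_4$) is quick to check directly. One small caution in your recomposition step: in the star–star merge the retained center is the center of whichever star's marker was a leaf, and you should also note (as you implicitly rely on) that the merged bag introduces no new violation at its other marked edges because those edges lived entirely inside a canonical decomposition of $H_1$ or of $H_2$, where canonicality was already enforced.
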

\begin{proof}
Suppose for contradiction that $\abs{V(H)\cap A}\ge 2$ and $\abs{V(H)\cap B}\ge 2$.
Since $H$ is connected, $(V(H)\cap A, V(H)\cap B)$ is a split of $H$.
This is contradiction because every DH obstruction does not have a split.
\end{proof}

\begin{LEM}\label{lem:boundingcc}
Reduction Rule~\ref{rrule:boundingcc} is safe. Moreover, $S$ remains a good \dhm\ after applying the reduction rule.
\end{LEM}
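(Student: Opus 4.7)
The plan is to establish two facts in tandem: that $S$ remains a good \dhm\ of $G'$, and that $(G,k)$ and $(G',k)$ are equivalent instances of \sdhd. Both rely on the observation that every unmarked component $C$ of $G-S$ becomes an independent set in $G'$ with its $S$-neighborhoods preserved, together with the combined use of Lemma~\ref{lem:twoccobs} and the $k+2$ marked components associated with every witnessing pair.

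For the preservation of the good modulator property, $G'-S$ is distance-hereditary because marked components are untouched and unmarked components are edgeless, so $S$ remains a \dhm\ of $G'$. Suppose, for contradiction, that $F$ is a DH obstruction in $G'$ with $|V(F)\cap S|\le 1$, so $|V(F)\setminus S|\ge 4$. Since removing any single vertex of a DH obstruction leaves it connected, $V(F)\setminus S$ induces a connected subgraph of $G'-S$. As the components of $G'-S$ are the marked components of $G-S$ (unchanged) together with singletons coming from unmarked components, and $|V(F)\setminus S|\ge 2$, this set lies in a single marked component; but then the edges on $V(F)$ coincide in $G$ and $G'$, so $F$ is also a DH obstruction in $G$ with $|V(F)\cap S|\le 1$, contradicting the goodness of $S$ in $G$.

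For the forward direction of safeness, assume $T$ is a size-$k$ \dhm\ of $G$ and let $F'$ be a DH obstruction in $G'-T$; by the preceding paragraph $|V(F')\cap S|\ge 2$. If $|V(F')\cap V(C)|\le 1$ for every unmarked $C$, the edges of $F'$ are identical in $G$ and $G'$, making $F'$ an obstruction in $G-T$, contradicting our choice of $T$. Otherwise pick $x,y\in V(F')\cap V(C)$ for some unmarked $C$; they are non-adjacent in $F'$ because $V(C)$ is independent in $G'$. If $N_{F'}(x)=N_{F'}(y)$, then $x,y$ are twins in $F'$, impossible for a DH obstruction. Hence one can pick $v\in N_{F'}(x)\triangle N_{F'}(y)$; since every $F'$-neighbor of a $V(C)$-vertex lies in $S$, $v\in S\setminus T$, and pairing $v$ with any $w\in N_{F'}(y)\subseteq S\setminus T$ (in case $v\in N_{F'}(x)\setminus N_{F'}(y)$) produces a witnessing pair $(v,w)$ of $C$ with $v,w\notin T$. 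Since $C$ is unmarked, $(v,w)$ witnesses $k+2$ marked components, at least two of which, $C_1,C_2$, avoid $T$; Lemma~\ref{lem:twoccobs} then yields a DH obstruction in $G[\{v,w\}\cup V(C_1)\cup V(C_2)]\subseteq G-T$, a contradiction.

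For the backward direction, assume $T$ is a size-$k$ \dhm\ of $G'$ and let $F$ be a DH obstruction in $G-T$. If $F$ uses no edge removed by the rule, it is also an induced subgraph of $G'-T$, contradiction. Otherwise $F$ contains an edge $xy$ within some unmarked $V(C)$. If $(V(C), V(G)\setminus V(C))$ is a split of $G$, Lemma~\ref{lem:splitobs} forces $V(F)\subseteq V(C)\cup\{z\}$ for at most one extra vertex $z$, whence the goodness of $S$ in $G$ gives $|V(F)\cap S|\le 1$, a contradiction. The main obstacle is the remaining case, where $C$ admits a witnessing pair, and the crux is to produce one, say $(v,w)$, with $v,w\notin T$. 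I would argue by contradiction that if every witnessing pair of $C$ meets $T$, then all vertices of $S\setminus T$ having a neighbor in $V(C)$ share a common $V(C)$-neighborhood $R$; a brief case analysis on whether $R\setminus T$ or $V(G)\setminus V(C)\setminus T$ is non-empty then shows that either $F$ is forced to lie inside $V(C)$ (contradicting that $G[V(C)]$ is distance-hereditary), or $(V(C)\setminus T, V(G)\setminus V(C)\setminus T)$ is a split of $G-T$, whereupon Lemma~\ref{lem:splitobs} and the goodness of $S$ yield the same contradiction as in the split case above. Once such $(v,w)\notin T$ is secured, two of the $k+2$ marked components witnessed by $(v,w)$ avoid $T$, and Lemma~\ref{lem:twoccobs} delivers a DH obstruction that also lies in $G'-T$ (the marked components and all incident $S$-edges survive in $G'$), contradicting that $T$ is a \dhm\ of $G'$. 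The most delicate step is securing a witnessing pair outside $T$ in this backward direction, since unlike in the forward case we do not start from two non-adjacent vertices of $V(F)\cap V(C)$ from which to read off a witness.
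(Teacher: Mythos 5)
Your proposal is correct and follows essentially the same route as the paper: in both, the heart of the argument is that a witnessing pair for the unmarked component avoiding $T$ yields, via the $k+2$ marked components and Lemma~\ref{lem:twoccobs}, an obstruction surviving in the other graph, while Lemma~\ref{lem:splitobs} combined with the goodness of $S$ transfers any remaining obstruction across the edge removal. The step you only sketch in the backward direction — either a $T$-avoiding witnessing pair exists, or all of $S\setminus T$ sees a common neighborhood $R$ in $V(C)$ and hence $(V(C)\setminus T, V(G)\setminus V(C)\setminus T)$ is a split of $G-T$ (or $V(C)\setminus T$ is cut off, or the other side is too small and goodness applies) — is exactly the paper's split claim and your outline of it does go through.
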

\begin{proof}

	Suppose there is an unmarked non-trivial connected component $C$ of $G-S$ after the marking procedure.
	Let $G'$ be the graph obtained by removing all edges in  $C$.
	We claim that $(G, k)$ is a \YES-instance if and only if $(G', k)$ is a \YES-instance.
	
	We first prove the converse direction.
	Suppose $T$ is a vertex set of $G'$ such that $\abs{T}\le k$ and $G'-T$ is distance-hereditary. 
	We claim that $(V(C)\setminus T, V(G)\setminus V(C)\setminus T)$ is a split of $G-T$.
	Suppose $(V(C)\setminus T, V(G)\setminus V(C)\setminus T)$ is not a split of $G-T$.
	Thus, there are $v,w\in S\setminus T$, where $N_G(v)\cap (V(C)\setminus T)\neq \emptyset$, $N_G(w)\cap (V(C)\setminus T)\neq \emptyset$, 
	and $N_G(v)\cap (V(C)\setminus T)\neq N_G(w)\cap (V(C)\setminus T)$.
	It further implies that $(v,w)$ is a witnessing pair for $C$ in $G$.
	Since $C$ is unmarked, there are $k+2$ non-trivial connected components $C_1, \ldots, C_{k+2}$ of $G-S$ other than $C$ 
	where $(v,w)$ is a witness pair for each $C_i$ in $G$.
	From $\abs{T}\leq k$, there are two components  $C_i, C_j\in \{C_1, \ldots, C_{k+2}\}$ that do not contain a vertex of $T$. 
	By Lemma~\ref{lem:twoccobs}, the graph $(G'-T)[\{v,w\}\cup V(C_1)\cup V(C_2)]$ contains a DH obstruction.
	It contradicts to the assumption that $G'-T$ has no DH obstructions.
	Thus, $(V(C)\setminus T, V(G)\setminus V(C)\setminus T)$ is a split of $G-T$ as claimed. 

	Suppose for contradiction that $G-T$ contains a DH obstruction $H$. 
	$(V(C)\setminus T, V(G)\setminus V(C)\setminus T)$ is a split of $G-T$, 
	we have either $\abs{V(H)\cap V(C)}\leq 1$ or $\abs{V(H)\setminus V(C)}\le 1$ by Lemma~\ref{lem:splitobs}. 
	As $S$ is a good DH-modulator, the only possibility is $\abs{V(H)\cap V(C)}\leq 1$. 
	This means that $H$ is also an induced subgraph of $G'-T$, a contradiction. Therefore, $T$ is also a solution to $(G,k)$.

	For the forward direction, suppose that $T$ is a vertex set of $G$ such that $\abs{T}\le k$ and $G-T$ is distance-hereditary. 
	By a similar argument as above, one can show that $(V(C)\setminus T, V(G')\setminus V(C)\setminus T)$ is a split in $G'-T$. 
	Hence, any DH obstruction $H$ in $G'-T$ contains at most one vertex of $C$, and thus $H$ is also an induced subgraph of $G-T$, 
	which contradicts to the assumption that $G-T$ is distance-hereditary.
	We conclude that $(G',k)$ is a \YES-instance if $(G,k)$ is a \YES-instance. 

	To see that $S$ remains a good \dhm\ after applying the reduction rule,   suppose that $S$ is not a good \dhm\ after the application of Reduction Rule~\ref{rrule:boundingcc}. 
	It is easy to see that $S$ is again a \dhm\ of $G'$. Hence, we may assume there is a DH osbtruction $F$ in $G'$ 
	such that $\abs{V(F)\cap S}=1$. Such $F$ must contain at least two vertices $u,v$ such that $uv\in E(G)\setminus E(G')$. 
	Notice that the reduction rule ensures that such $u,v$ are adjacent only with the vertices of $S$. 
	Due to the assumption $\abs{V(F)\cap S}=1$, $u$ and $v$ are pendant vertices in $F$, which is not possible for a DH obstruction $F$. 
	Therefore, $S$ is a good \dhm\ of $G'$. This completes the proof.
\end{proof}

Now we analyze the bound on the number of non-trivial connected components in $G-S$ after applying Reduction Rule~\ref{rrule:boundingcc}. It is not difficult to see that there are at most $(k+2)\abs{S}^2$ non-trivial connected components in $G-S$ after the reduction. We obtain the following bound, which is better than a naive bound whenever $\abs{S}$ is a polynomial in $k$ of degree at least two.

\begin{PROP}\label{prop:ccnumber}
Let $(G,k)$ be an instance and $S$ be a good DH-modulator. If $(G,k)$ is irreducible with respect to Reduction Rule~\ref{rrule:boundingcc}, then either the number of non-trivial connected components is at most $O(k^2\cdot \abs{S})$ or it is a \NO-instance.
\end{PROP}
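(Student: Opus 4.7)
The plan is to assume $(G,k)$ is a \YES-instance and fix a DH-modulator $T$ of $G$ with $|T|\leq k$, then bound $|\mathcal{M}|$, the number of non-trivial components of $G-S$ after Reduction Rule~\ref{rrule:boundingcc}. Each component $C\in\mathcal{M}$ is credited by the rule to a unique pair $p(C)\in\binom{S}{2}$, and every pair credits at most $k+2$ components, so $|\mathcal{M}|=\sum_p m(p)$ where $m(p)$ denotes the number of components credited to $p$. I partition pairs into $P_H=\{p:m(p)\geq 2\}$ and $P_L=\{p:m(p)=1\}$.

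To bound $|P_H|$, for each $p=(v,w)\in P_H$ I pick two credited components $C_1,C_2$ and invoke Lemma~\ref{lem:twoccobs} to obtain a DH obstruction $F_p\subseteq G[\{v,w\}\cup V(C_1)\cup V(C_2)]$. Since each component is credited to exactly one pair, two vertex-disjoint pairs $p,p'\in P_H$ yield vertex-disjoint obstructions $F_p, F_{p'}$ (the $\{v,w\}$'s are disjoint by assumption, and the credited components are disjoint across different pairs). Hence a matching of size $k+1$ in $P_H$ would produce $k+1$ pairwise vertex-disjoint DH obstructions, contradicting the existence of $T$. Therefore $\nu(P_H)\leq k$, and by the Gallai-type inequality $\tau\leq 2\nu$ applied to the graph on $S$ with edge set $P_H$, we get $|P_H|\leq 2k|S|$. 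The contribution of $P_H$ to $|\mathcal{M}|$ is thus at most $(k+2)|P_H|=O(k^2|S|)$.

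For $P_L$ I split each $p=(v,w)\in P_L$ by its interaction with $T$: (i) $\{v,w\}\cap T\neq\emptyset$ contributes at most $2k|S|$ pairs; (ii) $\{v,w\}\subseteq S\setminus T$ and $V(C_p)\cap T\neq\emptyset$ contributes at most $|T\setminus S|\leq k$ pairs, since the credited components are pairwise disjoint and each must contain a vertex of $T\setminus S$; (iii) $\{v,w\}\subseteq S\setminus T$ and $V(C_p)\cap T=\emptyset$, so the pair and its credited component lie entirely in $G-T$, which is distance-hereditary. In case (iii), Lemma~\ref{lem:twoccobs} applied inside $G-T$ forces each such pair to credit at most one such component, since otherwise two components credited to the same pair would produce a DH obstruction in the distance-hereditary graph $G-T$.

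The main obstacle is then to bound the number of case-(iii) pairs by $O(k|S|)$. I expect this to follow from a structural argument on the canonical split decomposition of $G-T$ combined with Theorem~\ref{thm:bouchet}: each vertex of $S\setminus T$ sits inside a unique star or complete bag of the decomposition, and for a pair $(v,w)\subseteq S\setminus T$ to witness a case-(iii) component the paths from $v$ and $w$ in the decomposition must meet the component's sub-decomposition in restricted ``crossing'' patterns, limiting how many distinct witness pairs on $S\setminus T$ can simultaneously exist without violating the distance-heredity of $G-T$. Summing the contributions from $P_H$ and the three sub-cases of $P_L$ then yields $|\mathcal{M}|=O(k^2|S|)$, as claimed.
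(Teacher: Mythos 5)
Your outline for the pairs crediting at least two components matches the paper's first step (a packing/covering argument via Lemma~\ref{lem:twoccobs}, giving an $O(k^2\abs{S})$ contribution), and your cases (i)--(ii) for single-credit pairs touching the solution $T$ are fine. The genuine gap is exactly where you flag it: case (iii). You assert that a structural argument on the canonical split decomposition of $G-T$ should limit the number of pairs in $S\setminus T$ each witnessing one component disjoint from $T$, but you give no such argument, and it is not a routine consequence of Theorem~\ref{thm:bouchet}. The missing idea is a \emph{cyclic} generalization of Lemma~\ref{lem:twoccobs}: the paper proves (Lemma~\ref{lem:cycleccobs}) that if $v_1,\dots,v_m$ is a circular ordering of vertices of $S$ such that each consecutive pair witnesses a distinct non-trivial component, then a DH obstruction already lives inside the union of these vertices and components. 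This is the tool that controls single-credit pairs: viewing them as edges of an auxiliary graph on $S$, any cycle in that graph yields a DH obstruction. The paper then applies the Erd\H{o}s--P\'osa theorem to conclude that after deleting $O(k\log k)$ further vertices the auxiliary graph is a forest, giving the linear-in-$\abs{S}$ count. Without Lemma~\ref{lem:cycleccobs} (or an equivalent), your case-(iii) bound is unsupported, and the split-decomposition ``crossing pattern'' heuristic would in effect have to reprove that lemma.

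It is worth noting that your framework of fixing a solution $T$ and conditioning on avoidance of $T$ could, once the cycle lemma is available, yield a slightly cleaner finish than the paper's: in case (iii) every witnessing pair and its credited component lie entirely inside the distance-hereditary graph $G-T$, so by Lemma~\ref{lem:cycleccobs} the auxiliary graph of case-(iii) pairs can contain \emph{no} cycle at all (the components credited to distinct pairs are automatically distinct), hence it is a forest with at most $\abs{S}$ edges -- no Erd\H{o}s--P\'osa needed. The paper instead works without fixing $T$, packs $2$-cycles to remove a set $S_1$ of at most $2k$ vertices, and invokes Erd\H{o}s--P\'osa on the remaining multigraph to get a feedback vertex set of size $O(k\log k)$; both routes give $O(k^2\abs{S})$. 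But as submitted, your proposal is missing the key lemma that makes either route work.
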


The basic idea is that if there is a collection of $k+1$ pairwise disjoint pairs in $S$, each witnessing at least two non-trivial connected components of $G-S$, there exist $k+1$ disjoint copies of DH obstructions due to Lemma~\ref{lem:twoccobs}. Therefore, after removing the vertices in this {\sl matching} structure, any pair of vertices in the remaining part of $S$ must witness at most one connected component. Then, we use  Erd\H{o}s-P\'osa property for cycles and a generalization of Lemma~\ref{lem:twoccobs} to argue that the remaining part of $S$ is {\sl sparse}, in the sense that the number of connected components witnessed by this part is linear in $\abs{S}$.

\begin{THM}[Erd\H{o}s and P\'osa~\cite{ErdosP1965}]\label{thm:EPcycle}
There exists a constant $r$  such that, given an arbitrary graph $G$ and a positive integer $k$, either finds $k+1$ vertex-disjoint cycles or a vertex set $X
\subseteq V(G)$ with $\abs{X}\leq r\cdot k\log k$ hitting all cycles of $G$. 
\end{THM}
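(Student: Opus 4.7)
The plan is to prove this classical theorem by induction on $k$, following the original approach of Erd\H{o}s and P\'osa. The argument rests on three ingredients: a preprocessing step reducing to minimum degree at least $3$, a Moore-type girth bound, and a careful inductive telescoping.

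\textbf{Preprocessing.} Vertices of degree at most $1$ cannot lie on any cycle, so iteratively deleting them changes neither the maximum packing of vertex-disjoint cycles nor the minimum size of a cycle hitting set. Each vertex $v$ of degree exactly $2$ with neighbors $u, w$ can then be \emph{suppressed}: replace $v$ and its two incident edges by a single edge $uw$. Working in the multigraph model (parallel edges and loops, both encoding short cycles, are permitted) preserves the bijection between cycles of the reduced multigraph and cycles of $G$, so both packing and covering numbers are preserved. Any hitting set found for the reduced multigraph lifts back to a hitting set of the same size in $G$.

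\textbf{Girth bound.} Every multigraph $G'$ of minimum degree at least $3$ on $n$ vertices has girth at most $2\log_2 n + 1$. This follows by breadth-first search from any vertex: if no cycle of the claimed length existed, the BFS layers would double in size at each step, which is impossible once the total exceeds $n$. In particular, a shortest cycle in $G'$ can be found in polynomial time by running BFS from each vertex.

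\textbf{Induction.} For $k=0$ the claim is vacuous with $X := \emptyset$. For $k \ge 1$, preprocess $G$ to obtain a multigraph $G'$ of minimum degree at least $3$. If $\abs{V(G')} \le c \cdot k \log k$ for a suitable constant $c$, take $X := V(G')$; after lifting, $\abs{X} \le r \cdot k \log k$ whenever $r \ge c$. Otherwise, use the girth bound to locate a cycle $C$ in $G'$ of length $O(\log k)$, and apply the induction hypothesis to $G' - V(C)$ with parameter $k-1$. Either this yields $k$ vertex-disjoint cycles, which together with $C$ furnish $k+1$ vertex-disjoint cycles of $G$; or it yields a hitting set $X'$ of size at most $r(k-1)\log(k-1)$, in which case $X := X' \cup V(C)$ hits all cycles of $G'$ and hence of $G$. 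The size of $X$ satisfies $\abs{X} \le r(k-1)\log(k-1) + O(\log k)$, which telescopes to $r k \log k$ for sufficiently large $r$.

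The principal obstacle is the quantitative tightness of this recurrence. After removing $V(C)$ the minimum-degree-$3$ condition can be destroyed and the graph must be re-preprocessed; more seriously, bounding the newly found short cycle by $O(\log k)$ rather than $O(\log n)$ requires that the inductive invariant enforce an upper bound on $\abs{V(G')}$ in terms of $k$, which is exactly what the case split $\abs{V(G')} \le c k \log k$ secures. Threading these invariants through the induction and choosing the constants $c$ and $r$ consistently so that the additive $O(\log k)$ per level sums to $O(k \log k)$ across $k$ levels is the delicate part of the argument; once this is carried out, the classical $O(k \log k)$ bound of Erd\H{o}s and P\'osa follows, and the whole procedure is algorithmic since each step (preprocessing, BFS-based shortest cycle, recursive call) runs in polynomial time.
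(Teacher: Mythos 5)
The paper offers no proof of this statement: Theorem~\ref{thm:EPcycle} is the classical Erd\H{o}s--P\'osa theorem and is used as a black box (cited from~\cite{ErdosP1965}), so there is nothing internal to compare your argument against. Your sketch follows the standard proof strategy for this theorem --- reduce to a multigraph of minimum degree $3$, use the logarithmic girth bound, and recurse on a shortest cycle --- and that skeleton is sound and algorithmic, as required by the statement.

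There is, however, one genuine gap in the quantitative accounting, and it is exactly the point you flag at the end but then resolve incorrectly. In your ``otherwise'' branch you have $\abs{V(G')} > c\,k\log k$, and the girth bound only gives a shortest cycle of length $O(\log \abs{V(G')})$, which is $O(\log n)$ and not $O(\log k)$; the case split does not ``secure'' an upper bound on $\abs{V(G')}$ in the branch where the cycle is sought --- it secures the opposite. As written, the recurrence telescopes to $\abs{X} = O(k\log n)$, not $O(k\log k)$. The missing idea is the packing lemma implicit in Erd\H{o}s--P\'osa's argument: if the preprocessed multigraph has at least $s_k = \Theta(k\log k)$ vertices, then the recursion is \emph{guaranteed} to terminate with $k+1$ disjoint cycles and the hitting-set branch is never reached. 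This follows because each level removes only $O(\log n_i)$ vertices (shortest cycle plus re-preprocessing), the map $x \mapsto x - O(\log x)$ is increasing, and one can choose $s_k$ so that $s_k - O(\log s_k) \ge s_{k-1}$; hence $n_j \ge s_{k-j} > 0$ at every level and a cycle always exists. Consequently the hitting-set branch is only ever exercised when the initial preprocessed graph already has $O(k\log k)$ vertices, in which case every $n_i = O(k\log k)$ and every shortest cycle found really does have length $O(\log k)$, making the telescoping legitimate. With that case analysis inserted (and the $k=0$ base case corrected --- for $k=0$ you must return a cycle if one exists, since $X=\emptyset$ is only valid for forests), your proof is a correct rendering of the classical argument.
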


The following is a generalization of Lemma~\ref{lem:twoccobs}. 

	\begin{LEM}\label{lem:cycleccobs}
	Let $(G,k)$ be an instance and $S$ be a good DH-modulator.
	Suppose that $v_1, v_2, \ldots, v_m$ is a circular ordering of $m(\geq 2)$ vertices in $S$ and $\mathcal{C}$ be a set of non-trivial connected components of $G-S$ such that 
	over all $i$, the pairs $(v_i, v_{i+1})$ witness mutually distinct components of $\mathcal{C}$, where $v_{m+1}=v_1$. 
	Then $G[\{v_1, v_2, \ldots, v_m\}\cup \bigcup_{C\in \mathcal{C}}V(C)]$ contains a DH obstruction.
	\end{LEM}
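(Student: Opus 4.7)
The plan is to generalize the strategy used in the proof of Lemma~\ref{lem:twoccobs}: locate an induced path of length at least three whose endpoints share a common neighbor, and then invoke Lemma~\ref{lem:createdhobs}. The analogue of the distinguished vertex $v$ in that proof will be $v_1$, and the analogues of $z,z'$ will lie in the two components of $\mathcal{C}$ flanking $v_1$ in the cyclic ordering. For the remainder of the plan, let $C_i$ denote the component of $\mathcal{C}$ witnessed by $(v_i,v_{i+1})$ and set $G' := G[\{v_1,\dots,v_m\}\cup \bigcup_{i} V(C_i)]$.

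First I would choose the endpoints. By the witnessing condition on $(v_1,v_2)$ and $C_1$, there exists $z\in C_1$ adjacent to exactly one of $v_1,v_2$; after possibly reversing the cyclic ordering (which preserves the hypothesis) I may assume $z\sim v_1$ and $z\not\sim v_2$. Using $(v_m,v_1)$ witnessing $C_m$, pick $z'\in C_m$ with $z'\sim v_1$. Next, inside $G'-v_1$ I would take a shortest $z$–$z'$-path $P$. Such a path exists because $G'-v_1$ is connected: each pair $v_i,v_{i+1}$ with $2\le i\le m-1$ is joined by a walk through the witnessed component $C_i$, and $z,z'$ attach to this spine through their neighbors in $S\setminus\{v_1\}$. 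Being shortest, $P$ is induced.

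The crux is showing $|P|\ge 3$. Since $z$ and $z'$ lie in distinct components of $G-S$, clearly $|P|\ge 2$. A length-$2$ path $z-a-z'$ would require $a$ to be adjacent to both endpoints; $a$ cannot lie in $C_1$ or $C_m$ (distinct components of $G-S$ are non-adjacent), so $a\in \{v_2,\dots,v_m\}$, and since $z\not\sim v_2$ we must have $a=v_j$ for some $3\le j\le m$. If no such shortcut $v_j$ exists, then $|P|\ge 3$ and, applying Lemma~\ref{lem:createdhobs} to the induced path $P$ together with the vertex $v_1\notin V(P)$ adjacent to both endpoints of $P$, we obtain a DH obstruction inside $G'$, as desired.

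The main obstacle is the residual case, in which every admissible choice of $z\in C_1$ and $z'\in C_m$ admits such a two-step shortcut through some $v_j$ with $3\le j\le m$. My plan here is induction on $m$, the base $m=2$ being exactly Lemma~\ref{lem:twoccobs}. In this case $v_j$ has neighbors in both $C_1$ and $C_m$; combining this with the witness vertices inside $C_1$ and $C_m$, one argues (after possibly varying the choice of $z$ or $z'$ among all admissible vertices) that the pair $(v_1,v_j)$ must witness one of $C_1$ or $C_m$. Then the strictly shorter cyclic ordering $v_1,v_j,v_{j+1},\dots,v_m$ together with the components $C_j,C_{j+1},\dots,C_m$ and either $C_1$ or $C_m$ satisfies the hypothesis for parameter $m'=m-j+2<m$, so the inductive hypothesis applied to this reduced instance produces a DH obstruction inside $G[\{v_1,v_j,\dots,v_m\}\cup V(C_1)\cup V(C_j)\cup\cdots\cup V(C_m)]\subseteq G'$. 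Pinning down exactly why $(v_1,v_j)$ witnesses one of the flanking components — which may require exploiting the freedom in choosing $z$ and $z'$ to create an asymmetric neighborhood — is where the careful case analysis concentrates.
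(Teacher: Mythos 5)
Your skeleton is the same as the paper's: induct on $m$ with Lemma~\ref{lem:twoccobs} as the base case, and in the ``clean'' situation build an induced path of length at least $3$ in $G'-v_1$ between two neighbors of $v_1$ and invoke Lemma~\ref{lem:createdhobs}; otherwise pass to a strictly shorter circular ordering. The parts you spell out (choice of $z,z'$, connectivity of $G'-v_1$, a length-$2$ shortcut can only be some $v_j$ with $3\le j\le m$ since $z\not\sim v_2$) are fine. The problem is precisely the residual case that you leave as ``where the careful case analysis concentrates'': the claim you propose to prove there is false in general. From $z\in N(v_1,C_1)\cap N(v_j,C_1)$ and $z'\in N(v_1,C_m)\cap N(v_j,C_m)$ you cannot conclude that $(v_1,v_j)$ witnesses $C_1$ or $C_m$: it may happen that $N(v_j,C_1)=N(v_1,C_1)$ and $N(v_j,C_m)=N(v_1,C_m)$, in which case $(v_1,v_j)$ witnesses neither. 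Varying $z$ or $z'$ cannot repair this, since witnessing depends only on the sets $N(\cdot,C_1)$, $N(\cdot,C_m)$, not on which admissible vertex you picked. Moreover, even in the branch where $(v_1,v_j)$ witnesses only $C_m$, your proposed reduced instance $v_1,v_j,v_{j+1},\ldots,v_m$ would need $C_m$ both for the pair $(v_1,v_j)$ and for $(v_m,v_1)$, violating the ``mutually distinct components'' hypothesis, and for $j=m$ it is not even shorter; so that branch gives no progress as written.

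The correct completion — and this is exactly the dichotomy the paper proves as its claim that every $C_i$ sees only $v_i,v_{i+1}$ — uses only $C_1$: since $v_j\notin\{v_1,v_2\}$ has the neighbor $z$ in $C_1$ and $(v_1,v_2)$ witnesses $C_1$, the set $N(v_j,C_1)$ must differ from at least one of $N(v_1,C_1)$, $N(v_2,C_1)$. If $N(v_j,C_1)\neq N(v_1,C_1)$, then $(v_1,v_j)$ witnesses $C_1$ and the ordering $v_1,v_j,v_{j+1},\ldots,v_m$ with components $C_1,C_j,\ldots,C_m$ satisfies the hypothesis with $m-j+2<m$ vertices. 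If instead $N(v_j,C_1)=N(v_1,C_1)$, then $N(v_j,C_1)\neq N(v_2,C_1)$, so $(v_2,v_j)$ witnesses $C_1$ and the ordering $v_2,v_3,\ldots,v_j$ with components $C_2,\ldots,C_{j-1},C_1$ works with $j-1<m$ vertices. Either way induction finishes. (A small additional slip: your reason for connectivity of $G'-v_1$, that ``$z,z'$ attach through their neighbors in $S\setminus\{v_1\}$,'' is not quite right since $z$ may have no such neighbor; $z$ reaches $v_2$ through $C_1$ because $N(v_2,C_1)\neq\emptyset$ and $C_1$ is connected, and similarly for $z'$ via $C_m$ and $v_m$.) With the corrected dichotomy your argument coincides with the paper's proof, just organized around the shortcut vertex rather than around the paper's upfront claim that each $C_i$ has neighbors exactly $\{v_i,v_{i+1}\}$.
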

	\begin{proof}
	We prove by induction on $m$. Let $C_1,\ldots , C_m$ be mutually distinct components of $\mathcal{C}$ witnessed by $(v_i,v_{i+1})$. Let $H:=G[\{v_1, v_2, \ldots, v_m\}\cup \bigcup_{C\in \mathcal{C}}V(C)]$.
	The statement is equivalent to Lemma~\ref{lem:twoccobs}  when $m=2$, hence we assume $m\geq 3$. 

	We claim that for every $i$, the neighbors of $C_i$ among $\{v_1,\ldots , v_m\}$ are exactly $\{v_i,v_{i+1}\}$ or there is
	a DH obstruction in $H$. 
	Suppose that there exist $i$ and $v_j\notin \{v_i,v_{i+1}\}$ such that $v_j$ has a neighbor in $C_i$. 
	If $N(v_j, C_i)=N(v_i, C_i)$, then $N(v_j, C_i)\neq N(v_{i+1}, C_i)$, and thus  
	$(v_j, v_{i+1})$ is a witnessing pair for $C_i$. Observe that $v_{i+1}, v_{i+2}, \ldots , v_j$ is a circular ordering (possibly $j=i+2$) meeting the condition of the statement, with $C_{i+1},\ldots , C_i$ as the connected components witnessed by the consecutive pairs $(v_{i+1},v_{i+2}),\ldots , (v_j,v_{i+1})$. 
	If $N(v_j, C_i)\neq N(v_i, C_i)$, the circular ordering $v_i,v_j,\ldots , v_{i-1}$, possibly with $j=i-1$, meets the condition with $C_i,C_j,\ldots , C_{i-1}$ witnessed by the pairs $(v_i,v_j), (v_j,v_{j+1}),\ldots , (v_{i-1},v_i)$. In both cases, we conclude that $H$ contains a DH obstruction by induction hypothesis.
	
	Therefore, we may assume that for every $i$, the neighbors of $C_i$ among $\{v_1,\ldots , v_m\}$ are exactly $\{v_i,v_{i+1}\}$. 
	Since $N(v_1, C_1)\neq \emptyset$, $N(v_2, C_1)\neq \emptyset$, and $N(v_1, C_1)\neq N(v_2, C_1)$, 
	there exists a vertex $z\in V(C_1)$ where $z$ is adjacent to only one of $v_1$ and $v_2$.
	By symmetry, we may assume that $z$ is adjacent to $v_1$.
	We choose a neighbor $z'$ of $v_1$ in $C_m$.
	Note that $H-v_1$ is connected. Let $P$ be a shortest path from $z$ to $z'$ in $H-v_1$. 
	Since $\{v_2,\ldots , v_m\}$ separates $z$ and $z'$ in $H-v_1$ and the only neighbor of $C_1$ in $\{v_2,\ldots , v_m\}$ is $v_2$ by the above claim, the path $P$ traverse $v_2$. From the fact that $z$ is not adjacent with $v_2$, it follows that $P$ has length at least 3. By Lemma~\ref{lem:createdhobs}, $G[V(P)\cup \{v_1\}]$ contains a DH obstruction. This completes the proof. 
	\end{proof}

We are ready to prove Proposition~\ref{prop:ccnumber}

\begin{proof}[Proof of Proposition~\ref{prop:ccnumber}]
If $(G,k)$ is a \NO-instance, there is nothing to prove. We assume that $(G,k)$ is a \YES-instance. 

Let us define an auxiliary multigraph $F$ on the vertex set $S$ such that for every pair $v,w\in S$, the multiplicity of the edge $vw$ equals the number of non-trivial connected components that are marked by the witness of $(v,w)$ in Reduction Rule~\ref{rrule:boundingcc}. Recall that each non-trivial connected component of $G-S$ is marked at most once, which implies that the edge set of $F$ can be bijectively mapped to the non-trivial connected components of $G-S$ (after removing unmarked components). Therefore, it suffices to obtain a bound on the number of edges in $F$ with the edge multiplicity taken into account.

Construct a maximal packing of 2-cycles in $F$ and let $S_1\subseteq S$ be the vertices contained in the packing. By Lemma~\ref{lem:twoccobs}, a packing of size $k+1$ implies the existence of $k+1$ vertex-disjoint DH obstructions. Therefore, $\abs{S_1}\leq 2k$. 

Again, due to the assumption that $(G,k)$ is a \YES-instance, the subgraph $F-S_1$ does not have $k+1$ vertex-disjoint cycles: otherwise, $G$ contains $k+1$ vertex-disjoint DH obstructions by Lemma~\ref{lem:cycleccobs}. Theorem~\ref{thm:EPcycle} implies that there exists a vertex set $S_2\subseteq V(F)\setminus S_1$ hitting all cycles of $F-S_1$ with $\abs{S_2}\leq r\cdot k
\log k$  for some constant $r$. 

Now, the number of  edges in $F$  is at most 
\begin{align*}
&\abs{S_1}\cdot \abs{S}(k+2)+ \abs{S_2}\cdot \abs{S\setminus S_1} + (\abs{S\setminus S_1\setminus S_2})\\
&=	2k(k+2)\abs{S} + r\cdot k\log k\cdot \abs{S} + \abs{S}\leq (7+r)k^2\cdot \abs{S},
\end{align*}
which establishes the claimed bound.
\end{proof}

\section{Bounding the size of non-trivial connected components of $G-S$}\label{sec:boundingnontrivialcc}

	In this section, we introduce several reduction rules that reduce the number of bags in the canonical split decomposition of a non-trivial connected component of $G-S$, 
	where $S$ is a good DH-modulator of $G$. Together with Reduction Rule~\ref{rrule:exttwinreduction} which shrinks the twin set in each bag, 
	the result of this section bounds the number of vertices of each non-trivial connected component of $G-S$.
	In particular, we will prove that if the canonical split decomposition of a non-trivial connected component of $G-S$ contains 
	more than $3\abs{S}(20k+54)$ bags, then we can apply some reduction rule.

	Let $D$ be the canonical split decomposition of a connected component $H$ of $G-S$.
	Since $S$ is a good DH-modulator, 
	for each vertex $v$ of $S$, $G[V(H)\cup \{v\}]$ is distance-hereditary.
	In particular, if $G[V(H)\cup \{v\}]$ is connected,
	then we can extend $D$ into a canonical split decomposition of $G[V(H)\cup \{v\}]$ using the result by Gioan and Paul~\cite{GioanP2012}.
	By Theorem~\ref{thm:GioanP2012}, in time $\mathcal{O}(\abs{V(G)})$ 
	we can uniquely decide either a bag or a marked edge of $D$ that will {\sl accommodate} $v$ so as to extend the split decomposition $D$ of $H$ into the split decomposition of $G[V(H)\cup \{v\}]$. See Theorem~\ref{thm:GioanP2012} for details on how the unique bag (called the partially accessible bag) or marked edge is characterized.
	The bag or marked edge that accommodations $v$ shall be colored for all $v\in S$, and also we apply a short-range propagation rule on the colored parts. We shall apply reduction rules to shrink uncolored parts.

	Let $f$ be the injective function from $S\cap N_G(V(H))$ to the union of the set of all marked edges and the set of all bags such that
	$f(v)$ is the the partially accessible bag or the marked edge indicated by Theorem~\ref{thm:GioanP2012}.
	We can compute this function $f$ in time $\mathcal{O}(\abs{S} \cdot \abs{V(G)})$.
	A bag or a marked edge in $D$ is \emph{$S$-affected} if it is $f(v)$ for some $v\in S$, and \emph{$S$-unaffected} otherwise.

	We may assume that $D$ consists of more than one bag.
	For two adjacent bags $B_1$ and $B_2$, we denote by $e(B_1, B_2)$ the marked edge linking $B_1$ and $B_2$.

\subsection{Reduction Rules to bound the number of leaf bags}\label{subsec:leafbags}

We first provide simpler rules.
\begin{RRULE}\label{rrule:removeleaf1}
If $v$ is a vertex of degree at most 1 in $G$, then remove $v$.
\end{RRULE}
The safeness of Reduction Rule~\ref{rrule:removeleaf1} is clear.

\begin{figure}
\includegraphics[scale=0.35]{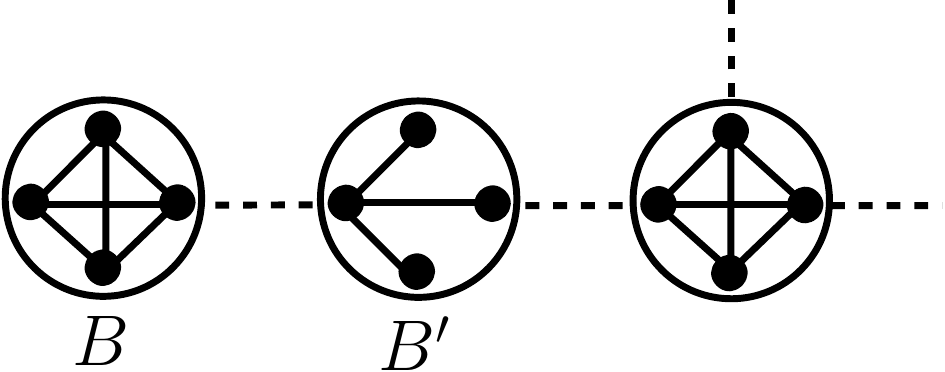} \qquad \qquad
\includegraphics[scale=0.35]{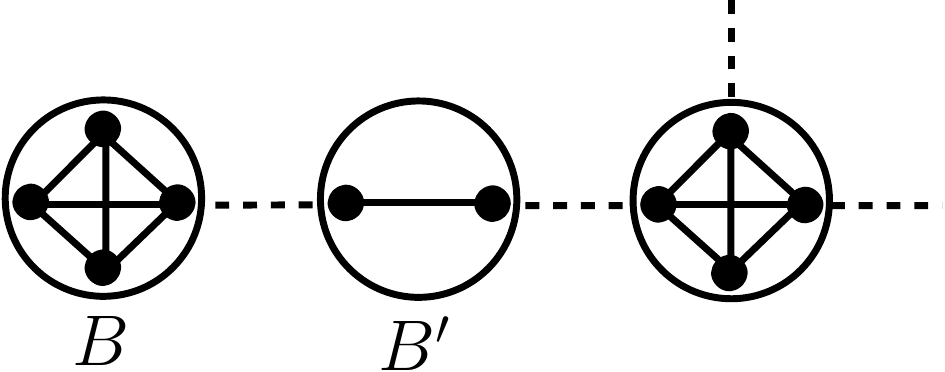}
\caption{An illustration of Reduction Rule~\ref{rrule:removeleaf2}. }
\label{fig:removeleaf2}
\end{figure}

\begin{RRULE}\label{rrule:removeleaf2}
Let $B$ be a leaf bag of $D$ and let $B'$ be its neighbor bag such that 
\begin{itemize}
\item $B$, $B'$, and $e(B,B')$ are $S$-unaffected, 
\item $D-V(B')$ has exactly two connected components, and
\item $B'$ is a star bag whose center is adjacent to $B$.
\end{itemize}
Then remove the unmarked vertices in $B'$.
\end{RRULE}
Reduction Rule~\ref{fig:removeleaf2} is illustrated in Figure~\ref{fig:removeleaf2}.

\begin{LEM}\label{lem:saferemoveleaf}
Reduction Rule~\ref{rrule:removeleaf2} is safe.
\end{LEM}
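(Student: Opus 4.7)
The plan is as follows. Let $u_1,\ldots,u_t$ denote the unmarked vertices of $B'$ removed by the rule, and set $G':=G-\{u_1,\ldots,u_t\}$. Let $m_1$ be the center of $B'$ (the marked endpoint of $e(B,B')$ in $V(B')$) and let $m_B$ be the other endpoint of $e(B,B')$, which lies in $V(B)$. The forward direction is immediate: for any DH-modulator $T$ of $G$ with $|T|\le k$, the set $T\setminus\{u_1,\ldots,u_t\}$ is a DH-modulator of $G'$ of size at most $k$, because $G'-(T\setminus\{u_1,\ldots,u_t\})$ equals $G-T-\{u_1,\ldots,u_t\}$ and is an induced subgraph of the distance-hereditary graph $G-T$.

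For the reverse direction, fix a DH-modulator $T'$ of $G'$ with $|T'|\le k$ and suppose for a contradiction that $G-T'$ contains a DH obstruction $F$. Since $G'-T'=(G-T')-\{u_1,\ldots,u_t\}$ is distance-hereditary, $F$ must contain at least one $u_i$. The strategy is then to derive a contradiction by producing a DH obstruction $F'\subseteq G-T'$ with $V(F')\cap\{u_1,\ldots,u_t\}=\emptyset$; such an $F'$ lies in $G'-T'$, contradicting the choice of $T'$.

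The first technical step is to pin down the severely restricted neighborhood of each $u_i$. By the alternating-path characterization of edges of $H$ in the canonical split decomposition $D$, together with the fact that $u_i$ is an unmarked leaf of the star bag $B'$ whose center $m_1$ is adjacent in $D$ to $m_B$ through the marked edge $e(B,B')$, the neighbors of $u_i$ in $V(H)$ are exactly the set $N_B$ of unmarked vertices of $B$ adjacent to $m_B$ inside $B$. In particular, $u_1,\ldots,u_t$ are pairwise false twins in $H$. For the $S$-neighbors, fix $v\in S$ adjacent to some $u_i$ and examine the accessibility of $B'$ with respect to $N_H(v)$. Because $B'$ is $S$-unaffected, Theorem~\ref{thm:GioanP2012} forbids $B'$ from being partially accessible, hence $B'$ is either fully accessible (forcing $v$ to be adjacent to every $u_j$) or singly accessible (forcing $v$ to be adjacent to at most one $u_j$, and, since the center $m_1$ must then be accessible, to have a neighbor in $V(B)$). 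The analogous accessibility analyses for $B$ and $e(B,B')$, which are also $S$-unaffected, further constrain the pattern of $v$'s neighbors inside $V(B)$.

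The main step is the replacement argument. Using the preceding observations, every $u_i\in V(F)$ has all of its $F$-neighbors inside $N_B\cup (S\cap V(F))$, each such $S$-neighbor has at least one neighbor in $V(B)$, and the adjacencies between $V(B)$ and the $B''$-side of $D$ that \emph{pass through} $B'$ are realised by alternating paths in $D$ that use none of the $u_i$'s (in effect, the split decomposition with $B'$ bypassed realises the same edges between the two sides). Combining these, one performs a case analysis on the type of $F$ (gem, house, domino, or $C_n$ with $n\ge 5$), on the number of $u_i$'s in $V(F)$, and on their $S$-neighbors appearing in $V(F)$, and in each case either substitutes an offending $u_i$ by a suitable vertex of $N_B\setminus T'$, or reroutes the obstruction through the alternating-path bypass of $B'$, yielding $F'$. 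The main obstacle is precisely this case analysis: one must juggle the various shapes of $F$, the possibility of several $u_i$'s in $V(F)$ that are distinguishable in $G-T'$ through $S$-neighbors coming from the singly-accessible case, and verify that each substituted vertex is neither already in $V(F)$ nor in $T'$.
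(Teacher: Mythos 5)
Your reduction of the lemma to the claim that no DH obstruction of $G$ contains an unmarked vertex $u_i$ of $B'$ is the right framing, and you correctly compute that the $H$-neighbourhood of each $u_i$ is exactly $N_B$. But you stop short of the two consequences of the $S$-unaffectedness hypotheses that make the proof immediate, and instead propose a substitution-or-reroute case analysis that does not go through. The first fact you need is that $N_B$ is a twin set in $G$, not just in $H$: since $B$ is a leaf bag and $B$ and $e(B,B')$ are both $S$-unaffected, a vertex $v\in S$ with a neighbour in $N_B$ must render $B$ fully accessible (a singly accessible $B$ would have to be oriented away from $f(v)$, impossible for a leaf bag whose centre's only marked edge is $e(B,B')\neq f(v)$), so $v$ sees all of $N_B$ or none of it. The second --- and this is what your accessibility discussion leaves open --- is that no $v\in S$ is adjacent to any $u_i$ at all. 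If $v$ were adjacent to $u_i$, then either $B'$ would be singly accessible, in which case $D(N_H(v))$ consists only of the two bags $B,B'$ and $f(v)$ is forced to be $e(B,B')$; or $B'$ would be fully accessible, in which case the orientation condition in Theorem~\ref{thm:GioanP2012} forces $f(v)\in\{B,e(B,B')\}$ because the centre of $B'$ points into the leaf bag $B$. Either way $f(v)$ lands on an $S$-unaffected object, a contradiction. Hence $N_G(u_i)\subseteq N_B$, and any DH obstruction containing $u_i$, being 2-connected, must contain at least two of $u_i$'s neighbours, all in $N_B$ --- two twins of $G$, which no DH obstruction can contain. That is the whole proof; no case analysis on the shape of $F$ is needed.

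The case analysis you sketch in the main step would not close the gap even if pursued. A vertex $b\in N_B$ is not a twin of $u_i$ in $G$: it is additionally adjacent to every $u_j$, to the $S$-vertices that see $N_B$ but not $B'$, and (through $B'$) to the part of $H$ beyond $B'$, so $G[(V(F)\setminus\{u_i\})\cup\{b\}]$ need not be an obstruction. You would also need $N_B\not\subseteq T'$, which is not guaranteed. And ``rerouting through the alternating-path bypass of $B'$'' conflates paths in the split decomposition with paths of $G$. The missing idea is the sharper neighbourhood conclusion above; once you have it, the paper's one-line twin argument finishes the proof.
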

\begin{proof}
	We claim that there is no DH obstruction containing an unmarked vertex in $B'$. 
	First observe that $B$ is not a star bag whose leaf is adjacent to $B'$ because $D$ is a canonical split decomposition.
	Furthermore, vertices in $B$ are pairwise twins in $G$, as $B$ is $S$-unaffected.
	If $G$ has a DH obstruction $H$ containing an unmarked vertex of $B'$, then $H$ contains at least $2$ vertices in $B$ as $B$, $B'$, and $e(B, B')$ are $S$-unaffected.
	But this is not possible since every DH obstruction does not contain twins.
	Therefore, we can safely remove the vertices in $B'$.
\end{proof}

\begin{RRULE}\label{rrule:fliptwins}
	Let $A\subseteq V(H)$  be a set of vertices that are pairwise twins in $G$.
	Flip the adjacency relation among the vertices of $A$ if the resulting graph has strictly smaller number of bags in $D$.
\end{RRULE}

The safety of Reduction Rule~\ref{rrule:fliptwins} is already observed in~\cite{EibenGK2016}, which is an immediate consequence of the fact that every DH obstruction contains at most one vertex among pairwise twins. 

\begin{LEM}\label{lem:polytime}
One can apply Reduction Rules~\ref{rrule:removeleaf1}, \ref{rrule:removeleaf2}, and~\ref{rrule:fliptwins} exhaustively in polynomial time. Furthermore, $S$ is a good \dhm\ in the resulting graph $G'$.
\end{LEM}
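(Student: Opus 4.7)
The plan is to verify two things: (i) the three rules terminate in polynomially many steps, each step running in polynomial time, and (ii) each rule preserves the good DH-modulator status of $S$. Safety of the rules (i.e., preservation of the \textsc{Yes/No} answer) is already noted for each rule, and I would not revisit it here.

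For termination, I would use the lexicographic monovariant $(\abs{V(G)}, \beta(G))$, where $\beta(G)$ denotes the total number of bags in the canonical split decompositions of the non-trivial connected components of $G$. Rule~\ref{rrule:removeleaf1} strictly decreases $\abs{V(G)}$. For Rule~\ref{rrule:removeleaf2}, since marked edges of $D$ form a matching and $D - V(B')$ has exactly two components, the star bag $B'$ has its center marked (as it is linked to $B$) and exactly one of its leaves marked; as every star bag of a canonical split decomposition has at least three vertices, at least one leaf of $B'$ is unmarked, so the rule strictly decreases $\abs{V(G)}$. Rule~\ref{rrule:fliptwins} leaves $\abs{V(G)}$ unchanged but, by its own definition, is applied only when it strictly decreases $\beta(G)$. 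Since $\beta(G) \le \abs{V(G)}$, this gives $\mathcal{O}(n^2)$ total applications. Each individual step runs in polynomial time: the canonical split decomposition is recomputed in linear time (Theorem~\ref{thm:dahlhaus}), the map $f$ and the $S$-affected/$S$-unaffected classification of bags and marked edges are computed in time $\mathcal{O}(\abs{S}\cdot\abs{V(G)})$ (Theorem~\ref{thm:GioanP2012}), and twin classes are detected in polynomial time.

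For preservation of goodness, Rules~\ref{rrule:removeleaf1} and~\ref{rrule:removeleaf2} only delete vertices. Since every DH obstruction is an induced subgraph, removal of a vertex outside $S$ does not create new DH obstructions, and the remark following Theorem~\ref{thm:goodmodulator} gives that $S$ stays a good \dhm. The corner case when Rule~\ref{rrule:removeleaf1} removes a vertex $v \in S$ is harmless because every DH obstruction has minimum degree at least two, hence $v$ lies in no obstruction of $G$ and $S \setminus \{v\}$ still meets every DH obstruction of $G - v$ in at least two vertices. For Rule~\ref{rrule:fliptwins}, the vertices of $A$ lie in $V(H) \subseteq V(G) \setminus S$ and their neighborhoods outside $A$ are unchanged by the flip, so every pair in $A$ remains a pair of twins in the modified graph $G'$. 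As no DH obstruction contains two twins, any DH obstruction $F$ in $G'$ intersects $A$ in at most one vertex, which means $G[V(F)] = G'[V(F)]$ and $F$ is also a DH obstruction of $G$. In particular, if $G - S$ is distance-hereditary then so is $G' - S$, and any DH obstruction of $G'$ meeting $S$ in exactly one vertex would yield the same configuration in $G$, contradicting the assumption that $S$ is good in $G$.

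The main obstacle I anticipate is the interaction between the three rules, since a flip may enable a new leaf-removal and a leaf-removal may in turn alter the bag counts of other components. This is precisely why I would package all three rules under the single lexicographic monovariant above, so that termination and the polynomial-time bound hold irrespective of the order in which applicable rules are chosen.
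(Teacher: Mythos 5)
Your proposal is correct and follows essentially the same route as the paper: termination is argued via the fact that each application decreases either the number of vertices or the number of bags (you merely package this as a lexicographic monovariant), and goodness of $S$ is preserved because Reduction Rules~\ref{rrule:removeleaf1} and~\ref{rrule:removeleaf2} only delete vertices while for Reduction Rule~\ref{rrule:fliptwins} the flipped set remains a twin set and no DH obstruction contains two twins. Your additional observations (that $B'$ always has an unmarked vertex, and the corner case of Rule~\ref{rrule:removeleaf1} deleting a vertex of $S$) are harmless refinements of the paper's argument.
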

\begin{proof}
	The statement trivially holds for Reduction Rules~\ref{rrule:removeleaf1} and \ref{rrule:removeleaf2}. 
	Let $A\subseteq V(H)$  be a set of vertices that are pairwise twins in $G$. 
	Then $A$ must be contained in a bag of $D$. 
	Notice that a twin set in $G-S$ is a twin set in $G$ if and only if it is  either complete or anti-complete to each vertex of $S$, which can be tested in polynomial time. 
	Therefore, by skimming through each bag and a twin set contained in it (and modifying the canonical split decomposition accordingly),
	we can detect a set $A$ to apply Reduction Rule~\ref{rrule:fliptwins} or correctly decide that no such $A$ exists. 
	Each application of Reduction Rules~\ref{rrule:removeleaf1}, \ref{rrule:removeleaf2}, and~\ref{rrule:fliptwins} reduces either the number of vertices in $G$ or the number of bags, 
	and the statement follows.

To see the second statement holds, we only need to check that $S$ is a good \dhm\ in $G'$ obtained by applying Reduction Rule~\ref{rrule:fliptwins} once. This is an immediate consequence of the fact $A$ is a twin set in $G'$ and no DH obstruction can contain two twin vertices.
\end{proof}

\begin{LEM}\label{lem:centermarked}
	Let $(G,k)$ be an instance reduced under Reduction Rules~\ref{rrule:removeleaf1}, \ref{rrule:removeleaf2},  and~\ref{rrule:fliptwins}. 
	If a leaf bag $B$ of $D$ is $S$-unaffected, then 
\begin{enumerate}[(i)]
\item either $B$ is a complete bag or it is a star bag whose center is marked, and
\item the unique bag $B'$ adjacent with $B$ is a star bag whose center is adjacent with $B$.
\end{enumerate}
\end{LEM}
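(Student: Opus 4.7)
The plan is to prove (i) by contradiction, and (ii) by a case analysis based on the canonical-split-decomposition invariants and a further appeal to Reduction Rule~\ref{rrule:fliptwins}.

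For (i), assume for contradiction that $B$ is a star bag whose center $c$ is unmarked, so the unique marked vertex $b$ of $B$ is a leaf of this star. A direct inspection of alternating paths shows that each unmarked leaf $\ell \ne b$ of $B$ has $c$ as its only neighbor in $H$: $\ell$ is unmarked, its only $D$-neighbor is the unmarked vertex $c$, and no alternating path starting at $\ell$ can extend past $c$ since continuing would require a marked edge at $c$. Combined with the $S$-unaffectedness of $B$ and the twin-preservation property recorded at the end of Section~\ref{subsec:extendingsd}, the unmarked leaves of $B$ form a set of pairwise twins in $G$ (they share their $H$-neighborhood $\{c\}$ and, by $S$-unaffectedness, their $S$-neighborhood). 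If every such leaf has degree $1$ in $G$ then Reduction Rule~\ref{rrule:removeleaf1} applies; otherwise the twin set has size at least $2$ and Reduction Rule~\ref{rrule:fliptwins} can be applied so that flipping the twin set forces a recomposition across $e(B,B')$ that strictly decreases the bag count. Either way, we contradict reducedness, establishing (i).

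For (ii), part (i) restricts $B$ to be complete or a star whose marked center is $b$. The canonical-split-decomposition invariants forbid a marked edge between two complete bags and between a star-leaf and a star-center; combining these with (i) leaves exactly four configurations of $(B,B')$, two of which already match the conclusion of (ii). The remaining two are: (b$'$) $B$ complete and $B'$ a star with $e(B,B')$ attaching at a leaf of $B'$, and (d$'$) $B$ a star with marked center $b$ and $B'$ complete. I rule out (b$'$) by flipping the clique twin set formed by the unmarked vertices of $B$ (which are twins in $G$ by $S$-unaffectedness): this turns $B$ into a star with marked center $b$, so $e(B,B')$ becomes a forbidden (star-center, star-leaf) edge; canonicalization recomposes that edge, merging $B$ and $B'$ into a single star bag centered at the original center of $B'$ and strictly decreasing the bag count, which contradicts reducedness under Rule~\ref{rrule:fliptwins}. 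Case (d$'$) is handled symmetrically: flipping the independent twin set of unmarked leaves of $B$ turns $B$ into a complete bag, making $e(B,B')$ a forbidden (complete, complete) edge whose recomposition merges $B$ and $B'$ into a single complete bag and again decreases the bag count.

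The main obstacle is to verify, in each of these flip-and-canonicalize steps, that the resulting decomposition of the modified graph is canonical and has strictly fewer bags than $D$ — in particular, that no cascading canonicalization is triggered beyond the single recomposition at $e(B,B')$. This follows because the surviving vertices of $B'$ (respectively, of $B$ in case (d$'$)) keep their center/leaf roles in the merged bag, so all other marked edges incident to the merged bag remain canonical.
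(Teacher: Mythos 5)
Your part (ii) is essentially the paper's argument (the paper also rules out the two bad configurations by flipping the unmarked twin set of $B$ so that $e(B,B')$ becomes a forbidden complete--complete or star-center--star-leaf marked edge and recomposes), so that half is fine. The problem is in part (i), in your ``otherwise'' branch. The whole point of (i) in the paper is that $S$-unaffectedness of a star leaf bag with \emph{unmarked} center forces the unmarked leaves to have \emph{no} neighbours in $S$ at all: if some $v\in S$ were adjacent to a leaf, then either $B$ would be partially accessible with respect to $N_G(v)\cap V(H)$ (hence $B=f(v)$, contradicting $S$-unaffectedness), or $B$ would be fully accessible but unorientable (its center is unmarked), contradicting that $G[V(H)\cup\{v\}]$ is distance-hereditary since $S$ is good. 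Hence every leaf has degree $1$ in $G$ and Reduction Rule~\ref{rrule:removeleaf1} alone gives the contradiction. You never establish this, and instead allow the leaves to have $S$-neighbours and try to finish with Reduction Rule~\ref{rrule:fliptwins}.

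That substitute step fails. Here the marked vertex $b$ of $B$ is a \emph{leaf} of the star, not its center, so flipping the adjacency among the unmarked leaves $\ell_1,\dots,\ell_t$ cannot trigger a recomposition across $e(B,B')$: after the flip the set $\{c,\ell_1,\dots,\ell_t\}$ is a clique in which only the center $c$ has neighbours outside, and the canonical decomposition of the modified component replaces $B$ by \emph{two} bags (a complete leaf bag on the flipped leaves and a small star bag centered at $c$, still attached to $B'$ through a leaf), so the number of bags does not decrease --- it increases when $t\ge 2$ --- and by canonicity of $D$ the vertex $b'$ is not the center of a star $B'$, so no recomposition at $e(B,B')$ occurs either. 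Consequently Reduction Rule~\ref{rrule:fliptwins} is simply not applicable and no contradiction is obtained. (In addition, your claim that the twin set has size at least $2$ is unjustified: $B$ may have a single unmarked leaf, in which case flipping is vacuous.) So (i) needs the paper's ``no edges between $S$ and the leaves of $B$'' observation; the flip-based shortcut does not work for a star bag whose marked vertex is a leaf.
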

\begin{proof}
	Suppose $B$ is a star bag whose center is unmarked. Since $B$ is $S$-unaffected, there are no edges between $S$ and the set of leaves of $B$.
	Thus, each leaf of $B$ is a leaf of $G$, contradicting that $(G,k)$ is reduced under Reduction Rule~\ref{rrule:removeleaf1}. 
	The first statement follows.

	To prove (ii), first suppose $B'$ is a complete bag. 
	Then $B$ cannot be a complete bag since there cannot be  two complete bags adjacent in a canonical split decomposition. 
	Hence by (i), $B$ is a star bag whose center is adjacent with $B'$. Since $B$ is $S$-unaffected, 
	each vertex of $S$ is either complete to the set of unmarked vertices of $B$, or anti-complete to it. 
	So, the set of unmarked vertices of $B$ is a twin set of $G$ and applying Reduction Rule~\ref{rrule:fliptwins} 
	will recompose the marked edge $e(B,B')$, strictly reducing the number of bags. 
	This contradicts that $(G,k)$ is reduced under Reduction Rule~\ref{rrule:fliptwins}.

	Therefore, $B'$ is a star bag. 
	First assume that a leaf of $B'$ is adjacent to $B$. 
	In that case, $B$ cannot be a star bag whose center is adjacent to $B'$, 
	because $D$ is a canonical split decomposition. Thus, $B$ is a complete bag. 
	Since $B$ is $S$-unaffected, the set of unmarked vertices of $B$ is a twin set of $G$, 
 	and applying Reduction Rule~\ref{rrule:fliptwins} will lead to recomposing the marked edge $e(B,B')$, 
	strictly reducing the number of bags. 
	Thus we conclude that the center of $B'$ is adjacent to $B$.
\end{proof}

	Throughout the rest of the section, we assume that $(G,k)$ is reduced under Reduction Rules~\ref{rrule:removeleaf1}, \ref{rrule:removeleaf2},  and~\ref{rrule:fliptwins}. 
	We call that a bag $B$ is a \emph{branch bag} if $D-V(B)$ contains at least $3$ connected components having at least two bags.
	We color the bags of $D$ with red and blue in the following way. 
	\begin{enumerate}
	\item If a bag $B$ is $S$-affected or incident with an $S$-affected edge, we color $B$ with red.
	\item If a bag $B$ is adjacent to an $S$-affected leaf bag,  we color $B$ with red.
	\item If $B$ is a branch bag, then we color $B$ with red. 
	\item All other bags are colored with blue.
	\end{enumerate}
	
	Let $\mathcal{R}$ be the set of all red bags, and let $\mathcal{Q}$ be the set of all blue leaf bags $B$ whose unique  neighbor bag is red. 
	We prove the followings.
	
	\begin{LEM}\label{lem:bluebags}
	For each bag $B$, there is at most one blue leaf bag adjacent to $B$.
	\end{LEM}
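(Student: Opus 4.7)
The plan is to argue by contradiction: suppose some bag $B$ has two distinct blue leaf bags $B_1$ and $B_2$ adjacent to it in $D$. The first step is to unpack what ``blue'' rules out for $B_1$ and $B_2$. Since $B_i$ is blue, it is not itself $S$-affected (else coloring rule (1) would color it red), and also the marked edge $e(B,B_i)$ cannot be $S$-affected, for otherwise its endpoint bag $B_i$ would be red by rule (1) again. So both $B_1$ and $B_2$ are $S$-unaffected leaf bags, which is exactly the hypothesis needed to invoke Lemma~\ref{lem:centermarked}.

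The second step is to apply Lemma~\ref{lem:centermarked}(ii) to each of $B_1$ and $B_2$ separately. For $B_1$, the unique bag adjacent to it is $B$, and the conclusion is that $B$ is a star bag whose center is adjacent to $B_1$, which means the marked edge $e(B,B_1)$ is incident with the center vertex of the star $B$. The symmetric conclusion for $B_2$ forces $e(B,B_2)$ to also be incident with the center of $B$.

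The main (and really only) obstacle is then to close the argument, which will be a direct consequence of the definition of a marked graph: the set $M(D)$ of marked edges forms a matching, so every marked vertex of $D$ is incident with at most one marked edge. But we have just shown that the center vertex of the star bag $B$ is incident with the two distinct marked edges $e(B,B_1)$ and $e(B,B_2)$, a contradiction. I would therefore conclude that no bag $B$ can be adjacent to two blue leaf bags, which proves the lemma. Note that no case analysis on whether $B_1,B_2$ are complete or star bags is actually required; the structural information we need lies entirely on the $B$-side and comes out of Lemma~\ref{lem:centermarked}(ii).
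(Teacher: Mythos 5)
Your proof is correct and follows essentially the same route as the paper: deduce that $B_1,B_2$ are $S$-unaffected, apply Lemma~\ref{lem:centermarked}(ii) to each, and derive a contradiction from the center of the star bag $B$ having to be adjacent to both. Your explicit appeal to the matching property of $M(D)$ just spells out the step the paper dismisses as ``which is impossible.''
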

	\begin{proof}
	Suppose for contradiction that there are a bag $B$ and two blue leaf bags $B_1$ and $B_2$ adjacent to $B$.
	Since each $B_i$ is a blue leaf bag, $B_i$ is $S$-unaffected.
	Therefore, by Lemma~\ref{lem:centermarked}, $B$ is a star bag whose center is adjacent to $B_i$ for both $i=1,2$, which is impossible.
	\end{proof}
	
	\begin{LEM}\label{lem:bluepath2}
	Let $(G,k)$ be an instance reduced under Reduction Rules~\ref{rrule:removeleaf1}, \ref{rrule:removeleaf2},  and~\ref{rrule:fliptwins}. 
	For any connected components $D'$ of $D-\bigcup_{B\in \mathcal{R} \cup \mathcal{Q}}V(B)$, $D'$ is adjacent with exactly two red bags in $D$.
	\end{LEM}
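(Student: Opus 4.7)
My plan is to establish the claim by bounding the number of red-bag neighbors of $D'$ in the tree $T$ of bags of $D$ from both sides: at most two via a branch-bag argument, and at least two via a leaf analysis that invokes Reduction Rule~\ref{rrule:removeleaf2}. Two preliminary facts are useful. First, all neighbors of $D'$ in $T$ lie in $\mathcal{R}$: a bag in $\mathcal{Q}$ is a blue leaf bag whose unique neighbor is red, hence it is never adjacent to a blue bag, let alone one in $D'$. Second, every red neighbor $R$ of $D'$ is a non-leaf bag of $T$: a leaf bag cannot be red by rule~(3) (not a branch bag) or by rule~(2) (its only neighbor is the blue bag in $D'$), so rule~(1) must apply, meaning $R$ is $S$-affected or the marked edge $e(R,B)$ for its unique neighbor $B \in D'$ is $S$-affected. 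In the first case, $B$ is red by rule~(2); in the second, $B$ is red by rule~(1); either contradicts $B \in D'$.

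For the upper bound, suppose $D'$ is adjacent to three distinct red bags $R_1,R_2,R_3$, via attachment bags $A_1,A_2,A_3 \in D'$. Since $D'$ is a connected subtree of $T$, the Steiner subtree inside $D'$ connecting $\{A_1,A_2,A_3\}$ admits a bag $A_c \in D'$ (possibly coinciding with one of $A_i$) that has three distinct directions in $T$ leading toward $R_1, R_2, R_3$ respectively. Since each $R_i$ is non-leaf in $T$, the component of $T-V(A_c)$ in the direction of $R_i$ contains at least $R_i$ together with some neighbor of $R_i$ beyond $A_c$, i.e.\ at least two bags. Hence $A_c$ is a branch bag and must be red by rule~(3), contradicting $A_c \in D'$.

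For the lower bound, suppose $D'$ is adjacent to at most one red bag. Since every blue bag has at most one blue-leaf-bag neighbor (Lemma~\ref{lem:bluebags}) and no red-leaf-bag neighbor (by the preliminary), and is not a branch bag, each bag in $D'$ has $T$-degree at most three. Viewing $D'$ as a subtree of $T$, it has at least two subtree-leaves (when $|D'| \ge 2$), at most one of which can carry the single external edge to $\mathcal{R}$, so $D'$ must contain a leaf bag $B$ of $T$ itself. Its unique $T$-neighbor $B^*$ lies in $D'$ (blue, not in $\mathcal{Q}$), and by Lemma~\ref{lem:centermarked}(ii) is a star bag whose center is adjacent to $B$; moreover $B, B^*, e(B,B^*)$ are all $S$-unaffected. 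If $B^*$ had exactly two $T$-neighbors then Reduction Rule~\ref{rrule:removeleaf2} would apply, contradicting irreducibility, so $B^*$ has exactly three $T$-neighbors: $B$ and two non-leaf neighbors $X_1, X_2$. If both $X_i$ are red, $D'$ has two red neighbors and we are done; if exactly one is red, $X_2 \in D'$, and I would iterate the same analysis along the blue direction; finally if both $X_i$ are blue, we continue inward. Since $T$ is finite, the iteration reaches another leaf bag of $T$ in $D'$, whose analogous analysis yields either the second red neighbor of $D'$ or a further applicability of Rule~\ref{rrule:removeleaf2}, both contradictions. The main obstacle is precisely this lower-bound step: one needs to confirm that the iteration terminates with the desired contradiction (tracking the finite subtree structure of $D'$), while the upper bound is driven cleanly by the definition of branch bags. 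The degenerate case $D' = T$ (i.e.\ $\mathcal{R} \cup \mathcal{Q} = \emptyset$) does not arise in the situations the lemma is applied to, as $H$ interacts with $S$ and thus $\mathcal{R} \neq \emptyset$.
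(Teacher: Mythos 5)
Your preliminaries and the upper bound are essentially right: no bag of $\mathcal{Q}$ can neighbor $D'$, every red neighbor of $D'$ must be a non-leaf bag of $D$ (otherwise its unique neighbor in $D'$ would have been colored red by rule (1) or (2)), and three red neighbors would force a median bag of $D'$ to have three components of $D-V(\cdot)$ with at least two bags each, i.e.\ to be a branch bag. This is the paper's exclusion of three or more red neighbors run in contrapositive (the paper instead argues that, absent branch bags in $D'$, some $R_i$ would be a leaf bag and then derives the same coloring contradiction), so that half is fine.

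The genuine gap is in the lower bound, at exactly the step you flagged. Having found a leaf bag $B$ of $D$ in $D'$ whose neighbor $B^*$ has exactly three neighbors $B, X_1, X_2$ with $X_1,X_2$ non-leaf, your iteration's terminal claim is unjustified: at ``another leaf bag of $T$ in $D'$'' the analogous local analysis again yields only a neighbor with exactly three neighbors, the other two non-leaf, and under the standing assumption (at most one red neighbor of $D'$) both of those can be blue, so neither a second red neighbor nor an application of Rule~\ref{rrule:removeleaf2} is produced; nothing in your sketch prevents this non-contradictory local picture from recurring, so finiteness alone does not finish the proof. A global step is needed, and the paper supplies it with an extremal choice: take $B$ to be a leaf bag of $D'$ \emph{farthest} from the unique red neighbor $L$. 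Then any component of $D-V(B^*)$ other than $\{B\}$ and the one containing $L$ lies entirely in $D'$ and, by the farthest choice, consists of a single leaf bag, giving two blue leaf bags adjacent to $B^*$ and contradicting Lemma~\ref{lem:bluebags}; hence $D-V(B^*)$ has exactly two components and Rule~\ref{rrule:removeleaf2} applies. (A counting alternative also works: delete all leaf bags of $D$ lying in $D'$ and observe that the pruned tree could have no leaf other than the unique attachment bag, which is impossible; but some such global argument must replace your asserted termination.) Your dismissal of the case $\mathcal{R}=\emptyset$ is no worse than the paper's own implicit assumption, so the lower-bound step is the one genuine defect.
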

	\begin{proof}
	Suppose that $D'$ is adjacent with exactly one red bag, say $L$, in $D$.
	Notice that $D'$ consists of at least two bags since otherwise, the single bag in $D'$ is in $\mathcal{Q}$.

	Let $B$ be a leaf bag of $D'$ that is farthest from $L$ in $D$ and $B'$ be the unique bag adjacent with $B$ in $D'$. Note that $B'\neq L$.
	Since $B$ is $S$-unaffected, Lemma~\ref{lem:centermarked} implies that  either $B$ is a complete bag or it is a star bag whose center is marked, 
	and $B'$ is a star bag whose center is adjacent with $B$. 
	
	We observe that $D-V(B')$ has exactly two connected components; the one consisting of $B$ and another component 
	adjacent with $L$ in $D$. Indeed, as we chose $B$ as a farthest leaf bag from $L$ in $D$, an additional component adjacent with $B'$ must consist of a single (leaf) bag. 
	However, this is impossible since $B$ is the only leaf bag of $D$ adjacent with $B'$ by Lemma~\ref{lem:bluebags}. 
	Now, $B$ and $B'$ satisfy the condition of Reduction Rule~\ref{rrule:removeleaf2}, contradicting the assumption that $(G,k)$ is reduced under this rule.
	
	Suppose that $D'$ is adjacent with at least three red bags, say $R_1,R_2$ and $R_3$, in $D$.
	Since $D'$ consists solely of blue bags, it does not contain a branch bag. Hence, at least one of $R_i$'s, say $R_1$, is a leaf bag of $D$. 
	However, $R_1$ could have been colored red only in (1), that is, $R_1$ is incident with $S$-affected edge or it is $S$-affected itself.
	In both cases, the bag of $D'$ adjacent with $R_1$ must be colored either by (1) or (2) of the coloring procedure, a contradiction.
	This completes the proof.
	\end{proof}
	
	\begin{LEM}\label{lem:redbags}
	Let $(G,k)$ be an instance reduced under Reduction Rules~\ref{rrule:removeleaf1}, \ref{rrule:removeleaf2},  and~\ref{rrule:fliptwins}. 
	Then, the number of red bags in $D$ is at most $3\abs{S}$.
	\end{LEM}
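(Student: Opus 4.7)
My plan is to decompose the count of red bags by the coloring rule that introduced them and bound each piece separately. Let $A_f$ denote the set of $S$-affected bags and $E_f$ the set of $S$-affected marked edges; the injectivity of $f$ gives $\abs{A_f}+\abs{E_f}\le \abs{S}$. Rule~(1) colors at most $\abs{A_f}+2\abs{E_f}$ bags (each $S$-affected bag contributes itself, and each $S$-affected marked edge contributes its two endpoint bags), while rule~(2) colors at most $\abs{A_f}$ further bags (each $S$-affected leaf bag has a unique neighbor, and the number of such leaves is at most $\abs{A_f}$). Adding these, the bags colored by rules~(1) and~(2) number at most $2(\abs{A_f}+\abs{E_f})\le 2\abs{S}$.

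The crux is to bound the branch bags that are not $S$-affected. The key sub-claim I would establish is that for every branch bag $B$, each of the (at least three) large components of $D-V(B)$ contains a \emph{location}, by which I mean either an $S$-affected bag or an $S$-affected marked edge whose two endpoints both lie in that component. Granted this, I would subdivide each $S$-affected marked edge of $D$ by a new midpoint vertex; the locations then become at most $\abs{A_f}+\abs{E_f}\le \abs{S}$ terminal vertices of the subdivided tree of bags, and every non-$S$-affected branch bag becomes a non-terminal vertex of degree at least three in the Steiner tree of those terminals, i.e., a Steiner point. By the classical fact that a Steiner tree on $k$ terminals has at most $k-2$ Steiner points, the number of non-$S$-affected branch bags is at most $\abs{S}-2$. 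Combined with the earlier bound, this gives $\abs{\mathcal{R}}\le 2\abs{S}+(\abs{S}-2)\le 3\abs{S}$.

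To prove the sub-claim, I would argue by contradiction. Suppose a large component $C$ of $D-V(B)$ contains no location. Then $C$ contains no rule~(1) bag: no $S$-affected bag lies in $C$, and an $S$-affected marked edge with one endpoint in $C$ would force the other endpoint to be $B$, but then the $C$-endpoint is a rule~(1) bag inside $C$, a contradiction. Similarly, $C$ contains no rule~(2) bag, since its witnessing $S$-affected leaf would have to lie in $C$. If $C$ still contains a branch bag, I pick the one $B^*$ farthest from $B$ and take a large component $C^*$ of $D-V(B^*)$ on the side opposite to $B$. By the choice of $B^*$, $C^*$ contains no branch bag, and by the previous reasoning no rule~(1) or~(2) bag either, so $C^*$ has no red bag at all; a short case analysis also rules out $\mathcal{Q}$-bags in $C^*$, since such a bag would require a red neighbor, which can neither lie in $C^*$ nor equal $B^*$ (otherwise the $\mathcal{Q}$-bag would form a singleton component of $D-V(B^*)$, violating $\abs{C^*}\ge 2$). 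Hence $C^*$ is a connected component of $D-\bigcup_{B\in\mathcal{R}\cup\mathcal{Q}}V(B)$ whose only external red neighbor is $B^*$, contradicting Lemma~\ref{lem:bluepath2}. When $C$ itself contains no branch bag, we take $C^*=C$ directly.

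The main obstacle is handling branch bags that may live inside a putative no-location component: they are themselves red via rule~(3) and thus obstruct an immediate application of Lemma~\ref{lem:bluepath2}. The farthest-branch-bag reduction peels them off to produce a clean all-blue subtree whose unique external red neighbor is precisely the configuration that Lemma~\ref{lem:bluepath2} forbids.
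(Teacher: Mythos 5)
Your overall approach is sound and genuinely different from the paper's. Both split the count into rules~(1)--(2) versus branch bags and obtain $2\abs{S}$ for the first part, but the paper builds an auxiliary tree $F$ on the branch bags, assigns two disjoint $S$-affected locations to each leaf of $F$, and concludes from the leaf count that $\abs{V(F)}\le\abs{S}$; your Steiner-tree framing instead makes every non-$S$-affected branch bag a degree-$\geq 3$ non-terminal vertex of the Steiner tree over at most $\abs{S}$ terminals and invokes the bound $\le\abs{S}-2$. This is arguably more transparent, since it bounds all such branch bags directly rather than passing through the leaves of $F$. Your device of peeling to the farthest branch bag and then invoking Lemma~\ref{lem:bluepath2} is essentially the same mechanism the paper uses to manufacture an $S$-affected location for its leaves.

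There is one genuine but localized gap. Your sub-claim requires each large component $C$ of $D-V(B)$ to contain a \emph{location}, defined as an $S$-affected bag or an $S$-affected marked edge with \emph{both} endpoints in $C$. This is too strong: $C$ can contain no location and yet contain a rule~(1) bag, namely the bag of $C$ joined to $B$ by an $S$-affected marked edge (one endpoint is $B$, so the edge fails your two-endpoints condition). Your own proof runs into exactly this case --- ``the $C$-endpoint is a rule~(1) bag inside $C$, a contradiction'' is not a contradiction, it is precisely the configuration that ``no location'' fails to exclude --- so the deduction ``$C$ contains no rule~(1) bag'' does not go through as written, and when $C$ itself has no branch bag and you take $C^*=C$, the assertion that $C^*$ has no red bag can fail. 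The repair is easy: broaden \emph{location} to mean an $S$-affected bag in $C$ or an $S$-affected marked edge with at least one endpoint in $C$ (the other endpoint is then necessarily $B$). With this definition a rule~(1) bag in $C$ does witness a location, so the peeling argument establishes the sub-claim, and --- importantly for what you actually need --- the midpoint you create when subdividing such an edge still lies in the $C$-direction from $B$ in the subdivided tree, so $B$ retains degree $\geq 3$ in the Steiner tree and the bound $\le\abs{S}-2$ on non-$S$-affected branch bags stands. With that fix, your argument is complete.
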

	\begin{proof}
	An $S$-affected edge causes at most two bags to be colored red in (1). An $S$-affected bag $B$ causes at most two bags to be colored 
	red in (2), that is, the bag itself and possibly an adjacent bag if $B$ is a leaf bag. Hence, the number of bags colored in (1)-(2), is at most $2\abs{S}$. 
	
	It remains to prove that the number of branch bags is at most $\abs{S}$.
	Let $\mathcal{B}$ be the set of branch bags. 
	We create a graph $F$ on the vertex set $\mathcal{B}$ such that two bags $B_1$ and $B_2$ in $\mathcal{B}$ are adjacent in $F$ 
	if and only if there is a path of bags in $D$ from $B_1$ to $B_2$ that does not contain any other branch bag. Since such a path of bags is unique between any $B_1$ and $B_2$ in $D$,
	$F$ is clearly a tree. 

	Let $L$ be a leaf of $F$. Since $D-V(L)$ contains at least $3$ connected components having at least two bags, 
	at least two of these components does not contain any other bag in $\mathcal{B}$. 
	Consider an arbitrary component $D'$ of $D-V(L)$ that contains at least two bags and does not contain a bag of $\mathcal{B}$.
	Lemma~\ref{lem:bluepath2} implies that $D'$ contains a red bag. We argue that $D'$  contains either an $S$-affected bag or an $S$-affected marked edge. 
	Suppose not, and notice that the only red bag in $D'$ is the one adjacent with $L$ in $D$. Let us call this bag $B$.
	By Lemma~\ref{lem:bluepath2}, the connected components of $D'-V(B)$ must be singletons, consisting of a leaf bag. 
	There are at most one such component by Lemma~\ref{lem:bluebags} consisting of a leaf bag $B'$. Then we can apply Reduction Rule~\ref{rrule:removeleaf2} 
	to $B$ and $B'$, a contradiction. We conclude that $D'$  contains either an $S$-affected bag or an $S$-affected marked edge.

	Now, every leaf $L$ of $F$ can be associated with two connected components of $D-V(L)$ containing either  an $S$-affected bag or an $S$-affected marked edge. 
	Notice that the two connected components, as a set of bags of $D$, associated to the set of all leaves in $F$ are pairwise disjoint. Therefore, the number of leaves of $F$ is at most $\frac{1}{2}\abs{S}$. 
	It follows that 
	$F$ contains at most $\abs{S}$ vertices, thus the number of red bags colored in (3) is at most $\abs{S}$. This completes the proof.
	\end{proof}

	\begin{LEM}\label{lem:longbluepath}
	The number of connected components of $D-\bigcup_{B\in \mathcal{R} \cup \mathcal{Q}}V(B)$ is at most $3\abs{S}$ and $\abs{\mathcal{R} \cup \mathcal{Q}}\leq 6\abs{S}$.
	\end{LEM}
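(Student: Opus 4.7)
The plan is to derive both bounds from a short degree-count on a single auxiliary tree. For the inequality $|\mathcal{R}\cup\mathcal{Q}|\le 6|S|$, I would first argue $|\mathcal{Q}|\le|\mathcal{R}|$: every bag in $\mathcal{Q}$ is, by definition, a blue leaf bag whose unique neighbor is red, and Lemma~\ref{lem:bluebags} says each red bag has at most one blue leaf bag adjacent to it, so sending each bag in $\mathcal{Q}$ to its unique (red) neighbor gives an injection into $\mathcal{R}$. Since $\mathcal{R}$ and $\mathcal{Q}$ are disjoint by color, combining with Lemma~\ref{lem:redbags} yields $|\mathcal{R}\cup\mathcal{Q}|=|\mathcal{R}|+|\mathcal{Q}|\le 2|\mathcal{R}|\le 6|S|$.

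For the bound on the number of components, let $\mathcal{C}$ denote the set of connected components of $D-\bigcup_{B\in\mathcal{R}\cup\mathcal{Q}}V(B)$. The bag-graph of $D$ (with marked edges as edges) is a tree $T$ because $H$ is connected. I would form an auxiliary graph $F$ by contracting each component in $\mathcal{C}$ to a single vertex while keeping all bags in $\mathcal{R}\cup\mathcal{Q}$ intact. Contracting connected subsets of a tree yields a tree, so $F$ is a tree on vertex set $\mathcal{R}\sqcup\mathcal{Q}\sqcup\mathcal{C}$ with exactly $|\mathcal{R}|+|\mathcal{Q}|+|\mathcal{C}|-1$ edges.

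Next I would read off the degrees in $F$. Each bag of $\mathcal{Q}$ is a leaf of $T$, hence has degree $1$ in $F$. By Lemma~\ref{lem:bluepath2}, every $C\in\mathcal{C}$ is adjacent in $D$ to exactly two red bags; since $T$ is a tree, the contracted vertex corresponding to $C$ has degree exactly $2$ in $F$. Moreover every edge of $F$ incident with a $\mathcal{Q}$-vertex or a $\mathcal{C}$-vertex has its other endpoint in $\mathcal{R}$, so the number of edges of $F$ with at least one blue endpoint equals $|\mathcal{Q}|+2|\mathcal{C}|$. Consequently the number of edges of $F$ with both endpoints in $\mathcal{R}$ is
\[
\bigl(|\mathcal{R}|+|\mathcal{Q}|+|\mathcal{C}|-1\bigr)-\bigl(|\mathcal{Q}|+2|\mathcal{C}|\bigr)=|\mathcal{R}|-|\mathcal{C}|-1,
\]
which must be non-negative. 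Thus $|\mathcal{C}|\le|\mathcal{R}|-1\le 3|S|-1<3|S|$, as desired.

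I do not expect any serious obstacle: the only subtlety is checking that the contraction really produces a tree and that no $\mathcal{Q}$-vertex or $\mathcal{C}$-vertex is adjacent to another blue vertex in $F$, both of which follow immediately from the definitions of $\mathcal{Q}$ and from Lemma~\ref{lem:bluepath2}. Everything else is elementary edge-counting against the preceding lemmas.
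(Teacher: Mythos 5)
Your proposal is correct and follows essentially the same route as the paper: both arguments exploit the tree structure of the bags together with Lemma~\ref{lem:bluepath2} (each blue component sees exactly two red bags), Lemma~\ref{lem:redbags} ($\abs{\mathcal{R}}\le 3\abs{S}$), and Lemma~\ref{lem:bluebags} (giving $\abs{\mathcal{Q}}\le\abs{\mathcal{R}}$). The only cosmetic difference is that you count degrees in the tree obtained by contracting the blue components, whereas the paper injects the components into the edges of an auxiliary tree on $\mathcal{R}$; the underlying counting is the same.
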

	\begin{proof}
	Consider a tree $F$ on the vertex set $\mathcal{R}$ in which any two red bag is adjacent if and only if there is a path of bags in $D$ containing no red bag. 
	By Lemma~\ref{lem:bluepath2}, there is an injection from the set of connected components of $D-\bigcup_{B\in \mathcal{R} \cup \mathcal{Q}}V(B)$
	to the set of edges of $F$. Observe that $F$ has at most $\abs{\mathcal{R}}$ edges. By Lemma~\ref{lem:redbags}, we have $\abs{\mathcal{R}} \le 3\abs{S}$, which 
	establishes the first claimed bound. Lemma~\ref{lem:bluebags} implies that $\abs{\mathcal{Q}}\leq \abs{\mathcal{R}}$. The second bound $\abs{\mathcal{R}\cup \mathcal{Q}}\leq 6\abs{S}$ follows immediately.
	\end{proof}

\subsection{Reducing the length of a sequence of $S$-unaffected bags}\label{subsec:unaffectedbag}

	We assume that the instance $(G,k)$ is reduced under Reduction Rules~\ref{rrule:removeleaf1}, \ref{rrule:removeleaf2}, and \ref{rrule:fliptwins}.
	Let $D'$ be a connected component of $D-\bigcup_{B\in \mathcal{R}\cup \mathcal{Q}}V(B)$.
	We introduce three reduction rules that contribute to bound the number of bags in $D'$.
	Let us fix the sequence $B_1, \cdots ,B_m$ of bags of $D'$.
	For $1\le i<j<k\le m$, a bag $B_j$ is called a \emph{$(B_i,B_k)$-separator bag} if the center of $B_j$ is adjacent to neither $B_{j-1}$ nor $B_{j+1}$.
	For a $(B_1, B_m)$-separator bag $B$ we define 
	\begin{itemize}
	\item $\eta(B):=\{v\}$ if $v$ is the center of $B$ and there is no leaf bag adjacent to $B$, 
	\item $\eta(B):=X$ if there is a leaf bag $B'$ adjacent to $B$, and $X$ is the set of unmarked vertices in $B'$.
	\end{itemize}

	Below, we present a rule that gives a bound on the number of $(B_1, B_m)$-separator bags.
	We remark that for every bag $B_j$ in $D'$ with $1<j<m$ other than $(B_1, B_m)$-separator bags,
	there are no leaf bags having $B_j$ as a neighbor bag: if $B_j$ has a leaf bag as a neighbor, 
	such a leaf bag must be $S$-affected because of Lemma~\ref{lem:centermarked}. 
	Then $B_j$ is a red bag by the coloring procedure (3), contradicting that $B_j$ is a bag in $D'$.

\begin{figure}
\includegraphics[scale=0.35]{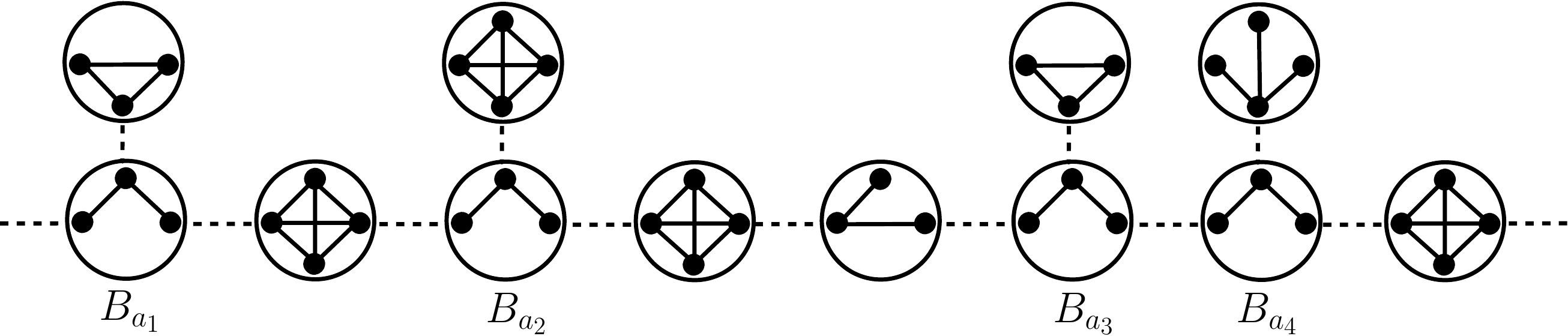} \vskip  0.3cm
\includegraphics[scale=0.35]{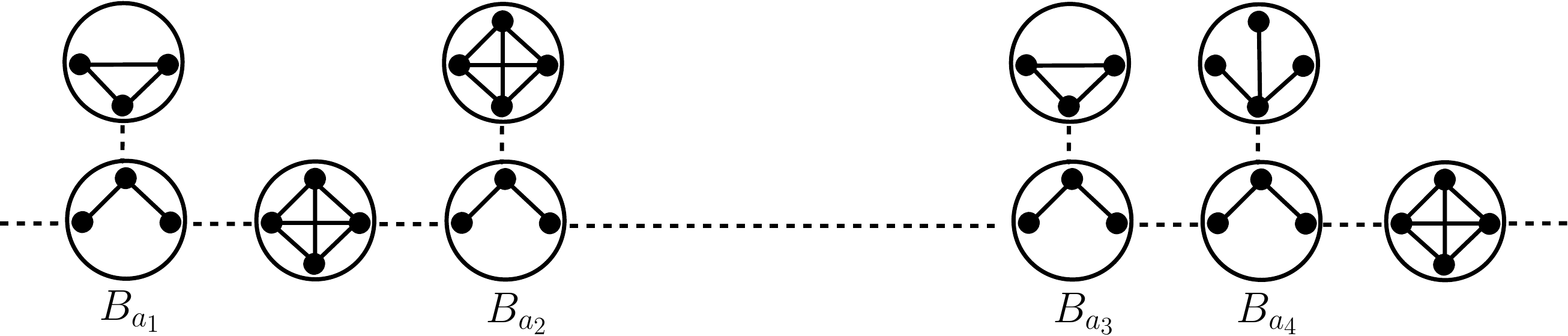}
\caption{An illustration of Reduction Rule~\ref{rrule:bypassing3}. }
\label{fig:bypassing3}
\end{figure}

\begin{RRULE}\label{rrule:bypassing3}
	Let $B_{a_1}, B_{a_2}, B_{a_3}, B_{a_4}$  be $(B_1, B_m)$-separator bags of $D'$ such that
	$1\le a_1<a_2<a_3<a_4 \le m$, for each $i\in \{1, \ldots, 3\}$ there are no $(B_{a_i}, B_{a_{i+1}})$-separator bags.
	If there is a bag $B_z$ where $a_2<z<a_3$, then we remove all unmarked vertices in $B_z$.
\end{RRULE}
\begin{LEM}\label{lem:bypassing3}
Reduction Rule~\ref{rrule:bypassing3} is safe.
\end{LEM}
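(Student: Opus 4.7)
The forward direction is immediate: any DH-modulator $T\subseteq V(G)$ with $\abs{T}\le k$ gives a DH-modulator $T\cap V(G')$ of $G'$ of size at most $k$, since $G'-(T\cap V(G'))$ is an induced subgraph of the distance-hereditary graph $G-T$.

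For the reverse direction, I would fix a DH-modulator $T\subseteq V(G')$ of $G'$ with $\abs{T}\le k$ and suppose for contradiction that $G-T$ contains a DH obstruction $F$. Letting $R$ denote the set of deleted unmarked vertices of $B_z$, the assumption that $G'-T=(G-T)-R$ is distance-hereditary forces $F$ to contain some $v\in R$. I would first establish a \emph{locality} claim: $N_G(v)\cap V(H) \subseteq \bigcup_{a_2<i<a_3}V(B_i)\cup \eta(B_{a_2})\cup \eta(B_{a_3})$. By the alternating-path characterization of adjacency in canonical split decompositions, any neighbor of $v$ in $V(H)$ is reached by a path in $D$ alternating between unmarked and marked edges, starting and ending with an unmarked edge. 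Upon entering the star bag $B_{a_2}$ via its marked edge to $B_{a_2+1}$, such a path arrives at a leaf of the star (the endpoint of that marked edge, which is not the center by the separator condition), and the only available unmarked edge leads to the center of $B_{a_2}$. Since the center is off-path, the path either terminates (if the center is unmarked) or is diverted into the side-branch $\eta(B_{a_2})$ (if the center is marked); it cannot reach $B_{a_2-1}$, and symmetrically at $B_{a_3}$. Moreover, since $B_z$ is $S$-unaffected and not incident with any $S$-affected marked edge (because $B_z$ lies in the interior of the blue component $D'$), the ``twins stay twins'' remark following Theorem~\ref{thm:GioanP2012}, applied inductively as we add the vertices of $S$ one at a time, shows that all unmarked vertices of $B_z$ are pairwise twins in $G$; in particular $v$ has the same adjacency to $S$ and to $V(G)\setminus V(B_z)$ as every other vertex in $R$.

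The concluding step is a substitution argument. The four separator bags $B_{a_1},\ldots,B_{a_4}$ with no intermediate separators in the sub-pipes between consecutive $B_{a_i}$'s force those sub-pipes to be structurally rigid: their bags alternate between stars-with-center-on-path and complete bags in a constrained way that makes certain unmarked vertices interchangeable with $v$. I would identify an unmarked vertex $v'\in V(G')\setminus T$ in one of these sub-pipes whose neighborhood in $V(F)\setminus \{v\}$ coincides with that of $v$; then replacing $v$ by $v'$ produces an induced copy of $F$ inside $G'-T$, contradicting the distance-hereditariness of $G'-T$. The main obstacle is this substitution: a careful case analysis on the bag type of $B_z$ (star versus complete), on the arrangement of bag types in the sub-pipes, and on the shape of $F$ (small obstruction versus induced cycle of length $\ge 7$) will be needed to pinpoint $v'$. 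The no-intermediate-separator condition ensures each sub-pipe is regular enough to supply such a twin, while the bound $\abs{T}\le k$ together with the locality claim (limiting $v$'s adjacencies that $v'$ must match) ensures that at least one candidate $v'$ survives in $G'-T$. The locality claim is essentially bookkeeping once the alternating-path description is invoked; the substitution step carries the structural weight of the proof.
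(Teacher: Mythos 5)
Your overall framing (equivalence of $(G,k)$ and $(G',k)$, forward direction trivial, reverse direction via an obstruction $F$ through a deleted vertex $v$) is fine, and your locality claim is correct and in the spirit of what the paper uses. But the heart of your argument --- the substitution step --- is not actually proved, and as described it would not go through. You assert that some unmarked vertex $v'\in V(G')\setminus T$ in the sub-pipes has the same neighborhood on $V(F)\setminus\{v\}$ as $v$, and that ``$\abs{T}\le k$ ensures at least one candidate survives.'' There is no counting argument available here: unlike Reduction Rule~\ref{rrule:bypassing1}, Rule~\ref{rrule:bypassing3} marks no set of $k+1$ candidates, the rule deletes \emph{all} unmarked vertices of $B_z$ (so the natural twins of $v$ are gone from $G'$), the segment strictly between $B_{a_2}$ and $B_{a_3}$ may consist of the single bag $B_z$, and vertices of other bags (or of the sets $\eta(B_{a_i})$) generally have different adjacencies to $V(F)\setminus\{v\}$. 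So the existence of a surviving substitute is exactly the unproven content, and the hypothesis $\abs{T}\le k$ cannot rescue it.

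The paper resolves this differently and more strongly: it shows that \emph{no} DH obstruction of $G$ contains $v$ at all, so no exchange argument (and no use of the budget $k$) is needed. Concretely: any obstruction through $v$ must meet $S$ (since $G-S$ is distance-hereditary); $\eta(B_{a_2})\cup\eta(B_{a_3})$ separates $v$ from $S$, and since obstructions are $2$-connected while each $\eta(B_{a_i})$ is a twin set and obstructions contain no twins, $F$ meets each of $\eta(B_{a_1}),\ldots,\eta(B_{a_4})$ in exactly one vertex $w_i$; as no vertex of $S$ sees both ends, $F$ is an induced cycle of length at least $7$; finally the dichotomy on $B_z$ (complete bag versus star bag) produces an induced $C_3$ or $C_4$ inside $F$, a contradiction. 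If you want to complete your write-up, the right move is to drop the substitution strategy and prove this irrelevant-vertex statement directly from your locality claim plus $2$-connectivity and the twin-set structure of the $\eta(B_{a_i})$'s.
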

\begin{proof}
	Suppose there is a bag $B_z$ where $a_2<z<a_3$, and let $v$ be an unmarked vertex of $B_z$.
	We claim that there is no DH obstruction containing $v$, which shows that $v$ can be safely removed.
	Suppose for contradiction that there is a DH obstruction $F$ containing $v$.
	Since the set of all unmarked vertices in $D'$ induces a distance-hereditary graph, 
	$F$ must contain at least one vertex from $S$. Notice that $\eta(B_{a_2})\cup \eta(B_{a_3})$ is a separator between $v$ and $S$ and since 
	any DH obstruction is 2-connected, $F$ contains at least two vertices of $\eta(B_{a_2})\cup \eta(B_{a_3})$. That each $\eta(B_{a_i})$ is a twin set in $G$ while $F$ contains no twins imply that 
	$F$ contains exactly vertex from $\eta(B_{a_2})\cup \eta(B_{a_3})$ 
	respectively. A similar argument shows $F$ contains exactly one vertex from $\eta(B_{a_i})$ for each $i\in \{1,2,3,4\}$.
	Furthermore, as every vertex in $S$ cannot be adjacent to both $B_{a_1}$ and $B_{a_4}$, 
	we have $\abs{F}\ge 7$, and thus $F$ is an induced cycle.

	For each $i\in \{1,2,3,4\}$, let $w_i$ be the vertex in $V(F)\cap \eta(B_i)$.

	If $v$ is contained in a complete bag, then $F$ contains an induced cycle of length $3$ together with vertices $w_2$ and $w_3$, contradiction. 
	We may assume $v$ is contained in a star bag. So, $v$ is adjacent to one of $w_2$ and $w_3$. 
	Since $F$ has no leaves, there is a neighbor $v'$ of $v$ in $F$ that is not in $\{w_2, w_3\}$. 
	Since $v'$ is also contained in a star bag, $v'$ is adjacent to one of $w_2$ and $w_3$ that is not adjacent to $v$. 
	However, this implies that $F$ contains an induced cycle of length $4$, contradiction.
	Therefore, $v$ cannot be contained any DH obstruction, and we can safely remove it.
	\end{proof}

	We can observe that after applying Reduction Rule~\ref{rrule:bypassing3} exhaustively, 
	if $B_{a_1}, B_{a_2}, \ldots, B_{a_t}$ be the sequence of $(B_1, B_m)$-separator bags where $t\ge 4$ and $a_1<a_2< \cdots <a_t$, 
	then for each $i\in \{2, \ldots, t-2\}$, $a_{i+1}=a_i+1$, that is, there are no bags between $B_{a_i}$ and $B_{a_{i+1}}$.
	We describe how we reduce the number of $(B_1, B_m)$-separator bags.

\begin{figure}
\includegraphics[scale=0.35]{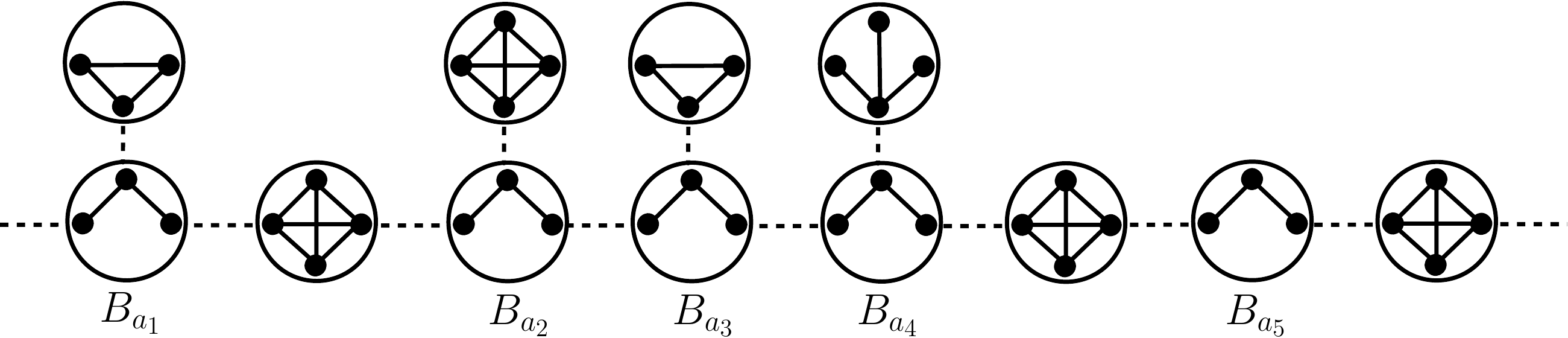} \vskip  0.3cm
\includegraphics[scale=0.35]{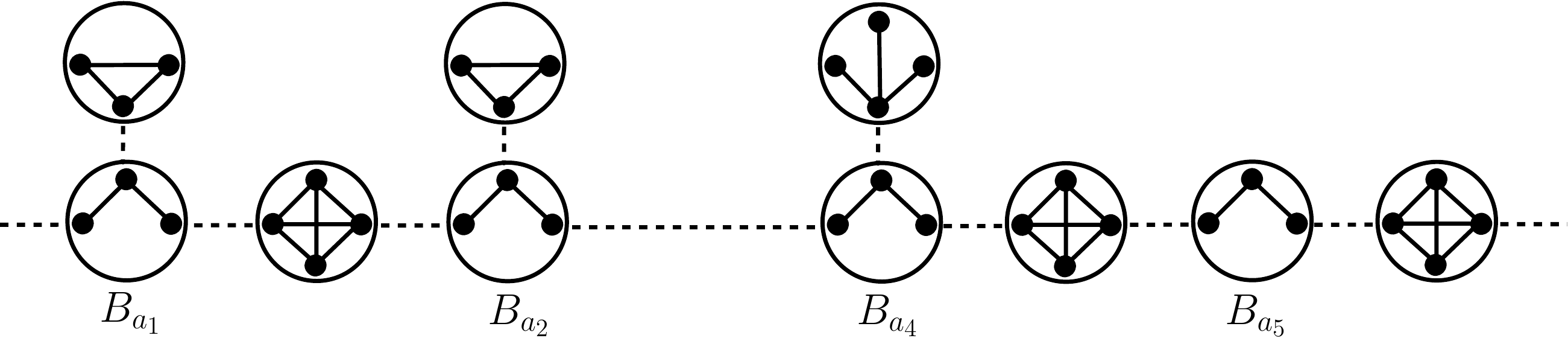}
\caption{An illustration of Reduction Rule~\ref{rrule:bypassing2}. }
\label{fig:bypassing2}
\end{figure}

\begin{RRULE}\label{rrule:bypassing2}
	Let $B_{a_1}, B_{a_2}, B_{a_3}, B_{a_4}, B_{a_5}$  be $(B_1, B_m)$-separator bags of $D'$ such that
	$1\le a_1<a_2<\cdots <a_5 \le m$, and
	for each $i\in \{1,\ldots,4\}$, there are no $(B_{a_i}, B_{a_{i+1}})$-separator bags.
	We first remove $B_{a_3}$ and also remove the leaf bag adjacent to $B_{a_3}$ if one exists, and
	link marked vertices of $B_{a_2}$ and $B_{a_4}$ in $D'$, which were adjacent to $B_{a_3}$, by a marked edge. 
	If $\abs{\eta(B_2)}>\abs{\eta(B_3)}=1$, then we remove the leaf bag adjacent to $B_{a_2}$, and if $\abs{\eta(B_2)}>\abs{\eta(B_3)}\ge 2$, 
	then we remove vertices in $\eta(B_{a_2})$ except $\abs{\eta(B_{a_3})}$ many vertices. 
\end{RRULE}

\begin{LEM}
\label{lem:bypassing2}
Reduction Rule~\ref{rrule:bypassing2} is safe. Furthermore, in the resulting graph, $S$ is again a good DH-modulator.
\end{LEM}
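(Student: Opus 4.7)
I plan to mirror the structure of Lemma~\ref{lem:bypassing3}: prove both directions of the equivalence $(G,k)\Leftrightarrow(G',k)$ (where $G'$ denotes the modified graph) and then verify that $S$ remains a good DH-modulator of $G'$. The first step is to translate the rule into graph-level operations. Using the alternating-path characterization of edges in a split decomposition, together with the star structure of the separator bags, one verifies that in $G$ the twin sets $\eta(B_{a_2})$ and $\eta(B_{a_3})$ are complete to each other, and so are $\eta(B_{a_3})$ and $\eta(B_{a_4})$, but $\eta(B_{a_2})$ is \emph{not} adjacent to $\eta(B_{a_4})$: the star center of $B_{a_3}$ blocks any alternating path crossing $B_{a_3}$. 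Thus the rule realizes the following operations on the underlying graph: (i) delete all of $\eta(B_{a_3})$; (ii) delete vertices of $\eta(B_{a_2})$ until the surviving part has size at most $\abs{\eta(B_{a_3})}$; and (iii) make the surviving $\eta(B_{a_2})$ complete to $\eta(B_{a_4})$ (the effect of the new marked edge between $B_{a_2}$ and $B_{a_4}$). Each $\eta(B_{a_i})$ remains a twin set in $G'$.

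Following the strategy of the previous lemma, I would then establish a rigidity claim: any DH obstruction $F$ in $G$ meeting $\eta(B_{a_3})$ must be an induced cycle of length at least $7$, taking exactly one vertex from each of $\eta(B_{a_2}), \eta(B_{a_3}), \eta(B_{a_4})$, with the rest forming a path through the $B_{a_1}$- and $B_{a_5}$-sides. A symmetric statement applies to any DH obstruction in $G'$ using a new $\eta(B_{a_2})$-$\eta(B_{a_4})$ edge. Given this, both directions become short swap arguments. Backward: if $T$ is a DH-modulator of $G'$ with $\abs{T}\leq k$ and $F$ is a DH obstruction in $G-T$, then either $F$ uses a unique $v_3\in\eta(B_{a_3})$ (in which case replacing $v_3$ by the direct edge $v_2v_4$ in $G'$ yields a shorter induced cycle---still a DH obstruction---in $G'-T$, contradicting $T$) or $F$ uses a vertex of the deleted portion of $\eta(B_{a_2})$ (in which case a twin swap with a surviving vertex of $\eta(B_{a_2})\cap V(G')$ gives the contradiction). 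Forward: after a preprocessing swap ensuring $\eta(B_{a_3})\not\subseteq T$, the set $T':=T\cap V(G')$ has $\abs{T'}\leq k$; any DH obstruction $F'$ in $G'-T'$ either is already induced in $G-T$ or uses a new edge $v_2v_4$, and in the latter case splicing in a vertex $v_3\in\eta(B_{a_3})\setminus T$ creates a longer induced cycle in $G-T$ of length at least $7$, contradicting that $T$ was a DH-modulator.

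The main obstacle I anticipate is the preprocessing swap in the forward direction---namely guaranteeing some $v_3\in\eta(B_{a_3})\setminus T$ whenever needed. If $T\supseteq \eta(B_{a_3})$ then $\abs{\eta(B_{a_3})}\leq k$, and the size-matching step of the rule gives $\abs{\eta(B_{a_2})}\leq \abs{\eta(B_{a_3})}\leq k$ before the rule is applied, so one may swap the $T$-vertices of $\eta(B_{a_3})$ with $\eta(B_{a_2})$-vertices one-for-one without increasing $\abs{T}$ and without losing the modulator property (both sets are twin sets, and every obstruction uses at most one vertex from each). This is precisely where the nontrivial size-matching built into the rule is needed. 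Finally, that $S$ remains good in $G'$ follows by applying the same rigidity claim to any hypothetical DH obstruction in $G'$ meeting $S$ in exactly one vertex: lift it back to a DH obstruction in $G$ with the same single $S$-vertex, contradicting the goodness of $S$ in $G$.
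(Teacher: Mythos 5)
Your overall route is the one the paper takes: translate Rule~\ref{rrule:bypassing2} into deleting $\eta(B_{a_3})$, trimming $\eta(B_{a_2})$ to size $\min\{\abs{\eta(B_{a_2})},\abs{\eta(B_{a_3})}\}$ and completing the survivors to $\eta(B_{a_4})$; argue that any obstruction meeting the modified region is a long induced cycle using exactly one vertex from each relevant class $\eta(B_{a_1}),\dots,\eta(B_{a_5})$; lift and project such cycles by splicing in, respectively contracting out, a vertex of $\eta(B_{a_3})$; and prove goodness by lifting a hypothetical obstruction of $G'$ with a single $S$-vertex back to $G$. The paper's proof does exactly this, phrased via the identity $G-(\eta(B_{a_2})\cup\eta(B_{a_3}))=G'-\widetilde{U}$, where $\widetilde{U}$ denotes the surviving twin class at $B_{a_2}$.

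The gap is in your backward direction. Given a solution $T$ of $(G',k)$ and an obstruction $F$ in $G-T$, you claim that either $F$ uses a vertex $v_3\in\eta(B_{a_3})$ and one can pass to the edge $v_2v_4$ of $G'$, or $F$ uses a deleted vertex of $\eta(B_{a_2})$ and a twin swap with a surviving vertex finishes the argument. Both steps can fail: the unique $\eta(B_{a_2})$-vertex of $F$ may itself be one of the deleted vertices, in which case $v_2v_4$ is not an edge of $G'$; and all surviving twins in $\widetilde{U}$ may lie in $T$, so no swap partner outside $T$ exists. This is not a removable corner case --- it is precisely where the size-matching of the rule is needed on this side as well. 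The paper's treatment: after the standard normalization that a solution meets a twin class either fully or not at all, if $\widetilde{U}\subseteq T$ one exchanges $\widetilde{U}$ for whichever of $\eta(B_{a_2})$, $\eta(B_{a_3})$ is smaller; by the trimming this set has size exactly $\abs{\widetilde{U}}$, and since every obstruction through the region must use one vertex from each of $\eta(B_{a_2})$ and $\eta(B_{a_3})$ while $G-(\eta(B_{a_2})\cup\eta(B_{a_3}))=G'-\widetilde{U}$ is already obstruction-free after removing $T\setminus\widetilde{U}$, the exchanged set is a DH-modulator of $G$ of size at most $k$. You anticipated exactly this exchange, but only for the forward direction; it must be carried out in the backward direction too, as in the paper's proof, and with it your plan goes through.
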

\begin{proof}
	Let $D''$ be the resulting canonical split decomposition, and let $G'$ be the resulting graph. 
	For each $i\in \{1, \ldots, 5\}$, let $U_i:=\eta(B_i)$.
	In the decomposition $D''$, we define $\widetilde{U}$ as 
	\begin{itemize}
	\item $\widetilde{U}:=\{v\}$ if $v$ is the center of $B_{a_2}$ and there is no leaf bag adjacent to $B_{a_2}$, 
	\item $\widetilde{U}:=X$ if there is a leaf bag $B$ adjacent to $B_{a_2}$, and $X$ is the set of unmarked vertices in $B$.
	\end{itemize}
	Notice that $G-(U_2\cup U_3)=G'-\widetilde{U}$.
	We prove that $(G,k)$ is a \YES-instance if and only if $(G', k)$ is a \YES-instance.
	
	First suppose that $G$ has a vertex set $T$ such that $\abs{T}\le k$ and $G-T$ is distance-hereditary.
	Assume that $T$ contains no vertex in $U_2\cup U_3$. We claim  that $G'-T$ is distance-hereditary.
	Suppose for contradiction that $G'-T$ contains a DH obstruction $F$.
	If $F$ does not contain a vertex in $\widetilde{U}$, then $F$ is an induced subgraph of $G-T$, as $G-(U_2\cup U_3)=G'-\widetilde{U}$. 
	Thus, $F$ contains a vertex in $\widetilde{U}$, and since $\widetilde{U}$ is a set of pairwise twins in $G'$, we have $\abs{V(F)\cap \widetilde{U}}=1$.
	Let $v$ be the vertex in $V(F)\cap \widetilde{U}$.
	We observe that 
	\begin{itemize}
	\item $F$ contain at least one vertex from each of $U_1, U_4, U_5$,
	\item the vertices in $V(F)\cap \tilde{U}$ and $V(F)\cap U_4$ have no neighbors in $S$, 
	\item every vertex in $S$ is not adjacent to both $\eta(B_1)$ and $\eta(B_5)$.
	\end{itemize}
	These imply that $F$ is an induced cycle of length at least $6$. 
	Thus we can obtain an induced cycle of length at least $7$ in $G$ from $F$ by replacing $v$ with a vertex of $U_2$ and a vertex of $U_3$, contradiction.
	We conclude that $G'-T$ is distance-hereditary when $T\cap (U_2\cup U_3)=\emptyset$.
	
	Hence, we assume $T\cap (U_2\cup U_3)\neq \emptyset$. As $U_i$ is a set of pairwise twins in $G$, 
	if $T\cap U_i\neq \emptyset$, then $U_i\subseteq T$. We can observe that $(T\setminus (U_2\cup U_3))\cup \widetilde{U}$ is a solution to $(G', k)$ and 
	we have $\abs{(T\setminus (U_2\cup U_3))\cup \widetilde{U}}\le \abs{T}\le k$. It implies that $(G', k)$ is a \YES-instance.

	For the converse direction, suppose that $G'$ has a vertex set $T'$ such that $\abs{T'}\le k$ and $G'-T'$ is distance-hereditary.
	We first assume that $T'\cap \widetilde{U}=\emptyset$.
	Suppose $G-T'$ has a DH obstruction $F$. 
	We have $V(F)\cap (U_2\cup U_3)\neq \emptyset$, otherwise $F$ is an induced subgraph of $G'-T'$.
	As $U_2$ and $U_3$ are twin sets of $G$, $F$ contains at most one vertex from each of $U_2$ and $U_3$. 
	Furthermore, $F$ contains exactly one vertex from each of $U_1, \ldots, U_5$, and thus it is an induced cycle of length at least $7$.
	We can obtain an induced cycle of length at least $6$ in $G'-T'$ from $F$ by contracting the edge between $V(F)\cap U_2$ and $V(F)\cap U_3$, 
	which contradicts to the assumption that $G'-T'$ is distance-hereditary.
	We conclude that $G-T'$ is distance-hereditary in the case when $T'\cap \widetilde{U}=\emptyset$.

	Lastly, suppose $T'\cap \widetilde{U}\neq \emptyset$. As $\widetilde{U}$ is a set of pairwise twins in $G'$, we have $\widetilde{U}\subseteq T'$.
	We obtain a set $T$ from $T'$ by removing $\widetilde{U}$, and adding $U_i$ with $i\in \{2,3\}$ where $\abs{U_i}=\min\{\abs{U_2}, \abs{U_3}\}$. 
	If $\abs{U_2}=\abs{U_3}$, then we add one of them chosen arbitrarily.
	Clearly, $\abs{T}\le \abs{T'}\le k$.
	We can observe that $G-T$ has no DH obstruction containing a vertex in $U_2\cup U_3$.
	Since $G-(U_2\cup U_3)=G'-\widetilde{U}$ has no DH obstructions, 
	we conclude that $G-T$ is distance-hereditary, as required.

	This proves that Reduction Rule~\ref{rrule:bypassing2} is safe.
	We argue that $S$ is again a good DH-modulator in $G'$.
	We need to verify that for every $v\in S$, $G'-S\setminus \{v\}$ has no DH obstructions.
	Suppose for contradiction that $G'-S\setminus \{v\}$ contains a  DH obstruction $F$ for some $v\in S$.
	The obstruction $F$ should contain a vertex in $\widetilde{U}$, otherwise, $F$ is also a DH obstruction in $G[(V(G)\setminus S)\cup \{v\}]$.
	Observe that $F$ should contain at least one vertex from $U_1, U_4, U_5$, and thus 
	$F$ is an induced cycle of length at least $6$, and the neighbors of $v$ in $F$ are contained in $U_1$ and $U_4$.
	Thus we can obtain an induced cycle of length at least $7$ by replacing the vertex in $V(F)\cap \widetilde{U}$ with a vertex of $U_2$ and a vertex of $U_3$ in $G$, 
	which implies that $G[(V(G)\setminus S)\cup \{v\}]$ contains a DH obstruction. It contradicts to our assumption that $S$ is a good DH-modulator. 
	We conclude that $S$ is a good DH-modulator in the resulting graph.
\end{proof}

	In the following reduction rule, 
	we describe how to reduce a sequence of non-$(B_1, B_m)$-separator bags.
\begin{RRULE}\label{rrule:bypassing1}
	Let $a$ be a positive integer such that   $a\le m-(5k+11)$, and 
	for each $i\in \{1, \ldots, 5k+10\}$, $B_{a+i}$ is not a $(B_1, B_m)$-separator bag.
	Among bags in $\{B_{a+i}:1\le i\le 5k+10\}$, we mark up to $k+1$ bags for each types : 
	complete bags, star bags $B_j$ whose centers are adjacent to $B_{j+1}$, and 
	star bags $B_j$ whose centers are adjacent to $B_{j-1}$.
	Choose a bag $B_{a+j}$ with $j\in \{1, \ldots, 5k+10\}$ that is not marked after finishing the marking procedure, and remove vertices in $B_{a+j}$ from $G$.
\end{RRULE}

	We observe the every DH obstruction can be turned into an induced path by removing a vertex $v$.  
	Note that this vertex is not unique; for instance, we can choose any vertex in an induced cycle of length $5$. 
	We will use this observation. 
	
\begin{LEM}
\label{lem:bypassing1}
Reduction Rule~\ref{rrule:bypassing1} is safe.
\end{LEM}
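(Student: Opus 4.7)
The plan is a standard safeness argument. For the forward direction, if $T$ solves $(G,k)$ and $U^*$ denotes the unmarked vertices of $B^* := B_{a+j}$ (i.e., the vertices removed from $G$ to form $G'$), then $T \setminus U^*$ solves $(G',k)$, because $G' - (T \setminus U^*) = G - (T \cup U^*)$ is an induced subgraph of $G - T$. For the reverse direction, I assume a solution $T'$ to $(G',k)$ of size at most $k$ exists and seek a contradiction by supposing $G - T'$ contains a DH obstruction $F$.

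The central preliminary is to show that $U^*$ is a twin set in $G$. Since $B^*$ is a blue bag of $D'$, it is $S$-unaffected and no incident marked edge of $B^*$ is $S$-affected (by the coloring rules). Hence $f(v) \neq B^*$ for every $v \in S$, and the twin-preservation property recorded after Theorem~\ref{thm:GioanP2012} guarantees that the unmarked vertices of $B^*$ remain pairwise twins in $G[V(H)\cup\{v\}]$ for each such $v$. Aggregating over $v \in S$, they share their neighborhoods in $S$. Moreover $B^*$ is not a $(B_1,B_m)$-separator, so $B^*$ is either a complete bag or a star bag with marked center; in either case, its unmarked vertices are pairwise twins already in $H$. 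Therefore $U^*$ is a twin set in $G$, and since DH obstructions are twin-free, $F$ meets $U^*$ in exactly one vertex $v^*$.

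Next I invoke the observation stated in the excerpt: every DH obstruction admits a vertex whose removal leaves an induced path, and this vertex is not unique (in particular, any vertex of an induced cycle of length at least $5$ qualifies). Taking $v^*$ as a removable vertex whenever possible, I set $P := F - v^*$; when $F$ is a long induced cycle this is always a valid induced path of length at least $3$. By the twin property above, $V(P) \cap U^* = \emptyset$, so $P$ lies entirely in $G' - T'$, and its endpoints $x,y$ are neighbors of $v^*$ in $G$. The aim is to locate a vertex $v' \in V(G') \setminus T'$ in a marked bag $B'$ of the same type as $B^*$ such that $v'$ is adjacent in $G$ to both $x$ and $y$; then Lemma~\ref{lem:createdhobs} applied to $G[V(P) \cup \{v'\}] \subseteq G' - T'$ yields a DH obstruction in $G' - T'$, a contradiction. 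The marking procedure secures $k+1$ marked bags of each type, and the unmarked vertices of each form a twin set in $G$ by the same reasoning applied to $B^*$, so a pigeonhole argument using $|T'| \le k$ extracts an admissible $v'$ outside $T'$.

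The main obstacle is the adjacency-preservation claim underlying the substitution: for sufficiently many marked bags $B'$ of the same type as $B^*$ in the range $B_{a+1}, \ldots, B_{a+5k+10}$, some unmarked vertex of $B'$ must be adjacent in $G$ to both $x$ and $y$. This reduces to an alternating-path analysis in the canonical split decomposition of $H$. The hypothesis that no bag in the range is a $(B_1,B_m)$-separator is essential here: such a separator would block the marked/unmarked alternation and decouple the two halves of the chain, destroying the uniform propagation of adjacencies to external vertices. The constant $5k+10$ is calibrated so that after the $3(k+1)$ possibly marked bags, an unmarked bag always retains enough same-type marked neighbors on both sides for the substitution to be robust. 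The remaining delicate case is when $F$ is a small DH obstruction (gem, house, or domino) and $v^*$ is not a removable vertex of $F$; there I would pick a different removable $u \neq v^*$, treat $F - u$ as a path containing $v^*$ in its interior, and adapt the substitution by invoking Lemma~\ref{lem:createdhobs} on a suitable subpath whose endpoints remain reachable from an unmarked vertex of a same-type marked bag.
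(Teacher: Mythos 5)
Your preliminary step is fine: since $B_{a+j}$ is blue, it is $S$-unaffected and none of its incident marked edges is $S$-affected, so by the twin-preservation property following Theorem~\ref{thm:GioanP2012} its unmarked vertices form a twin set in $G$, and any obstruction $F$ in $G-T'$ meets it in exactly one vertex $v^*$. Moreover, unmarked vertices of same-type bags in the window $B_{a+1},\dots,B_{a+5k+10}$ do have identical neighbourhoods \emph{outside} the window (each sees exactly $A_1$, $A_2$, or $A_1\cup A_2$ in the notation of the paper, and nothing in $W$ or $S$), so your pigeonhole substitution works whenever $F$ meets the window only in $v^*$. This is exactly the easy case $\abs{V(F)\cap U}=1$ of the paper's proof.

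The genuine gap is the case in which $F$ contains a second vertex $u'$ lying in another bag of the window (or, more generally, $\abs{V(F)\cap U}\ge 2$, where $U$ is the set of all unmarked vertices of the window). There your substitution cannot be repaired by calibrating the constant $5k+10$: adjacency between two window vertices depends on their relative position and the orientations of the intervening star bags, so an unmarked vertex $v'$ of a same-type marked bag lying on the other side of $u'$ is in general \emph{not} adjacent to the endpoints of $P=F-v^*$ when one of those endpoints is $u'$ itself (e.g.\ $v^*$ a leaf of a star bag oriented rightwards is adjacent to a complete-bag vertex to its right but not to one to its left). No counting over the $3(k+1)$ marked bags fixes this, because the obstruction is structural, not a matter of $T'$ hitting the candidates. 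The paper's proof does something qualitatively different here: it bounds $\abs{V(F)\cap U}$ (at most $2$ for long cycles, at most $4$ otherwise, using that $S$ is a \emph{good} modulator), chooses a vertex $q$ of $F$ whose removal leaves an induced path, and then runs a four-way case analysis on where the endpoints of $F-q$ lie relative to the partition $U\uplus A_1\uplus A_2\uplus W$, deriving a contradiction in each case by \emph{building a new obstruction} from the vertices of $F$ outside the window together with fresh vertices taken from $3$--$5$ consecutive window bags avoiding $T\cup\{v\}$ (Claims~\ref{claim:creatingobs1}--\ref{claim:creatingobs3}); the window length $5k+10$ is what makes this pigeonhole on fresh bags possible. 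None of this machinery appears in your sketch, and your fallback for small obstructions where $v^*$ is not removable (``adapt the substitution on a suitable subpath'') is likewise unsubstantiated. As it stands, the proposal proves safeness only for obstructions confined to a single window bag plus external vertices, which is a strictly weaker statement than the lemma.
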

\begin{proof}
	Suppose there is a bag $B_{a+j}$ with $j\in \{1, \ldots, 5k+10\}$
	that is not marked after finishing the marking procedure in Reduction Rule~\ref{rrule:bypassing1}, and let $v$ be an unmarked vertex in $B_{a+j}$. 
	We prove that $(G,k)$ is a \YES-instance if and only if $(G-v, k)$ is a \YES-instance.
	The forward direction is clear. Suppose that $G-v$ has a vertex set $T$ with $\abs{T}\le k$ such that $(G-v)-T$ is distance-hereditary and $G-T$ contains a DH obstruction $F$.
	Note that $v\in V(F)$. Let $b:=a+(5k+11)$.

	We partition the vertex set $V(G)$ into four parts. 
	Let $U$ be the set of all unmarked vertices contained in $B_{a+1}, B_{a+2}, \ldots, B_{b-1}$.
	Since $D'$ consists of only blue bags, 
	there are exactly two sets $N_1$ and $N_2$ in $\{N_G(x)\cap U:x\in V(G)\setminus U\}$ 
	that correspond to marked edges $e(B_a, B_{a+1})$ and $e(B_{b-1}, B_b)$, respectively.
	We define
	\begin{itemize}
	\item $A_i:=\{x\in V(G)\setminus U:N_G(x)\cap U=N_i\}$ for each $i\in \{1,2\}$.
	\end{itemize}
	It is clear that $A_1$ and $A_2$ are disjoint.
	Let $W:=V(G)\setminus (U\cup A_1\cup A_2)$.
	
	We first show some necessary lemmas.

\begin{CLAIM}\label{claim:longcycle}
	If $F$ is an induced cycle of length at least $5$, then $\abs{V(F)\cap U}\le 2$.
\end{CLAIM}
\begin{proofofclaim}
	Suppose $F$ is an induced cycle of length at least $5$, and $F$ contains at least three vertices $w_1, w_2, w_3$ in $U$.
	Since each $w_i$ is contained in either $N_1$ or $N_2$, 
	at least two vertices of $w_1, w_2, w_3$ are contained in the same set of $N_1$ and $N_2$.
	By relabeling if necessary, we assume that $w_1$ and $w_2$ are contained in the same set, say $N_1$. 
	Let $B_x$ and $B_y$ be the two bags containing $w_1$ and $w_2$, respectively, 
	and without loss of generality, we may assume $x\le y$.
	Then every neighbor of $w_1$ in $G$ is adjacent to $w_2$, and therefore, $F$ contains a cycle of length $4$ as a subgraph.
	This contradicts to the fact that $F$ is an induced cycle of length at least $5$. We conclude that $\abs{V(F)\cap U}\le 2$.
\end{proofofclaim}
\begin{CLAIM}\label{claim:smallobs}
	If $F$ is a DH obstruction that is not an induced cycle, then $\abs{V(F)\cap U}\le 4$.
\end{CLAIM}
\begin{proofofclaim}
	If $\abs{V(F)\cap U}\ge 5$, then $\abs{V(F)\cap S}\le 1$, contradicting to the assumption that $S$ is a good DH-modulator.
\end{proofofclaim}

\begin{CLAIM}\label{claim:creatingobs1}
	If there are two vertices $v_1\in A_1$ and $v_2\in A_2$ such that $v_1v_2\notin E(G)$, then 
	for every $i\in \{a+1, a+2, \ldots, b-5\}$,
	bags $B_i$, $B_{i+1}$, $B_{i+2}$, $B_{i+3}$, $B_{i+4}$ contain three vertices $w_1, w_2, w_3$ where 
	$G[\{v_1, v_2, w_1, w_2, w_3\}]$ is isomorphic to a DH obstruction.	
\end{CLAIM}
\begin{proofofclaim}
	Suppose there exist $v_1\in A_1$ and $v_2\in A_2$ such that $v_1v_2\notin E(G)$. 
	Let $i\in \{a+1, a+2, \ldots, b-5\}$.
	We prove three special cases.

\medskip
\noindent\textbf{Case 1.} (There exist $x,y, z\in \{i, i+1, \ldots, i+4\}$ with $x<y$ such that $B_x$ is a star bag whose center is adjacent to $B_{x+1}$, $B_y$ is a star bag whose center is adjacent to $B_{y-1}$, 
and $B_z$ is a complete bag.)

	Let $w_1, w_2, w_3$ be unmarked vertices of $B_x, B_y, B_z$, respectively.
	Since $v_1w_2w_1v_2$ is an induced path and $w_3$ is adjacent to both $v_1$ and $v_2$, 
	by Lemma~\ref{lem:createdhobs}, $G[\{v_1, v_2, w_1, w_2, w_3\}]$ is isomorphic to a DH obstruction.

\medskip
\noindent\textbf{Case 2.} (There exist $x,y, z\in \{i, i+1, \ldots, i+4\}$ with $x<y<z$ such that $B_x$ and $B_z$ are complete bags, and $B_y$ is a star bag.)

	Let $w_1, w_2, w_3$ be unmarked vertices of $B_x, B_y, B_z$, respectively.
	Assume $B_y$ is a star bag whose center is adjacent to $B_{y-1}$.
	Since $w_2v_1w_3v_2$ is an induced path and $w_1$ is adjacent to both $w_2$ and $v_2$, 
	by Lemma~\ref{lem:createdhobs}, $G[\{v_1, v_2, w_1, w_2, w_3\}]$ is isomorphic to a DH obstruction.
	If $B_y$ is a star bag whose center is adjacent to $B_{y+1}$, 
	then $w_2v_2w_1v_1$ is an induced path and $w_3$ is adjacent to both $w_2$ and $v_1$,
	 $G[\{v_1, v_2, w_1, w_2, w_3\}]$ is isomorphic to a DH obstruction.
	 
\medskip
\noindent\textbf{Case 3.} (There exist $x_1, x_2, x_3, x_4\in \{i, i+1, \ldots, i+4\}$ with $x_1<x_2<x_3<x_4$ such that 
	$B_{x_1}$ and $B_{x_3}$ are star bags whose centers are adjacent to the next bags,
	and $B_{x_2}$ and $B_{x_4}$ are star bags whose centers are adjacent to the previous bags.)

	Let $w_1, w_2, w_3, w_4$ be unmarked vertices of $B_{x_1}, B_{x_2}, B_{x_3}, B_{x_4}$, respectively.
	It is not hard to verify that $G[\{v_1, v_2, w_1, w_2, w_3, w_4\}]$ is isomorphic to the domino.

\medskip
	 Now we prove in general.
	 Since $D$ is a canonical split decomposition, 
	 one of $B_i$ and $B_{i+1}$ is either a complete bag or a star bag whose center is adjacent to the next bag.
	 Let $B_{i'}$ be such a bag.
	 Assume $B_{i'}$ is a complete bag. 
	 Then $B_{i'+1}$ is a star bag.
	 If there is a complete bag among $B_{i'+2}, B_{i'+3}$, 
	 then the claim holds by \textbf{Case 2}.
	 Otherwise, all of $B_{i'+1}, B_{i'+2}, B_{i'+3}$ are star bags, 
	 and the claim holds by \textbf{Case 1}.

	Assume $B_{i'}$ is a star bag whose center is adjacent to $B_{i'+1}$.
	If all of $B_{i'+1}, B_{i'+2}, B_{i'+3}$ are star bags, then 
	the claim holds by \textbf{Case 3}.
	We may assume there exists a complete bag in $B_{i'+1}, B_{i'+2}, B_{i'+3}$.
	If two other bags are star bags whose centers are adjacent to the next bags, 
	then the claim holds by \textbf{Case 2}.
	Otherwise, there is a star bag whose center is adjacent to the previous bag, 
	and the claim holds by \textbf{Case 1}.
\end{proofofclaim}

\begin{CLAIM}\label{claim:creatingobs2}
	If there are three vertices $v_1\in A_1, v_2\in A_2, v_3\in W$ such that $v_1v_2, v_2v_3, v_3v_1\in E(G)$, then 
	for every $i\in \{a+1, a+2, \ldots, b-4\}$,
	bags $B_i$, $B_{i+1}$, $B_{i+2}$, $B_{i+3}$ contain two vertices $w_1, w_2$ where 
	$G[\{v_1, v_2, v_3, w_1, w_2\}]$ is isomorphic to a DH obstruction.	
\end{CLAIM}
\begin{proofofclaim}
	Suppose there are three vertices $v_1\in A_1, v_2\in A_2, v_3\in W$ such that $v_1v_2, v_2v_3, v_3v_1\in E(G)$.
	Let $i\in \{a+1, a+2, \ldots, b-4\}$.
	We prove three special cases.
	We observe that there is $x\in \{i, i+1, i+2, i+3\}$ such that either
	\begin{enumerate}[(1)]
	\item $B_x$ is a complete bag and $B_{x+1}$ is a star bag whose center is adjacent to $B_{x}$, 
	\item $B_x$ is a star bag whose center is adjacent to $B_{x+1}$ and $B_{x+1}$ is a complete bag, or
	\item $B_x$ and $B_{x+1}$ are star bags and their centers are adjacent.
	\end{enumerate}
	Since $D$ is a canonical split decomposition, 
	 one of $B_i$ and $B_{i+1}$ is either a complete bag or a star bag whose center is adjacent to the next bag.
	 Let $B_{i'}$ be such a bag.
	 Assume $B_{i'}$ is a complete bag. 
	 Then $B_{i'+1}$ is a star bag.
	 If its center is adjacent to $B_{i'}$, then the statement (1) holds.
	 We may assume the center of $B_{i'+1}$ is adjacent to $B_{i'+2}$.
	 If $B_{i'+2}$ is a complete bag, then the statement (2) holds, 
	 and if $B_{i'+2}$ is a star bag whose center is adjacent to $B_{i'+1}$, then the statement (3) holds. 
	Assume $B_{i'}$ is a star bag whose center is adjacent to $B_{i'+1}$.
	If $B_{i'+1}$ is a complete bag, then the statement (2) holds, and 
	if $B_{i'+1}$ is a star bag whose center is adjacent to $B_{i'}$, then the statement (3) holds.
	
	Let $w_1\in B_x$, $w_2\in B_{x+1}$ be unmarked vertices for such bags $B_x$ and $B_{x+1}$. 	
	One can observe that in any case, $G[\{v_1, v_2, v_3, w_1, w_2\}]$ is isomorphic to a DH obstruction.	
\end{proofofclaim}

\begin{CLAIM}\label{claim:creatingobs3}
	If there are four vertices $v_1\in A_1, v_2\in A_2, v_3, v_4\in W$ such that $v_1v_2, v_2v_3, v_3v_4, v_4v_1\in E(G)$ and $v_1v_3, v_2v_4\notin E(G)$, then 
	for every $i\in \{a+1, a+2, \ldots, b-3\}$,
	bags $B_i$, $B_{i+1}$, $B_{i+2}$ contain two vertices $w_1, w_2$ where 
	$G[\{v_1, v_2, v_3, v_4, w_1, w_2\}]$ contains an induced subgraph isomorphic to a DH obstruction.	
\end{CLAIM}
\begin{proofofclaim}
	Suppose there are four vertices $v_1\in A_1, v_2\in A_2, v_3, v_4\in W$ such that $v_1v_2, v_2v_3, v_3v_4, v_4v_1\in E(G)$ and $v_1v_3, v_2v_4\notin E(G)$.
	Let $i\in \{a+1, a+2, \ldots, b-3\}$.
	If $B_i, B_{i+1}, B_{i+2}$ contains a complete bag, then an unmarked vertex $w$ in the bag satisfies that 
	$G[\{v_1, v_2, v_3, v_4, w\}]$ is isomorphic to the house.
	We may assume those bags are star bags.
	Then there is $i'\in \{i, i+1\}$ such that 
	$B_{i'}$ is a star bag whose center is adjacent to $B_{i'+1}$ and $B_{i'+1}$ is a star bag whose center is adjacent to $B_{i'}$.
	 Let $w_1\in B_{i'}$, $w_2\in B_{i'+1}$ be unmarked vertices. 	
	One can observe $G[\{v_1, v_2, v_3, v_4, w_1, w_2\}]$ is isomorphic to the domino.	
\end{proofofclaim}

 	Now, we prove the result based on the previous claims.
	Suppose $\abs{V(F)\cap U}=1$.
	In this case, by the marking procedure of Reduction Rule~\ref{rrule:bypassing1}, 
	there are distinct integers $j_1, j_2, \ldots, j_{k+1}\in \{1, \ldots, 5k+10\}\setminus \{j\}$ such that 
	$B_{a+j}$, $B_{a+j_1}$, $B_{a+j_2}, \ldots, B_{a+j_{k+1}}$ have the same type (recall that $B_{a+j}$ is an unmarked bag in the application of Reduction~\ref{rrule:bypassing1} abd $v\in B_{a+j}$).
	Therefore, there is a vertex $v'$ not contained in $T$, where $v$ and $v'$ have the same neighborhood on $V(F)\setminus \{v\}$, 
	which implies that $G[(V(F)\setminus \{v\})\cup \{v'\}]$ is a DH obstruction of $(G-v)-T$.
	This is contradiction. Thus, we may assume that $\abs{V(F)\cap U}\ge 2$. 

	We choose $q\in V(F)$ such that $F-q$ is an induced path, and
	\begin{itemize}
	\item $q=v$ if $F$ is an induced cycle, 
	\item $q$ is the closest vertex to $v$ in the underlying cycle if $F$ is either the house, the gem, or the domino.
	\end{itemize}
	Let $P:=F-q$, and let $w$ and $z$ be the end vertices of $P$, and
	let $w'$ and $z'$ be the neighbors of $w$ and $z$ in $P$, respectively. 

	We divide into following four cases depending on the places of $w$ and $z$:
	\begin{enumerate}[(1)]
	\item Both $w$ and $z$ are contained in $U$.
	\item One of $w$ and $z$ is in $A_1\cup A_2$ and the other is contained in $U$.
	\item Both $w$ and $z$ are contained in $A_1\cup A_2$.
	\item One of $w$ and $z$ is contained in $W$.
	\end{enumerate}
	We aim to show that these cases are not possible, because $(G-v)-T$ is distance-hereditary.

	First observe that $w$ and $z$ are not contained in $U$ together.
	Suppose $w$ and $z$ are contained in $U$.  Let $B_x$ and $B_y$ be the bags containing $w$ and $z$, respectively.
	We can assume $x\le y$, otherwise the proof is symmetric.
	Since they are not adjacent, either $B_x$ is a star whose center is adjacent to $B_{x-1}$ or $B_y$ is a star whose center is adjacent to $B_{y+1}$.
	By symmetry, assume that $B_y$ is a star bag whose center is adjacent to $B_{y+1}$. 
	Since $q$ is adjacent to both $w$ and $z$, 
	$B_x$ should be either a complete bag, or a star bag whose center is adjacent to $B_{x+1}$.
	Then $z'$ should be also adjacent to $w$, contradicting to the fact that $F-q$ is an induced path.

	Secondly, assume that one of $w$ and $z$ is in $A_1\cup A_2$ and the other is contained in $U$.
	We assume that $w\in A_1$ and $z\in U$.
	For the other cases ($w\in A_2$ and $z\in U$) and ($z\in A_1$ and $w\in U$) and ($z\in A_2$ and $w\in U$), 
	we can prove in the similar way.
	Since $wz\notin E(G)$, $z$ is contained in a star bag $B_x$ whose center is adjacent to $B_{x+1}$ for some $x\in \{a+1, a+2, \ldots, b-1\}$.
	Furthermore, if $z'\in U$, then $z'$ should be adjacent to $w$, a contradiction.
	Thus, we have $z'\in A_2$. Note that $w, z'\notin T$ and $wz'\notin E(G)$.
	Since $b-a-1\ge 5(k+2)$, by Claim~\ref{claim:creatingobs1}, 
	there exists $i\in \{a+1, a+2, \ldots, b-5\}$ such that
	bags $B_i$, $B_{i+1}$, $B_{i+2}$, $B_{i+3}$, $B_{i+4}$ contain three vertices $w_1, w_2, w_3$ where 
	$w_1, w_2, w_3\notin T\cup \{v\}$ and $G[\{w, z', w_1, w_2, w_3\}]$ is isomorphic to a DH obstruction.
	This contradicts to the assumption that $(G-v)-T$ is distance-hereditary.	

	Thirdly, assume that both $w$ and $z$ are contained in $A_1\cup A_2$.
	If $w$ and $z$ are contained in distinct sets of $A_1$ and $A_2$, 
	then by the same argument in the previous paragraph, 
	there exists $i\in \{a+1, a+2, \ldots, b-5\}$ such that 
	bags $B_i$, $B_{i+1}$, $B_{i+2}$, $B_{i+3}$, $B_{i+4}$ contain three vertices $w_1, w_2, w_3$ where 
	$w_1, w_2, w_3\notin T\cup \{v\}$ and $G[\{w, z, w_1, w_2, w_3\}]$ is isomorphic to a DH obstruction.
	Thus, we may assume both $w$ and $z$ are contained in $A_i$ for some $i\in \{1,2\}$.
	Without loss of generality, we assume $w,z\in A_1$.
	Suppose there are at least $k+5$ bags $B_x$ in $\{B_{a+1}, B_{a+2}, \ldots, B_{b-1}\}$ 
	that are either complete bags or star bags whose centers are adjacent to $B_{x-1}$. 
	Note that by Claims~\ref{claim:longcycle} and \ref{claim:smallobs}, 
	$\abs{V(F)\cap U}\le 4$.
	Thus, there is a bag among them having no vertex in $T\cup V(F)$, and
	an unmarked vertex in the bag will form a DH obstruction with $P$ in $(G-v)-T$.
	Therefore, there are at most $k+4$ such bags.
	This implies that there are at most $2k+9$ bags, as there are no two consecutive bags in the canonical split decomposition.
	It contradicts to our assumption that $b-a-1\ge 5k+10$.
	
	Lastly, we assume that one of $w$ and $z$ is contained in $W$.
	If  $q=v$, then $w$ or $z$ cannot be in $W$.
	Especially, $F$ is a DH obstruction that is not an induced cycle.
	Let $q'$ be the vertex of degree $3$ other than $q$ if $F$ is the domino.
	Let us assume $w\in W$. The proof is symmetric when $z\in W$.

	If $q\in W$, then $\{q,w,w',z\}\subseteq A_1\cup A_2\cup W$.
	Since $\abs{V(F)\cap U}\ge 2$, $F$ should be isomorphic to the domino and $V(F)\cap U=\{q', z'\}$.
	But since $q$ is adjacent to $q'$, this is contradiction. 
	Therefore, we have $q\in A_1\cup A_2$.	
	We remark that if $V(F)\cap A_i= \emptyset$ for some $i\in \{1,2\}$, then $(V(F)\cap U, V(F)\cap (W\cup A_{3-i}))$ is a split of $F$, contradicting the fact that $F$ has no splits. Hence, 
	it holds that $V(F)\cap A_i\neq \emptyset$ for  $i\in \{1,2\}$.
	We divide cases depending on whether $F$ is isomorphic to the domino or not.

	\medskip
	\noindent\textbf{Case 1.} $F$ is not isomorphic to the domino.
	
	Note that $\{q,w,w'\}\subseteq A_1\cup A_2\cup W$.
	Since $\abs{V(F)}=5$ and $\abs{V(F)\cap U}\ge 2$, we have $V(F)\cap U=\{z,z'\}$. 
	Furthermore, $q$ and $w'$ are contained in distinct sets of $A_1$ and $A_2$.
	By Claim~\ref{claim:creatingobs2}, 
	for every $i\in \{a+1, a+2, \ldots, b-4\}$,
	$B_{i}, B_{i+1}, \ldots, B_{i+3}$ contain two vertices $w_1$ and $w_2$ where
	$G[\{q,w,w',w_1,w_2\}]$ is isomorphic to a DH obstruction.
	As $b-a-1\ge 4(k+2)$, there is a pair of such vertices $w_1$ and $w_2$ not contained in $T\cup \{v\}$.
	So, $(G-v)-T$ contains a DH obstruction, which is contradiction.

	\medskip
	\noindent\textbf{Case 2.} $F$ is isomorphic to the domino.

	Recall that $q$ is chosen as the vertex of degree $3$ in $F$ closer to $v$.
	Since $q\notin U$, $v$ is a vertex of degree $2$.
	Furthermore, since $w\in W$, we have $z=v$.
	
	Note that $w'\in W\cup A_1\cup A_2$.
	Suppose $q$ and $w'$ are contained in distinct sets of $A_1$ and $A_2$.
	Since $b-a-1\ge 5(k+2)$, by Claim~\ref{claim:creatingobs1}, 
	there exists $i\in \{a+1, a+2, \ldots, b-5\}$ such that
	bags $B_i$, $B_{i+1}$, $B_{i+2}$, $B_{i+3}$, $B_{i+4}$ contain three vertices $w_1, w_2, w_3$ where 
	$w_1, w_2, w_3\notin T\cup \{v\}$ and $G[\{q, w', w_1, w_2, w_3\}]$ is isomorphic to a DH obstruction.
	If $q$ and $w'$ are contained in the same set of $A_1$ or $A_2$, 
	then $z$ should be adjacent to $w'$, contradiction.
	Thus, we can conclude that $w'\in W$.
	Therefore, $q'\notin U$, and $\{z,z'\}\subseteq U$.
	Since $q'$ is not adjacent to $z$, $q$ and $q'$ are contained in distinct sets of $A_1$ and $A_2$.
	
	Now, by Claim~\ref{claim:creatingobs3}, 
	for every $i\in \{a+1, a+2, \ldots, b-3\}$,
	bags $B_i$, $B_{i+1}$, $B_{i+2}$ contain two vertices $w_1, w_2$ where 
	$G[\{w, w', q, q', w_1, w_2\}]$ contains an induced subgraph isomorphic to a DH obstruction.	
	As $b-a-1\ge 3(k+2)$, there is a pair of such vertices $w_1$ and $w_2$ not contained in $T\cup \{v\}$.
	So, $(G-v)-T$ contains a DH obstruction, which is contradiction.

\medskip

We proved that $\abs{V(F)\cap U}\ge 2$ cannot hold when $(G-v, k)$ is a \YES-instance.
We conclude that $(G,k)$ is a \YES-instance if and only if $(G-v, k)$ is a \YES-instance.
\end{proof}

\subsection{The size of a non-trivial component}

\begin{LEM}\label{lem:boundpath}
Let $(G,k)$ be an instance reduced under Reduction Rules~\ref{rrule:removeleaf1}, \ref{rrule:removeleaf2}, \ref{rrule:fliptwins}, \ref{rrule:bypassing3}, \ref{rrule:bypassing2}, and \ref{rrule:bypassing1} and let $D$ be the canonical split decomposition of a non-trivial connected component of $G-S$. If $D'$ be a connected component of $D-\bigcup_{B\in \mathcal{R}\cup \mathcal{Q}}V(B)$,
then $D'$ contains at most $20k+52$ bags.
\end{LEM}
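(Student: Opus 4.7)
The plan is to bound each category of bags contributing to $D'$ separately, by exploiting the three reduction rules acting on the spine $B_1, B_2, \ldots, B_m$ of $D'$. Let $t$ denote the number of $(B_1, B_m)$-separator bags on the spine.

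First I would argue $t \leq 4$. If five separators $B_{a_1}, \ldots, B_{a_5}$ existed (listed in their natural order), they would vacuously satisfy the hypothesis of Reduction Rule~\ref{rrule:bypassing2} that no separator lies strictly between two consecutive ones in the list, so the rule would apply and strictly reduce the instance, contradicting irreducibility.

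Second, I would invoke Reduction Rule~\ref{rrule:bypassing3} applied to any four consecutive separators $B_{a_1}, B_{a_2}, B_{a_3}, B_{a_4}$: after exhaustion, no spine bag lies strictly between $B_{a_2}$ and $B_{a_3}$. Combined with the first step, when $t=4$ the central segment of non-separator spine bags is empty, and when $t < 4$ there are at most $t+1 \leq 4$ such segments. In either case, at most four maximal non-empty segments of non-separator bags remain along the spine.

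Third, I would bound each such segment via Reduction Rule~\ref{rrule:bypassing1}. Within any window of $5k+10$ consecutive non-separator bags whose positions lie in $\{2, \ldots, m-1\}$, the rule marks at most $3(k+1)$ bags (up to $k+1$ per type across the three declared types: complete bags, and star bags with center adjacent respectively to the next or to the previous bag), leaving at least $(5k+10) - 3(k+1) = 2k+7$ unmarked bags available for removal. After exhaustive application, no such window exists, so each maximal segment of non-separator spine bags has length at most $5k+10$.

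Finally, I would count the attached leaf bags of $D'$. By the remark made immediately before this subsection, together with Lemma~\ref{lem:bluebags}, a spine bag can carry an attached leaf bag in $D'$ only if it is either a $(B_1, B_m)$-separator or one of the endpoints $B_1, B_m$, and in that case at most one such leaf bag is attached. Adding up the contributions, at most $t \leq 4$ separator bags on the spine, at most four non-separator segments of length at most $5k+10$ each, and at most $t + 2 \leq 6$ attached leaf bags yield a total of at most $20k+52$ bags in $D'$, as claimed. The main subtlety is that Rule~\ref{rrule:bypassing1} never removes the endpoint bags $B_1, B_m$; the window-eligibility requirement $a \geq 1$ and $a + 5k + 10 \leq m-1$ means the bounds on the two border segments containing these endpoints are slightly looser than on purely interior segments, and the count above absorbs this slack.
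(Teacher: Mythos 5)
Your proposal is correct and follows essentially the same route as the paper's own proof: Rule~\ref{rrule:bypassing2} caps the number of $(B_1,B_m)$-separator bags at four, Rule~\ref{rrule:bypassing3} empties the middle gap so at most four non-separator segments remain, Rule~\ref{rrule:bypassing1} bounds each segment, and attached leaf bags are counted via Lemma~\ref{lem:bluebags} and the remark on non-separator bags. The only differences are bookkeeping: the paper uses the per-segment bound $5k+11$ and charges two bags per separator ($4(5k+11)+8$), while you use $5k+10$ plus endpoint/leaf-bag slack ($4+4(5k+10)+6$); both correctly land within $20k+52$.
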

\begin{proof}
	Let $B_1, \cdots ,B_m$ be a sequence of bags in $D'$. By Reduction Rule~\ref{rrule:bypassing2}, $D'$ contains at most four $(B_1, B_m)$-separator bags $B_j$, 
	and by Reduction Rule~\ref{rrule:bypassing3}, if there are four $(B_1, B_m)$-separators bags, 
	then there are no bags between the second and third $(B_1, B_m)$-separator bags. 	
	Also, there is at most leaf bag adjacent to a $(B_1, B_m)$-separator bag, 
	and there are no leaf bags adjacent to a non-$(B_1, B_m)$-separator bag.
	By Reduction Rule~\ref{rrule:bypassing1}, there are at most $5k+11$ consecutive  bags $B_j$ that are not $(B_1, B_m)$-separator bags.
	In total, $D'$ contains at most $4(5k+11)+4\cdot 2=20k+52$ bags. 
\end{proof}

We summarized the main result of this section.

\begin{LEM}\label{lem:boundcomp}
Let $(G,k)$ be an instance reduced under Reduction Rules~\ref{rrule:removeleaf1}, \ref{rrule:removeleaf2}, \ref{rrule:fliptwins}, \ref{rrule:bypassing3}, \ref{rrule:bypassing2}, and \ref{rrule:bypassing1} and let $D$ be the canonical split decomposition of a non-trivial connected component of $G-S$. Then $D$ has at most $3\abs{S}(20k+54)$ bags.
\end{LEM}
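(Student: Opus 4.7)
The plan is to simply combine Lemma~\ref{lem:longbluepath} and Lemma~\ref{lem:boundpath}. Recall that the bags of $D$ are partitioned into three groups: the red bags $\mathcal{R}$, the blue leaf bags in $\mathcal{Q}$ adjacent to a red bag, and the bags belonging to some connected component of $D - \bigcup_{B\in \mathcal{R}\cup\mathcal{Q}} V(B)$. So the total number of bags in $D$ is bounded by $\abs{\mathcal{R}\cup \mathcal{Q}}$ plus the sum of the sizes of all such connected components.

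By Lemma~\ref{lem:longbluepath}, we have $\abs{\mathcal{R}\cup \mathcal{Q}}\le 6\abs{S}$ and the number of connected components of $D-\bigcup_{B\in \mathcal{R}\cup\mathcal{Q}}V(B)$ is at most $3\abs{S}$. By Lemma~\ref{lem:boundpath}, each such component contains at most $20k+52$ bags. Putting these bounds together, the total number of bags in $D$ is at most
\[
6\abs{S} + 3\abs{S}\cdot(20k+52) \;=\; \abs{S}\bigl(60k+162\bigr) \;=\; 3\abs{S}(20k+54),
\]
as required. There is no real obstacle here since all the work has been done in Sections~\ref{subsec:leafbags} and~\ref{subsec:unaffectedbag}; this lemma is just the bookkeeping that combines the count of ``colored'' bags (driven by $\abs{S}$) with the bound on uncolored segments between them (driven by $k$).
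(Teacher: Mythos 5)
Your proof is correct and matches the paper's own argument essentially verbatim: both combine Lemma~\ref{lem:longbluepath} (bounding $\abs{\mathcal{R}\cup\mathcal{Q}}$ by $6\abs{S}$ and the number of blue components by $3\abs{S}$) with Lemma~\ref{lem:boundpath} (bounding each blue component by $20k+52$ bags) and then do the arithmetic $6\abs{S}+3\abs{S}(20k+52)=3\abs{S}(20k+54)$.
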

\begin{proof}
	Lemma~\ref{lem:longbluepath} states that $\abs{\mathcal{R}\cup \mathcal{Q}}\leq 6\abs{S}$ and 
	there are at most $3\abs{S}$ connected components of $D-\bigcup_{B\in \mathcal{R}\cup \mathcal{Q}}V(B)$.
	By Lemma~\ref{lem:boundpath}, 
	each connected component of  $D-\bigcup_{B\in \mathcal{R}\cup \mathcal{Q}}V(B)$ contains at most $20k+52$ bags.
	Therefore, $D$ contains at most $6\abs{S}+ 3\abs{S}(20k+52)=3\abs{S}(20k+54)$ bags.
\end{proof}

\section{Polynomial kernel for \dhd}\label{sec:total}

We present a proof of Theorem~\ref{thm:main1} in this section. 
\begin{proof}[Proof of Theorem~\ref{thm:main1}]
We first prove that given an instance $(G,k)$ and a good \dhm\ $S$, one can output an equivalent instance of size $O(k^5\abs{S}^5)$.
We first apply Reduction Rule~\ref{rrule:boundingcc} to $(G,k)$ with $S$. 
After that, $G-S$ has $O(k^2\abs{S})$ non-trivialial connected components or we can correctly report that $(G,k)$ is 
a \NO-instance by Lemma~\ref{prop:ccnumber}. Notice that $S$ remains a good \dhm\ after the application of 
Reduction Rule~\ref{rrule:boundingcc} by Lemma~\ref{lem:boundingcc}.

We apply Reduction Rules~\ref{rrule:removeleaf1},~\ref{rrule:removeleaf2},~\ref{rrule:fliptwins},~\ref{rrule:bypassing3},~\ref{rrule:bypassing2} and \ref{rrule:bypassing1} exhaustively. 
The first three rules can be applied exhaustively in polynomial time by Lemma~\ref{lem:polytime}. Polynomial-time applicability of 
Reduction Rules~\ref{rrule:bypassing2}, and \ref{rrule:bypassing1} is straightforward. We remark that $S$ remains a 
good \dhm\ after each application of Reduction Rules~\ref{rrule:removeleaf1},~\ref{rrule:removeleaf2} and~\ref{rrule:fliptwins} by Lemma~\ref{lem:polytime}. Also Reduction Rules~\ref{rrule:bypassing3} and~\ref{rrule:bypassing1} preserves $S$ as a good \dhm\ since these rules delete vertices not contained in $S$. The application of Reduction Rule~\ref{rrule:bypassing2} preserves the goodness of $S$ by Lemma~\ref{lem:bypassing2}. Then 
the canonical split decomposition $D$ of each non-trivial connected component of $G-S$ has at most  $3\abs{S}(20k+54)$ bags by Lemma~\ref{lem:boundcomp}. 

Then apply Reduction Rule~\ref{rrule:exttwinreduction} 
exhaustively in polynomial time. This bounds the size of a twin set in $G-S$ by $O(k^2\abs{S}^3)$ by Lemma~\ref{lem:twinsize}. 
We note that the unmarked vertices of a bag form at most two twin sets. Therefore, the number of unmarked vertices in a bag is 
bounded by $O(k^2\abs{S}^3)$ by Lemma~\ref{lem:twinsize}. Especially, the same bounds apply to the number of trivial components 
in $G-S$ since they form an independent set in $G-S$. 

Let $(G',k')$ be the resulting instance. Combining the previous bounds altogether, we conclude that $V(G')=O(k^5\cdot \abs{S}^5)$. 

\smallskip

We may assume that the input instance $(G,k)$ satisfies $n\leq 2^{ck}$ for some constant $c$. Recall that there is an algorithm  for \dhd\ running in time $2^{ck}\cdot n^{\mathcal{O}(1)}$ by Eiben, Ganian, and Kwon~\cite{EibenGK2016}. If $n> 2^{ck}$, then the algorithm of~\cite{EibenGK2016} solves 
the instance $(G,k)$ correctly in  polynomial time, in which case we can output a trivial equivalent instance. 
By Theorem~\ref{thm:goodmodulator}, we can obtain a good \dhm\ $S$ of size $O(k^5\log n)=O(k^6)$ in polynomial time or correctly report $(G,k)$ as a \NO-instance. 

The previous argument yields that in polynomial time, an equivalent instance $(G',k')$ of size $O(k^{35})$ can be constructed. Now, applying Theorem~\ref{thm:goodmodulator} again\footnote{That applying kernelization twice can yield an improved bound was adequately observed in~\cite{AgrawalLMSZ17}.} to $(G',k')$, we can either correctly conclude that $(G',k')$, and thus $(G,k)$,  is a \NO-instance or output a good \dhm\ $S'$ of size $O(k^5\log k)$. Now we obtain a kernel of size $O(k^{30}\cdot \log^5 k)$. 
\end{proof}

\section{Concluding remarks}\label{sec:conclusion}

Apparently, there is much room to improve the kernel size $O(k^{30}\log^5 k)$ presented in this work. 
It is not difficult to convert our approximation algorithm in Theorem~\ref{thm:approx} to an $O({\sf opt}^2\log n)$-approximation algorithm. 
It is an intriguing question to obtain an approximation algorithm with better performance ratio. This will immediately improve our kernelization bound of this paper. 

Given a \dhm\ $S$, Proposition~\ref{prop:sunflower} states that extra factor of $O(k^2)$ will be incurred per vertex in $S$ in the course of obtaining a good \dhm. In fact, a good \dhm\ of size $O(k\abs{S})$ can be constructed in polynomial time using Mader's $\mathcal{S}$-path theorem. Given  a collection $\mathcal{S}$ of disjoint vertex sets in $G$, an \emph{$\mathcal{S}$-path} is a path whose end vertices belong to distinct sets in $\mathcal{S}$. Mader's $\mathcal{S}$-path theorem provides a primal-dual characterization of the maximum number of pairwise vertex-disjoint $\mathcal{S}$-paths. From $\mathcal{S}$-path theorem, one can show the following alternative to Proposition~\ref{prop:sunflower}.
 
\begin{PROP}
Let $G$ be a graph without any small DH obstruction, $v$ be a vertex of $G$ such that $G-v$ is distance-hereditary and $k$ be a positive integer. In polynomial time, one can decide whether there is $k+1$-sunflower at $v$ or find a set $X\subseteq V(G)\setminus \{v\}$ of size at most $2k$ such that $G-X$ contains no induced cycle of length at least 5 traversing $v$. 
\end{PROP}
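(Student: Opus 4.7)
The plan is to recast the problem as a vertex-disjoint $\mathcal{S}$-path packing/covering problem on $G-v$, and to exploit the Gallai-type factor-$2$ duality $\tau \le 2\nu$ between minimum hitting set and maximum packing of $\mathcal{S}$-paths provided by Mader's $\mathcal{S}$-path theorem. Set $N:=N_G(v)$ and $G':=G-v$; then $G'$ is distance-hereditary. The argument already used in the proof of Proposition~\ref{prop:sunflower}, combined with Lemma~\ref{lem:createdhobs}, shows that induced cycles in $G$ of length at least $5$ through $v$ are in bijection with paths in $G'$ from some $s\in N$ to some $t\in N$ with $s\neq t$, of length at least $3$, whose internal vertices avoid $N$. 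Consequently, a $(k+1)$-sunflower at $v$ corresponds to $k+1$ pairwise vertex-disjoint such paths in $G'$, and the hitting set sought in the statement corresponds to a vertex set in $V(G)\setminus\{v\}$ meeting every such path.

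Set $\mathcal{S}:=\{\{s\}:s\in N\}$ so that the paths of interest are precisely the $\mathcal{S}$-paths of length at least $3$ in $G'$. Mader's framework packs all $\mathcal{S}$-paths irrespective of length, so to handle the length restriction I would preprocess $G'$ into an auxiliary graph $\hat G$ in which every $\mathcal{S}$-path of length $1$ or $2$ is forbidden and every $\mathcal{S}$-path of length at least $3$ is preserved (up to a bijection on the packings and an $O(1)$-factor distortion on the hitting sets). Deleting all edges inside $N$ kills off length-$1$ $\mathcal{S}$-paths without affecting anything else, since these edges never occur in an $\mathcal{S}$-path with a non-empty interior. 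To kill off length-$2$ $\mathcal{S}$-paths, I would, for every vertex $u\in V(G')\setminus N$ with $\abs{N_{G'}(u)\cap N}\ge 2$, replace $u$ by a small gadget that splits $u$'s connections to $N$ so that $u$ cannot be used as a length-$2$ relay between two different $N$-neighbors, but still admits every length-$\ge 3$ route through $u$ via the non-$N$ side, and that is ``collapsible'' on the hitting side in the sense that removing the gadget's few central vertices in $\hat G$ pulls back to removing the single vertex $u$ in $G'$.

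Given $\hat G$, I would then invoke a polynomial-time algorithm for Mader's $\mathcal{S}$-path theorem to simultaneously produce a maximum vertex-disjoint packing of $\mathcal{S}$-paths of size $\nu^*$ and a hitting set of size at most $2\nu^*$. If $\nu^*\ge k+1$, pulling the packing back to $G'$ furnishes the desired $(k+1)$-sunflower at $v$; otherwise, pulling the hitting set back yields a vertex set $X\subseteq V(G)\setminus\{v\}$ of size at most $2k$ meeting every induced cycle of length at least $5$ through $v$ in $G$. The main obstacle is engineering the gadget so that both directions of the correspondence are faithful: several copies of a single $u$ in $\hat G$ must never participate in two distinct packing paths simultaneously (otherwise the pulled-back packing is not vertex-disjoint in $G'$), and a hitting set in $\hat G$ must pull back to $G'$ with at most a constant overhead per modified vertex. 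The hypothesis that $G$ has no small DH obstruction is what legitimates the discarding of length-$\le 2$ $\mathcal{S}$-paths, since these correspond to triangles and $C_4$'s through $v$ and never to DH obstructions.
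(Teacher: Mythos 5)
Your overall plan (recast the task as Mader $\mathcal{S}$-path packing/covering on $G-v$, use the factor-$2$ duality $\tau\le 2\nu$, and compute a maximum packing together with a hitting set in polynomial time via linear matroid parity) is the intended one, but your reduction has a genuine gap: taking $\mathcal{S}:=\{\{s\}:s\in N\}$ and then forbidding only the connections of length $1$ and $2$ by a local gadget does not capture the right family of paths. What must be hit are exactly the $(s,t)$-paths for pairs $s,t\in N_G(v)$ with $\dist_{G-v}(s,t)\ge 3$ (this is the multicut instance in the proof of Proposition~\ref{prop:sunflower}). A path of length at least $3$ between two neighbours $s,t$ with $\dist_{G-v}(s,t)\le 2$ corresponds to no obstruction at all: since $G-v$ is distance-hereditary, the distance between $s$ and $t$ inside the connected induced subgraph spanned by that path is still at most $2$, so the induced $s$--$t$ path one can extract from it has length at most $2$ and Lemma~\ref{lem:createdhobs} yields nothing. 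In your auxiliary graph such paths survive as $\mathcal{S}$-paths, because your preprocessing filters by the length of the connection and by local relays, whereas the ``closeness'' of $s$ and $t$ may be witnessed by an edge or a common neighbour that the long path never uses; deleting the edges inside $N$ even manufactures new pairs of this kind. So your claimed bijection fails in the converse direction, and both sides of the duality are corrupted: a maximum packing may consist of $k+1$ disjoint long paths joining close pairs, which pulled back to $G$ are not DH obstructions (so reporting a $(k+1)$-sunflower would be wrong), and the covering side may be forced to hit many paths that no solution needs to hit, so neither $\abs{X}\le 2k$ nor the intended meaning of $X$ is guaranteed. Also, the hypothesis that $G$ has no small DH obstruction is not used merely to discard triangles and $C_4$'s, as you suggest; it is needed for a structural fact you do not have.

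That missing fact is the paper's key point: ``$\dist_{G-v}(s,t)\le 2$'' is an equivalence relation on $N_G(v)$. Transitivity holds because if $\dist_{G-v}(s,t)\le 2$, $\dist_{G-v}(t,r)\le 2$ but $\dist_{G-v}(s,r)\ge 3$, then inside the at most five vertices of the two short paths the distance from $s$ to $r$ is still at least $3$ (distance-heredity), so there is an induced $s$--$r$ path of length at least $3$ on at most five vertices, and adding $v$ gives, by Lemma~\ref{lem:createdhobs}, a DH obstruction on at most six vertices, contradicting the absence of small obstructions. Taking $\mathcal{S}$ to be the set of equivalence classes, the Mader $\mathcal{S}$-paths are precisely paths joining neighbours at distance at least $3$: hitting all of them hits every path between such a pair (decompose the path at its $N$-vertices; if no segment changed class, the endpoints would be equivalent by transitivity), and any $k+1$ vertex-disjoint $\mathcal{S}$-paths yield, again via Lemma~\ref{lem:createdhobs}, $k+1$ induced cycles of length at least $5$ pairwise meeting only in $v$. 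With this choice of $\mathcal{S}$ no gadget is needed, and the remainder of your argument (Mader's duality plus polynomial-time primal and dual solutions through linear matroid parity) goes through as you describe.
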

This proposition holds because the property of distance at most two in $G-v$ between neighbors of $v$ gives an equivalent relation, and thus hitting induced cycles of length at least $5$ traversing $v$ can be translated to 
hitting all paths linking two distinct equivalent classes, and we can use Mader's $\mathcal{S}$-path Theorem.

The caveat here is that the polynomial time algorithm in this proposition calls as a subroutine an algorithm which can efficiently compute both a primal and a dual optimal solutions of Mader's characterization. Lovasz~\cite{Lovasz80} showed that $\mathcal{S}$-path packing problem and the min-max duality is a special case of linear matroid parity problem and the corresponding duality. Matroid parity problem is NP-hard even with a compact representation~\cite{Lovasz80}, but for linear matroid it can be efficiently solved. We can use such an algorithm $\mathcal{A}$, for example~\cite{Lovasz80} or~\cite{CheungLL14} for a more recent treatement. Moreover, the optimal dual solution satisfying Mader's $\mathcal{S}$-path theorem can be efficiently computed via computing an optimal dual solution for the corresponding linear matroid parity problem, see~\cite{Orlin08}.

In this paper, we give a simpler-to-describe algorithm using an approximation algorithm for \textsc{Vertex Multicut} instead of relying on a reduction to linear matroid parity problem. An interesting question is, can we efficiently find the primal and dual optimal solution satisfying Mader's $\mathcal{S}$-path theorem without using a reduction to linear matroid parity problem? We are not aware of any literature claiming such a result.
One might also ask for a kernelization lower bound for \dhd\ parameterized by the size of a \dhm\ or by $k$.

\section*{Acknowledgement}
We thank Saket Saurabh for pointing to the idea of~\cite{AgrawalLMSZ17}  to apply the kernelization twice, which leads to an improved bound in Theorem~\ref{thm:main1}.

\end{document}